\newcommand{\e}{\mathrm{e}}
\newcommand{\bs}[1]{{\boldsymbol{#1}}}
\newcommand{\im}{{\rm im}}
\newcommand{\Hom}{{\rm Hom}}
\newcommand{\highlight}[2]{\colorbox{#1!50}{$\displaystyle#2$}}
\newcommand{\id}{\mathbb{1}}
\newcommand{\Diag}[1]{{\rm Diag}\left(#1\right)}
\theoremstyle{thmstyleone}%
\newtheorem{thm}{Theorem}
\newtheorem{prop}{Proposition}
\newtheorem{lem}{Lemma}
\theoremstyle{thmstyletwo}%
\newtheorem*{remark}{Remarks}%
\newtheorem{examp}{Example}
\theoremstyle{thmstylethree}%
\newtheorem{defi}{Definition}
\begin{document}

\title[Homological Quantum Rotor Codes]{Homological Quantum Rotor Codes: Logical Qubits from Torsion}


\author*[1]{\fnm{Christophe} \sur{Vuillot}}\email{christophe.vuillot@inria.fr}

\author[2]{\fnm{Alessandro} \sur{Ciani}}

\author[3,4]{\fnm{Barbara M.} \sur{Terhal}}

\affil*[1]{\orgname{Université de Lorraine, CNRS, Inria, LORIA}, \orgaddress{\postcode{F-54000}, \city{Nancy}, \country{France}}}

\affil[2]{\orgdiv{Institute for Quantum Computing Analytics (PGI-12)}, \orgname{Forschungszentrum J\"ulich}, \orgaddress{\postcode{52425}, \city{J\"ulich}, \country{Germany}}}

\affil[3]{\orgdiv{QuTech}, \orgname{Delft University of Technology}, \orgaddress{\street{Lorentzweg 1}, \postcode{2628 CJ}, \city{Delft}, \country{The Netherlands}}}
\affil[4]{\orgdiv{DIAM, EEMCS Faculty}, \orgname{Delft University of Technology}, \orgaddress{\street{Van Mourik Broekmanweg 6}, \postcode{2628 XE}, \city{Delft}, \country{The Netherlands}}}


\abstract{We formally define homological quantum rotor codes which use multiple quantum rotors to encode logical information.
These codes generalize homological or CSS quantum codes for qubits or qudits, as well as linear oscillator codes which encode logical oscillators.
Unlike for qubits or oscillators, homological quantum rotor codes allow one to encode both logical rotors and logical qudits in the same block of code, depending on the homology of the underlying chain complex. In particular, a code based on the chain complex obtained from tessellating the real projective plane or a M\"{o}bius strip encodes a qubit.
We discuss the distance scaling for such codes which can be more subtle than in the qubit case due to the concept of logical operator spreading by continuous stabilizer phase-shifts.
We give constructions of homological quantum rotor codes based on 2D and 3D manifolds as well as products of chain complexes.
Superconducting devices being composed of islands with integer Cooper pair charges could form a natural hardware platform for realizing these codes: we show that the $0$-$\pi$-qubit as well as Kitaev's current-mirror qubit --also known as the M\"{o}bius strip qubit-- are indeed small examples of such codes and discuss possible extensions. }

\keywords{quantum error correction, chain complexes, homology, torsion, quantum rotor, protected superconducting qubit}



\maketitle

\section{Introduction}

Quantum computation is most conveniently defined as quantum circuits acting on a system composed of elementary finite dimensional subsystems such as qubits or qudits.
However, for various quantum computing platforms, the underlying hardware can consist of both continuous and countably-infinite (discrete) degrees of freedom.
For such platforms, one thus considers how to encode a qubit in a well-chosen subspace, and how the possibly continuous nature of the errors affects the robustness of the encoded qubit.
This paradigm has been explored for bosonic encodings \cite{TCV:bosonic,albert:bosonic}, that is, one makes a choice for a qubit(s) subspace inside one or multiple oscillator spaces, such as the GKP code \cite{GKP}. Other work has formalized encodings of a qubit into a rotating body (such as that of a molecule) \cite{ACP:rotor} or many qubits into a single (planar) rotor \cite{kalev+}.
In these works the discreteness of the encoding is arrived at, similar to the GKP code, by selecting stabilizer operators which constrain the angular momentum variable in a modular fashion.

In this paper, we take a rather different approach and introduce the formalism of homological quantum rotor codes. These are codes defined for a set of $n$ quantum rotors and form an extension of qudit homological codes and {\em linear} oscillator codes (defined e.g. in \cite{VAHWPT}) which do not make use of modular constraints to get non-trivial encodings. As for prior work, we note that the authors in \cite{bermejo-vegaNormalizerCircuitsGottesmanKnill2016} consider mainly stabilizer rotor states whereas here we consider stabilizer subspaces to encode information. 
Quantum rotor versions of the toric code and Haah's code have been previously proposed \cite{haahGeneralization2017, albertGeneralPhaseSpaces2017} and small rotor codes were studied as \(U(1)\)-covariant codes \cite{haydenErrorCorrectionQuantum2021, faistContinuousSymmetriesApproximate2020}: all of these are included in our formalism. The novelty of our formulation is that our codes can also encode finite dimensional logical systems such as qubits via the torsion of the underlying chain complex.

In the first half of the paper, after some preliminaries (Section \ref{sec:prelim}), we define homological quantum rotor codes (Section \ref{sec:def-codes}) and present mathematical constructions of such codes (Section \ref{sec:constructions}). 
In the second half of the paper (Section \ref{sec:circuit-QED}) we show how two well-known protected superconducting qubits, namely the $0$-$\pi$ qubit \cite{brooksProtectedGatesSuperconducting2013,dempsterUnderstandingDegenerateGround2014, gyenisExperimentalRealizationProtected2021} and Kitaev's current mirror qubit \cite{kitaevProtectedQubitBased2006, weissSpectrumCoherenceProperties2019, kitaev2007noncommuting}, can be viewed as examples of small rotor codes. The code perspective is useful as it allows one to understand the mechanism behind its native noise protection in terms of the code distance. By the continuous nature of the phase-shift errors leading to {\em logical operator spreading}, we discuss how the $Z$ distance of a rotor code is more subtle than for qubit codes, see Section \ref{sec:spread}. We hope that our work stimulates further research into the effect of homology on encodings and the robustness of the encoded qubit(s), be it via active or passive quantum error correction.

\section{Notations and Preliminaries}
 \label{sec:prelim}
	We denote the group of integers as $\mathbb{Z}$ and the phase group or circle group as $\mathbb{T} = [0,2\pi) = \mathbb{R}/2\pi\mathbb{Z}$.
	We denote the cyclic group $\mathbb{Z}/d\mathbb{Z}$ as $\mathbb{Z}_d$ and the subgroup of $\mathbb{T}$ isomorphic to $\mathbb{Z}_d$ as $\mathbb{Z}^*_d$, that is,
	\begin{equation}
	    \mathbb{Z}^*_d = \frac{2\pi}{d}\mathbb{Z}_d = \left\{\frac{2\pi}{d}k\in \mathbb{T}\;\middle\vert\; k\in\mathbb{Z}_d\right\}.\label{eq:Zstar}
	\end{equation}

     These groups are Abelian and we denote their operations with \(+\).
	The groups $\mathbb{Z}^n$ or $\mathbb{T}^n$ do not have the structure of vector spaces.
	They do have the structure of \(\mathbb{Z}\)-modules since multiplication of an element of \(\mathbb{Z}^n\) or \(\mathbb{T}^n\) by an integer is naturally defined by repeated addition.
	The homomorphisms of \(\mathbb{Z}\)-modules or \(\mathbb{Z}\)-linear maps are well defined using integer matrices and matrix multiplication.
	As such we will abuse the terminology and refer to elements of \(\mathbb{Z}^n\), \(\mathbb{T}^n\) as integer vectors and phase vectors respectively.	
	We always use bold font to denote these vectors $\bs{m}\in \mathbb{Z}^n$ and $\bs{\phi} \in \mathbb{T}^n$.
    Note that there is no unambiguous definition for a multiplication operation over $\mathbb{T}$ so we can never write $\phi\phi^\prime$ for some \((\phi,\phi^\prime)\in\mathbb{T}^2\).
	
 \subsection{Quantum Rotors}
	
	We consider a quantum system called a quantum rotor
	\cite{bermejo-vegaNormalizerCircuitsGottesmanKnill2016, albertGeneralPhaseSpaces2017}.
	The Hilbert space, $\mathcal{H}_{\mathbb{Z}}$, where this system lives has a countably-infinite basis indexed by $\mathbb{Z}$, 
	\[\forall\ell\in\mathbb{Z},\quad \ket{\ell}\in\mathcal{H}_{\mathbb{Z}}.\]
	The states in $\mathcal{H}_{\mathbb{Z}}$ are therefore sequences of complex numbers which are square-integrable, so in $\ell^2(\mathbb{Z})$, and normalized,
	\[\ket{\psi} = \sum_{\ell\in\mathbb{Z}}\alpha_\ell\ket{\ell}, \qquad\sum_{\ell\in\mathbb{Z}}\vert\alpha_\ell\vert^2 = 1.\]
These states could represent the states of definite angular momentum in a mechanical setting, hence the name rotors (or planar rotors). Using the discrete-time Fourier transform\footnote{The name ``discrete-time'' here comes from Fourier analysis of functions sampled periodically at discrete times (indexed by $\mathbb{Z}$).} we can also represent the states as square-integrable functions on the circle $\mathbb{T}$:
	\[\ket{\psi} = \sum_{\ell\in\mathbb{Z}}\alpha_\ell\ket{\ell} = \int_{\theta\in\mathbb{T}}\!{\rm d}\theta\,\psi(\theta)\ket{\theta},\qquad\int_{\theta\in\mathbb{T}}\!{\rm d}\theta\,\vert\psi(\theta)\vert^2 = 1,\]
	with the function $\psi(\theta)$ given by
 	\begin{equation}
	    \psi(\theta) = \braket{\theta\vert\psi} = \frac{1}{\sqrt{2\pi}} \sum_{\ell\in\mathbb{Z}}\e^{i\ell \theta}\alpha_\ell,
	\end{equation}
 or formally
\begin{align}
\ket{\ell}=\frac{1}{\sqrt{2\pi}}\int_{\mathbb{T}} d\theta e^{i\ell \theta} \ket{\theta}.
\label{eq:four}
\end{align}
	Even though they are not physical states in the Hilbert space (as they do not correspond to square-integrable functions), we define the un-normalized phase states, $\ket{\theta}$, as
	\begin{equation}\forall\theta\in\mathbb{T},\quad \ket{\theta} = \frac{1}{\sqrt{2\pi}}\sum_{\ell\in\mathbb{Z}}\e^{-i\ell\theta}\ket{\ell},\; \bra{\theta'} \theta \rangle=\delta(\theta'-\theta).
		\label{eq:FT}
	\end{equation}

	\subsection{Generalized Pauli Operators}
	
	 The generalized Pauli operators, $X(m)_{m\in\mathbb{Z}}$ and $Z(\phi)_{\phi\in\mathbb{T}}$, for a single rotor are defined as 
	\begin{align}
	    \forall m \in \mathbb{Z}, \quad & X(m)= e^{i m \hat{\theta}}, \\
	    \forall \phi \in \mathbb{T}, \quad & Z(\phi) = e^{i \phi \hat{\ell}}, \label{eq:phasedef}
	\end{align}
	where the angular momentum operator $\hat{\ell}$ can be defined via its action on the phase and angular momentum basis: 
	\begin{align}
	    \forall \ell \in \mathbb{Z}, \quad &\hat{\ell} \ket{\ell}  =  \ell \ket{\ell}, \;\;
	    \langle \theta| \hat{\ell} \ket{\psi} = -i \frac{\partial}{\partial \theta} \psi(\theta).
	\end{align}
 We note that ``the phase operator $\hat{\theta}$'' in Eq.~\eqref{eq:phasedef} is only a convenient notation as it should only occur in \(2\pi\)-periodic functionals, see for instance \cite{hallQuantumTheoryMathematicians2013}.
	The action of the generalized Pauli operators on the states $\ket{\theta}$ and the angular momentum eigenstates $\ket{\ell}$ reads
\begin{align}
	    X(m)\ket{\theta} &=  \e^{i m\theta}\ket{\theta},  & Z(\phi)\ket{\theta} & = \ket{\theta - \phi \pmod {2 \pi}},  \\
	   X(m) \ket{\ell}&=\ket{\ell+m}, &  Z(\phi)\ket{\ell} & = e^{i\phi\ell} \ket{\ell},
	    	    \label{eq:defZtheta}
	\end{align}
	    	which can be verified through the Fourier transform in Eq.~\eqref{eq:FT}.
	By direct computation we have the following properties
	\begin{align}
	\id = X(0) &= Z(0),\label{eq:XZid}\\
	X(m_1)X(m_2) &= X(m_1 + m_2),\label{eq:Xadd}\\
	Z(\phi_1)Z(\phi_2) &= Z(\phi_1+\phi_2),\label{eq:Zadd}
	\end{align}
and the commutation relation
\begin{equation}
    	X(m)Z(\phi) = \e^{-i\phi m} Z(\phi)X(m).
    	\label{eq:paulicommutation}
\end{equation}
    When we consider systems of $n$ quantum rotors, we write $X_j(m)$ and $Z_j(\phi)$ for the generalized Pauli operators acting on the $j$th rotor.
	For a multi-rotor Pauli operator, using the bold vector notation for row vectors $\bs{m}$ and $\bs{\phi}$, we have
	\begin{align}
	\bs{m}\in\mathbb{Z}^n,\;X(\bs{m}) &= \prod_{j=1}^nX_j(m_j),\label{eq:multiXop}\\ \bs{\phi}\in\mathbb{T}^n,\;Z(\bs{\phi}) &= \prod_{j=1}^nZ_j(\phi_j).\label{eq:multiZop}
	\end{align}
	The commutation relations for multi-rotor operators are straightforwardly computed from Eq.~\eqref{eq:paulicommutation}
	\begin{equation}
	\forall(\bs{m},\bs{\phi})\in\mathbb{Z}^{n}\times\mathbb{T}^n,\;X(\bs{m})Z(\bs{\phi}) = \e^{-i\bs{m} \cdot\bs{\phi}}Z(\bs{\phi})X(\bs{m}).\label{eq:commutationmulti}
	\end{equation}
     Finally, it is not hard to show that the generalized Pauli operators form an operator basis for the rotor space. 
	\section{Definition of Homological Quantum Rotor Codes}
	\label{sec:def-codes}
	In this section we define homological quantum rotor codes.
    We describe their codestates, logical codespace and logical operators.
    For this we explain their connection to the homology and co-homology of chain-complexes.
    We also propose a generic noise model and define the related notion of code distance for it as well as ways to bound this distance.
    Finally we explain how to define a Hamiltonian whose groundspace coincides with the codespace of a homological quantum rotor code.
    
\subsection{Definition and Relation to Homology}
\label{sec:logicalsandhomology}

	We consider defining a subspace, the codespace or logical subspace, of the  Hilbert space of \(n\) quantum rotors, in a \eczoo[CSS]{css} fashion \cite{calderbankGoodQuantumErrorcorrecting1996, steaneMultipleparticleInterferenceQuantum1997}.
	\begin{defi}[Homological Quantum Rotor Code, \(\mathcal{C}^{\rm rot}(H_X, H_Z)\)]
    Let $H_X$ and $H_Z$ be two integer matrices of size $r_x\times n$ and $r_z\times n$ respectively, such that
    \begin{equation}
    H_X H_Z^T = 0.\label{eq:CSScond}
\end{equation}
We define the following group of operators, $\mathcal{S}$, and call it the stabilizer group:
\begin{equation}
    \mathcal{S} = \left\langle Z(\bs{\varphi}H_Z) X(\bs{s}H_X) \;\middle\vert\;\forall \bs{\varphi}\in\mathbb{T}^{r_z},\,\forall \bs{s}\in\mathbb{Z}^{r_x}\right\rangle.
    \label{eq:stabilizergroup}
\end{equation}
We then define the corresponding homological quantum rotor code, $\mathcal{C}^{\rm rot}(H_X,H_Z)$, on $n$ quantum rotors as the subspace stabilized by \(\mathcal{S}\):
\begin{equation}
    \mathcal{C}^{\rm rot}(H_X,H_Z) = \big\{\ket{\psi}\;\big\vert\; \forall P\in\mathcal{S},\,P\ket{\psi} = \ket{\psi}\big\}.
\end{equation}
\label{def:rotorcode}
\end{defi}

The matrices $H_X$ and $H_Z$ are used to generate the $X$ and $Z$ parts of the stabilizer group $\mathcal{S}$.
Stabilizers of $X$- or $Z$-type are labeled by integer vectors, $\bs{s}\in\mathbb{Z}^{r_x}$, or phase vectors, $\bs{\varphi}\in\mathbb{T}^{r_z}$, and are generated and denoted as follows:
	\begin{align}
	    S_{X}(\bs{s}) &= X(\bs{s}H_X),\label{eq:Xstab}\\
	    S_{Z}(\bs{\varphi}) &= Z(\bs{\varphi}H_Z).\label{eq:Zstab}
	\end{align}
	
	Condition~\eqref{eq:CSScond} implies that the stabilizers all commute and the code is therefore well defined.
	Indeed, one can check that
	\begin{align}
		S_Z(\bs{\varphi})S_X(\bs{s}) &= Z\left (\bs{\varphi}H_Z\right )X\left (\bs{s}H_X\right ) \notag \\
		=\exp\left (i\bs{\varphi} H_Z H_X^T\bs{s}^T\right )S_X(\bs{s})S_Z(\bs{\varphi})
		&= S_X(\bs{s})S_Z(\bs{\varphi}), 
	\end{align}
	where we have used Eq.~\eqref{eq:commutationmulti} and Eq.~\eqref{eq:CSScond}. We will also refer to the generators of the stabilizer group given by the rows of \(H_X\) and \(H_Z\).
For this we take a basis of integer vectors $(\bs{s}_j)_i=\delta_{ij}$ and define
	\begin{align}
	    S_j^X \equiv S_X(\bs{s}_j)=X(\bs{h}_j^X)=e^{i \bs{h}_j^X \cdot \bs{\hat{\theta}}},
     \end{align}
     so we denote the $j$th row of $H_X$ as the integer vector $\bs{h}_j^X$.  The operators $S_j^X$ for $j=1,\ldots, r_x$ generate the $X$ part of the stabilizers. For the $Z$ part of the stabilizer, we can take $\varphi\bs{s}_j$, $\varphi \in \mathbb{T}$ and we define the continuous set of generators
	\begin{align}
	    S_j^Z(\varphi)\equiv S_Z(\varphi\bs{s}_j)=Z(\varphi \bs{h}_j^Z)=e^{i \varphi \bs{h}_j^Z \cdot \bs{\hat{\ell}}},
	    \label{eq:genZ}
	\end{align}
 so we denote the $j$th row of $H_Z$ as the integer vector $\bs{h}_j^Z$.\\
 
\begin{remark}
We do not consider stabilizer groups generated by finite subgroups of $\mathbb{T}$, such as for the GKP code and generalizations \cite{GKP, Conrad2022, nohEncodingOscillatorMany2020, ACP:rotor} defined on an oscillator space. Such stabilizers would correspond to modular measurements of angular momentum, while here we only need angular momentum measurements, see Section \ref{sec:meas-stab}. 
A second remark is that the same integer matrices used to define a quantum rotor code can be used to define a related qudit code which we then denote as \(\mathcal{C}^d(H_X,H_Z)\), see Appendix~\ref{app:quditcode}, Definition~\ref{def:quditcode} for the exact definition.
A third remark is that one can view the quantum rotor space as a subspace of an oscillator space with quadrature operators $\hat{p}$ and $\hat{q}$, with $[\hat{q},\hat{p}]=i\id$, obeying a `stabilizer' constraint $e^{i 2\pi \hat{p}}=\mathbb{1}$, enforcing that $\hat{p} \rightarrow \hat{\ell}$ has integer eigenvalues. From this perspective, the homological quantum rotor codes that we introduce can be viewed as a subclass of `general multi-mode GKP codes' which can encode both discrete as well as continuous information, see e.g. Appendix A in \cite{VAHWPT}.
Realizing homological quantum rotor codes as particular instances of multi-mode GKP codes would be as challenging as realizing any general multi-mode GKP code.
By contrast, when one has access to physical quantum systems natively behaving as rotors, the constraints to impose to realize a homological quantum rotor code are simpler since they are non-modular.
Still, this connection gives a way to design multi-mode GKP codes encoding rotors or qudits.\\
\end{remark}

Stabilizers also allow to define the syndrome of error states, i.e. the information that can be learned about errors, in the usual way.
To see this, pick a code state, \(\ket{\psi}\in\mathcal{C}^{\rm rot}(H_X,H_Z)\).
Consider error states, \(X(\bs{e})\ket{\psi}\) and \(Z(\bs\nu)\ket{\psi}\), and consider the stabilizers \(S_Z(\bs\varphi)\) and \(S_X(\bs{s})\):
\begin{align}
    X(\bs{e})\ket{\psi} &= X(\bs{e})S_Z(\bs{\varphi})\ket{\psi} = \e^{-i\bs{\varphi}H_Z\bs{e}^T}S_Z(\bs{\varphi})\left[X(\bs{e})\ket{\psi}\right],\\
     Z(\bs{\nu})\ket{\psi} &= Z(\bs{\nu})S_X(\bs{s})\ket{\psi} = \e^{i\bs{s}H_X\bs{\nu}^T}S_X(\bs{s})\left[Z(\bs{\nu})\ket{\psi}\right].
\end{align}
The error states \(X(\bs{e})\ket{\psi}\) and \(Z(\bs{\nu})\ket{\psi}\) are therefore eigenstates of \(S_Z(\bs{\varphi})\)  and \(S_X(\bs{s})\) respectively, with eigenvalue \(\e^{i\bs{\varphi}H_Z\bs{e}^T}\) and \(\e^{-i\bs{s}H_X\bs{\nu}^T}\).
The values \(H_Z\bs{e}^T\in\mathbb{Z}^{r_z}\) and \(H_X\bs{\nu}^T\in\mathbb{T}^{r_x}\) is what we call the syndrome.\\

To gain insight into the structure of these rotor codes, errors and logical operators, we make the standard connection between a quantum CSS code and the homology and cohomology of a chain complex.
The fact that \(X\)-type and \(Z\)-type operators have different underlying groups in the rotor case (unlike the qubit, qudit or oscillator case) makes this connection a bit more subtle but also clarifies it.

To construct a chain complex we are going to take the map from the \(X\) part of the stabilizer group, $\mathbb{Z}^{r_x}$, to the full group of \(X\)-Pauli operators, $\mathbb{Z}^n$, as well as the map from the \(X\)-Pauli operators to their syndromes \(\mathbb{Z}^{r_z}\).
We denote the map from the stabilizer group to the operator group by $\partial_2:\mathbb{Z}^{r_x}\rightarrow\mathbb{Z}^{n}$, given in matrix form by $H_X\in\mathbb{Z}^{r_x\times n}$ (and which is applied to a row vector in $\mathbb{Z}^{r_x}$ from the right).
	We denote the syndrome map by $\partial_1:\mathbb{Z}^n\rightarrow\mathbb{Z}^{r_z}$ which is given as a matrix by $H_Z^T\in\mathbb{Z}^{n\times r_z}$.
	The stabilizers are designed so that they all commute, that is to say, they have trivial syndrome, so that we have $\partial_1 \circ \partial_2 = 0$. 
 This is the defining property of a chain complex, hence we have a chain complex, $\mathcal{C}$,  with integer coefficients:
	\begin{equation}
	\begin{matrix}
	\mathcal{C}: & C_2 & \xrightarrow[H_X]{~~\partial_2~~} & C_1 & \xrightarrow[H_Z^T]{~~\partial_1~~} & C_0\\
	& \shortparallel && \shortparallel & &\shortparallel\\
	&\mathbb{Z}^{r_x} & & \mathbb{Z}^n & & \mathbb{Z}^{r_z}\\
	& \shortparallel && \shortparallel & &\shortparallel\\
	&\text{stabilizers} && \text{operators} && \text{syndrome}
	\end{matrix}.\label{eq:rotorchaincomplex}
	\end{equation}
Given a chain complex $\mathcal{C}$, the maps \(\partial_1\) and $\partial_2$ are called {\em boundary} maps and automatically obey $\partial_1 \circ \partial_2=0$ (the boundary of the boundary is zero).
Thus, given a chain complex over $\mathbb{Z}$ we can construct a rotor code or given a rotor code one can associate with it a chain complex. In order to find codes it is useful to consider chain complexes obtained from tessellations of manifolds as we do in Section \ref{sec:tes}.

Viewed as a chain complex $\mathcal{C}$, the group of logical $X$ operators corresponds to the first homology group over $\mathbb{Z}$, i.e. 
	\begin{equation}
	H_1(\mathcal{C},\mathbb{Z}) = \ker\partial_1/\im\partial_2 = \ker\left(H_Z^T\right)/\im\left(H_X\right) = \mathcal{L}_X.
	\end{equation}
	One can also understand the logical operators by constructing the code states, namely the $+1$-eigenstates of the stabilizers. The $Z$-type stabilizers constrain a code state $\ket{\overline{\psi}} = \sum_{\bs{\ell} \in \mathbb{Z}^n} \alpha_{\bs{\ell}}\ket{\bs{\ell}}$ in the following manner
	\begin{align}
	\forall\bs{\varphi} \in \mathbb{T}^{r_z},\;	\ket{\overline{\psi}} &= Z(\bs{\varphi}H_Z)\ket{\overline{\psi}}\nonumber\\
	\Rightarrow
		\sum_{\bs{\ell}\in\mathbb{Z}^n}\alpha_\bs{\ell}\ket{\bs{\ell}} &= \sum_{\bs{\ell}\in\mathbb{Z}^n}\e^{i\bs{\varphi} H_Z \bs{\ell}^T}\alpha_\bs{\ell}\ket{\bs{\ell}}\nonumber\\
		\Rightarrow \forall \bs{\ell},\; \alpha_\bs{\ell}\neq 0 &\Rightarrow H_Z \bs{\ell}^T =(\bs{\ell}H_Z^T)^T= 0.
	\end{align}
	This means that code states can only have support on angular momentum states which are in the left kernel of $H_Z^T$, which we denote as $\ker\left(H_Z^T\right)$. The $X$-type stabilizers constrain $\ket{\overline{\psi}}$ as follows:
	\begin{align}
	\forall \bs{s} \in \mathbb{Z}^{r_x},\;	\ket{\overline{\psi}} &= X(\bs{s}H_X)\ket{\overline{\psi}}\nonumber\\
	\Rightarrow
	\sum_{\bs{\ell}\in\mathbb{Z}^n}\alpha_\bs{\ell}\ket{\bs{\ell}} &= \sum_{\bs{\ell}\in\mathbb{Z}^n}\alpha_\bs{\ell}\ket{\bs{\ell}+\bs{s}H_X}\nonumber\\
	\Rightarrow \forall \bs{\ell},\forall \bs{s},\; \alpha_\bs{\ell}&=\alpha_{\bs{\ell}-\bs{s}H_X}.
	\end{align}
	This means that in a code state, all states of angular momentum differing by an element in the (left) image of $H_X$, $\im\left(H_X\right)$, must have the same amplitude.
	 Since $\im\left(H_X\right)\subseteq \ker\left(H_Z^T\right)$, we can split the latter into cosets with respect to this subgroup.
	 Logical operators of $X$-type should then move between these cosets, i.e. they are elements of $\ker\left(H_Z^T\right)$ but they are not in $\im\left(H_X\right)$. We therefore obtain the group of $X$-type logical operators  $\mathcal{L}_X$ as the first homology group of the chain complex. \\

	
  To characterize $Z$-type operators we go to some cochain complex which is dual to the initial one. 
  Concretely for a general chain complex, 
  \begin{equation}\mathcal{C}:\;\ldots\xrightarrow{\partial_{j+1}} C_j\xrightarrow{\partial_j}C_{j-1}\xrightarrow{\partial_{j-1}}\ldots,
  \end{equation}
  we define the dual complex with dual spaces as
	\begin{equation}
	C_j^* = \Hom(C_j, \mathbb{T}).\label{eq:cochains}
	\end{equation}
	Here $\Hom(A,B)$ denotes the (continuous) group homomorphisms from $A$ to $B$, i.e \begin{equation}
     \Hom(A,B) = \Big\{f:A\rightarrow B\;\Big\vert\; \forall (a_1,a_2)\in A^2,\,f(a_1+_Aa_2) = f(a_1) +_B f(a_2)\Big\},\label{eq:Hom}
 \end{equation}
where $+_A, +_B$ denote the group operation on $A$ and $B$, respectively.
    The choice of \(\mathbb{T}\) in Eq.~\eqref{eq:cochains} is specific to our setting but other choices can be made\footnote{In the literature of integer chain complexes it is more frequent to take the dual with respect to integer coefficients, that is to say, to consider \(\Hom(C_j,\mathbb{Z})\) instead of \(\Hom(C_j, \mathbb{T})\).
    This could be interpreted in our case as interchanging \(X\) and \(Z\), but this exchange is not trivial for rotors as there is no corresponding unitary operation.
    Furthermore, in the case of qubits we have \(\Hom(\mathbb{Z}_2,\mathbb{T}) \simeq \Hom(\mathbb{Z}_2,\mathbb{Z}_2) \simeq \mathbb{Z}_2\).
    Similarly for oscillators we have \(\Hom(\mathbb{R},\mathbb{T}) \simeq \Hom(\mathbb{R},\mathbb{R}) \simeq \mathbb{R}\).
    Hence the correct way to take the dual is obscured as all the groups are the same.}.
    With our choice, the elements of the dual space are the characters of the group.
	The dual boundary map is given by
	\begin{equation}
	\begin{array}{c c c c}
	\partial_j^* :& C_{j-1}^* &\longrightarrow& C_j^*\\
				& f &\mapsto& f\circ\partial_j
	\end{array}.\label{eq:coboundarymap}
	\end{equation}
	One can check that ${\rm Hom}(\mathbb{Z}^n,\mathbb{T}) \simeq \mathbb{T}^n$ and that ${\rm Hom}(\mathbb{T}^n,\mathbb{T}) \simeq \mathbb{Z}^n$ .\\
    
	Specializing to our specific length-3 chain complex in Eq.~\eqref{eq:rotorchaincomplex}, we can also check, using Eq.~\eqref{eq:coboundarymap}, that $\partial_2^*$ and $\partial_1^*$ are given as matrices by taking the transpose of $H_X$ and $H_Z$.
    The cochain complex is therefore structured as follows
		\begin{equation}
	\begin{matrix}
	\mathcal{C}^*: & C_2^* & \xleftarrow[H_X^T]{~~\partial_2^*~~} & C_1^* & \xleftarrow[H_Z]{~~\partial_1^*~~} & C_0^*\\
	& \shortparallel && \shortparallel & &\shortparallel\\
	&\mathbb{T}^{r_x} & & \mathbb{T}^n & & \mathbb{T}^{r_z}\\
	& \shortparallel && \shortparallel & &\shortparallel\\
	& \text{syndrome} && \text{operators} && \text{stabilizers}
	\end{matrix}.\label{eq:rotorcochaincomplex}
	\end{equation}
    Note also that the role of stablizer and syndrome are exchanged by the dualization.\\
	
    The logical operators expressed as homology and cohomology representatives behave in the expected way to form a quantum system in the logical subspace.
    More formally we have the following theorem.
   
	\begin{thm}\label{thm:cohom}
	Let $\mathcal{C}$ be a chain complex of free Abelian groups.
	Then its $n$th level cohomology group over $\mathbb{T}$ coefficients is isomorphic to the character group of its $n$th level homology group, that is to say
 \begin{equation}H^n(\mathcal{C},\mathbb{T}) \simeq \Hom(H_n(\mathcal{C}), \mathbb{T}).
	\end{equation}
	\end{thm}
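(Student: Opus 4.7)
My plan is to construct the claimed isomorphism directly by evaluation of cocycles on cycles, leveraging the key fact that $\mathbb{T} = \mathbb{R}/2\pi\mathbb{Z}$ is a divisible Abelian group and hence an injective $\mathbb{Z}$-module: any homomorphism from a subgroup $A' \subseteq A$ into $\mathbb{T}$ extends to a homomorphism $A \to \mathbb{T}$. This single property, invoked twice, drives both halves of the argument. Structurally, the statement is essentially the universal coefficient theorem for cohomology, with the usual $\mathrm{Ext}^1(H_{n-1}(\mathcal{C}), \mathbb{T})$ correction term vanishing because $\mathbb{T}$ is injective; I would prefer to give the short bare-hands argument rather than quote UCT.

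First I would set up notation. Let $Z_n = \ker \partial_n$ and $B_n = \im \partial_{n+1}$ inside $C_n$, so that $H_n(\mathcal{C}) = Z_n/B_n$. Given a cocycle $f \in \ker(\partial_{n+1}^*) \subseteq \Hom(C_n, \mathbb{T})$, the cocycle condition $f \circ \partial_{n+1} = 0$ forces $f|_{B_n} = 0$, so the restriction of $f$ to $Z_n$ descends to an element $\bar f \in \Hom(H_n(\mathcal{C}), \mathbb{T})$. Let $\Phi : \ker(\partial_{n+1}^*) \to \Hom(H_n(\mathcal{C}), \mathbb{T})$ denote this evaluation map. Any coboundary $f = g \circ \partial_n$ vanishes on $Z_n$ identically, hence $\Phi$ descends to a well-defined homomorphism $\bar\Phi : H^n(\mathcal{C}, \mathbb{T}) \to \Hom(H_n(\mathcal{C}), \mathbb{T})$, and it remains to show $\bar\Phi$ is an isomorphism.

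For surjectivity, given $\psi : H_n(\mathcal{C}) \to \mathbb{T}$, I would compose with the quotient $Z_n \twoheadrightarrow H_n(\mathcal{C})$ to obtain $\tilde\psi : Z_n \to \mathbb{T}$, and then apply injectivity of $\mathbb{T}$ to extend $\tilde\psi$ along the inclusion $Z_n \hookrightarrow C_n$ to a homomorphism $f : C_n \to \mathbb{T}$. Automatically $f|_{B_n} = \tilde\psi|_{B_n} = 0$ since $B_n$ maps to zero in $H_n(\mathcal{C})$, so $f$ is a cocycle with $\Phi(f) = \psi$. For injectivity, I would suppose $\Phi(f) = 0$, i.e.\ $f|_{Z_n} = 0$. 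Then $f$ factors through $C_n / Z_n$, which $\partial_n$ identifies with $B_{n-1} \subseteq C_{n-1}$, producing $\hat f : B_{n-1} \to \mathbb{T}$. Invoking injectivity of $\mathbb{T}$ a second time, I would extend $\hat f$ to $g : C_{n-1} \to \mathbb{T}$; by construction $f = g \circ \partial_n = \partial_n^*(g)$, so $[f] = 0$ in $H^n(\mathcal{C}, \mathbb{T})$.

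The main point requiring care — more a bookkeeping concern than a real obstacle — is to invoke divisibility of $\mathbb{T}$ cleanly at exactly those two extension steps. Notably, freeness of the $C_j$ is not strictly needed for the extension arguments themselves, since injectivity of the target group suffices; however, it is the natural setting matching the integer-matrix definition of the codes, and guarantees that the subgroups $Z_n$, $B_n$, $B_{n-1}$ are themselves well-behaved (free) objects, which is reassuring but not essential to the proof.
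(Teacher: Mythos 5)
Your proof is correct. The paper itself does not supply a proof of Theorem~\ref{thm:cohom}; it defers to references (Hurewicz--Wallman and Brown), so there is no in-text argument to compare against line by line. Your bare-hands construction is the standard one that underlies those references: the evaluation map $\bar\Phi$ is well defined because cocycles kill $B_n$ and coboundaries kill $Z_n$, surjectivity follows by extending a character of $Z_n$ along $Z_n \hookrightarrow C_n$, and injectivity follows by extending along $B_{n-1} \hookrightarrow C_{n-1}$ after factoring $f$ through $\partial_n \colon C_n/Z_n \xrightarrow{\sim} B_{n-1}$ --- with both extension steps supplied by divisibility (hence $\mathbb{Z}$-injectivity) of $\mathbb{T}$. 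This is precisely the universal coefficient theorem with the $\mathrm{Ext}^1(H_{n-1},\mathbb{T})$ term vanishing, as you note. Two small remarks. First, you correctly observe that freeness of the $C_j$ is not used by your argument; it is used elsewhere in the paper (and in the cited Chap.\ VIII) to guarantee that $Z_n$, $B_n$ are free and to set up the long exact sequence version of UCT, but for your direct argument only injectivity of the target matters. Second, the paper's definition of $\Hom$ in Eq.~\eqref{eq:Hom} is of \emph{continuous} homomorphisms; your argument produces abstract group homomorphisms, but since all the source groups ($C_j$, $Z_n$, $B_n$, $H_n$) carry the discrete topology in this setting, every group homomorphism into $\mathbb{T}$ is automatically continuous, so no gap arises. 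A one-line remark to that effect would make the proof fully watertight relative to the paper's conventions.
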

    A proof can be found for instance in \cite[Chap. VIII]{hurewiczDimensionTheoryPMS41941} or in more general terms in \cite{brownPontrjaginDualityGeneralized1976}.
  	The cohomology group is given in the usual way
	\begin{equation}
	    H^1(\mathcal{C}, \mathbb{T}) = \ker{\partial_2^*}/\im \partial_1^* =  \ker\left(H_X^T\right)/\im\left(H_Z\right) = \mathcal{L}_Z.
	\end{equation}
	Note that the applications $\partial_2^*$ and $\partial_1^*$ are specified using integer matrices but represent maps from phases to phases.
    In particular computing \(\ker\) is slightly different from standard linear algebra with real matrices as integer multiples of \(2\pi\) are equivalent to zero.

	This description of the logical operators is consistent with inspecting the action of a $Z$-type generalized Pauli operator on code states given in Eq.~\eqref{eq:codestatebasis}.
	In order to apply the same phase for every element of a given coset, the vector needs to be in ${\rm \rm ker}(H_X^T)$ and two vectors differing by an element of ${\rm im}(H_Z)$ will have the same action.

	 \subsection{Encoded Information}
    \label{sec:codestates}
	
Now the question is: what do rotor codes encode? 
	For a homological or CSS code on qubits, the code will encode some number $k$ of logical qubits and this is captured by the (co)homology groups for the chain complex $\mathcal{C}$ over $\mathbb{Z}_2$. As argued, this is the same for our codes, namely the (co)homology of the chain complex determines the logical information. However, the nature of the (co)homology groups depends both on the chain complex $\mathcal{C}$ and the \emph{coefficient group} that is used, i.e. $\mathbb{Z}_p$, $\mathbb{Z}$ or $\mathbb{R}$, see Table \ref{tab:co-hom-general}. 

Before we spell this out in mathematical detail, let us consider a simple example which captures the main idea. We have 4 rotors and the check matrices
	\begin{align}
		H_X = \begin{pmatrix}
			+1  & -1  & 0 & 0  \\
			0  & 0 & -1 & +1  \\
			-1& -1 & +1 & +1
		\end{pmatrix}, \;
		H_Z = \begin{pmatrix}
			+1 & +1 & +1 & +1\\
            -1 & -1 & -1 & -1\end{pmatrix}.
			\label{eq:smallcode}
		\end{align}
If this were a qubit code, then $\pm 1$ entries in the check matrices would be equivalent and hence the checks would be $XXII, IIXX, ZZZZ$ which are the checks of the smallest error-detection surface code encoding 1 logical qubit. If this were an oscillator code, then $\pm 1$s matter, and there is only 1 vector which is orthogonal to the rows of $H_X$ (in ${\rm ker}(H_X^T)$), but this vector is a row of $H_Z$ (in ${\rm im}(H_Z)$), hence no logical information is encoded. The rotor code version occupies a place in between: we do not seek a vector which is orthogonal to the rows of $H_X$. Rather, we note that the operator \(\overline{Z} = Z(0,0,\pi,\pi)=\e^{i\pi(\hat\ell_3+\hat\ell_4)}\) is not generated by any product of $Z(\varphi {\bf h}^Z_j)$ for any $\varphi$, but $\overline{Z}$ still commutes with all $X({\bf h}^X_j)$, since the commutator can be $e^{2\pi k i}\mathbb{1}=\mathbb{1}$ for any integer $k$.
This is precisely the niche that rotor codes encoding qudits explore and for codes based on tessellations of manifolds, this opportunity is captured by the manifold having torsion.
  We will come back to this example and similar ones in Section \ref{sec:rpqubit}.

 \begin{table}[htb]
 \caption{Homology and cohomology groups for a chain complex $\mathcal{C}$ over coefficient groups $G=\mathbb{Z}$, $\mathbb{T}$, $\mathbb{Z}_p$ for prime $p$ and \(\mathbb{R}\).
 All entries are expressed in terms of the free and torsion part of the homology groups over $\mathbb{Z}$ (higlighted in gray). The $n$th homology group $H_n(\mathcal{C},\mathbb{Z})$ is separated in its free part $F_n$ (some number of copies of $\mathbb{Z}$) and its torsion part $T_n$ (equal to some number of cyclic groups $\mathbb{Z}_d$ for different $d$). The $(\cdot)^*$ means taking the dual, see Eq.~\eqref{eq:cochains}. The notation $\mathbb{Z}_p(G)$ designates the group containing a $\mathbb{Z}_p$ summand for every $\mathbb{Z}$ or $\mathbb{Z}_{p^k}$ summand in $G$, see \cite[Corollary 3A.6]{hatcherAlgebraicTopology2002}.
 The notation \(\mathbb{R}(G)\) designates replacing each \(\mathbb{Z}\) by a \(\mathbb{R}\).}\label{tab:co-hom-general}%
        
      \renewcommand{\arraystretch}{1.6}
		\begin{tabular}{c | c c c }
			 & \multicolumn{3}{c}{\highlight{green}{\text{Homology}}} \\
			& \multicolumn{3}{c}{$C_2 \xrightarrow{~~\partial_2~~} C_1 \xrightarrow{~~\partial_1~~} C_0$}\\\hline
			 $G$ & $H_2$ & $H_1$ & $H_0$ \\\hline
			$\mathbb{Z}$ & $\highlight{gray}{F_2\oplus T_2}$ & $\highlight{gray}{F_1\oplus T_1}$ & $\highlight{gray}{F_0\oplus T_0}$ \\
			$\mathbb{T}$ &  $\left(F_2\oplus T_1\right)^*$ & $\left(F_1\oplus T_0\right)^*$ & $\left(F_0\oplus T_{-1}\right)^*$ \\\hline
			$\mathbb{Z}_p$ & $\mathbb{Z}_p\left(F_2\oplus T_2 \oplus T_1\right)$ & $\mathbb{Z}_p\left(F_1\oplus T_1 \oplus T_0\right)$ & $\mathbb{Z}_p\left(F_0\oplus T_0 \oplus T_{-1}\right)$ \\
			$\mathbb{R}$ & $\mathbb{R}\left(F_2\right)$ & $\mathbb{R}\left(F_1\right)$ & $\mathbb{R}\left(F_0\right)$ \\\hline
   
   			 & \multicolumn{3}{c}{\highlight{cyan}{\text{Cohomology}}}\\
			& \multicolumn{3}{c}{$C_2^* \xleftarrow{~~\partial_2^*~~} C_1^* \xleftarrow{~~\partial_1^*~~} C_0^*$}\\\hline
			 $G$ & $H^2$ & $H^1$ & $H^0$\\\hline
			$\mathbb{Z}$ & $F_2\oplus T_1$ & $F_1\oplus T_0$ & $F_0\oplus T_{-1}$\\
			$\mathbb{T}$ & $\left(F_2\oplus T_2\right)^*$ & $\left(F_1\oplus T_1\right)^*$ & $\left(F_0\oplus T_{0}\right)^*$\\\hline
			$\mathbb{Z}_p$ & $\mathbb{Z}_p\left(F_2\oplus T_2 \oplus T_1\right)$ & $\mathbb{Z}_p\left(F_1\oplus T_1 \oplus T_0\right)$ & $\mathbb{Z}_p\left(F_0\oplus T_0 \oplus T_{-1}\right)$\\
			$\mathbb{R}$ & $\mathbb{R}\left(F_2\right)$ & $\mathbb{R}\left(F_1\right)$ & $\mathbb{R}\left(F_0\right)$\\
		\end{tabular}
	\end{table}
  
	In all generality the homology group over $\mathbb{Z}$ is decomposed into a so-called free part, $F$, and a so-called {\em torsion} part, $T$, with
 \begin{align}
     H_1(\mathcal{C},\mathbb{Z}) &= F\oplus T,\label{eq:decompLX}
     \\
     F &\simeq \mathbb{Z}^{k^\prime},\\
     T &\simeq \mathbb{Z}_{d_1}\oplus\cdots\oplus\mathbb{Z}_{d_{k^{\prime\prime}}},
     \end{align}
     for some integers \(k^\prime\)\footnote{\(k^\prime\) is often called the Betti number}, \(k^{\prime\prime}\) and \emph{torsion orders} \(d_1\),\ldots, \(d_{k^{\prime\prime}}\).  The homology group over $\mathbb{Z}$ and its generators can be obtained using the Smith normal form of $H_X$ and $H_Z^{T}$ \cite{munkresElementsAlgebraicTopology2019a, comphomology}. In Ref.~\cite{githublink} we provide software to obtain the generators of $ H_1(\mathcal{C},\mathbb{Z})$ as well as its decomposition into free and torsion parts, for our examples. The code is based on the open source software ``Sage". We accompanied the code with detailed explanations on how to use the Smith normal form to obtain the generators of $ H_1(\mathcal{C},\mathbb{Z})$.
  
    When studying the homology of chain complexes over a field, such as \(\mathbb{F}_p\) for prime \(p\) or \(\mathbb{R}\), there can be no torsion.
    The torsion comes from the fact that \(\mathbb{Z}\) is not a field (but a ring) and as such does not have multiplicative inverses.
    Therefore it can happen that there exists some element, \(\bs{w}\in C_1\) which is a boundary only when multiplied by some 
    integer \(d\), that is to say
    \begin{equation}
        \exists \bs{s}\in C_2,\; \partial_2(\bs{s}) = d\bs{w},\; \bs{w}\not\in \im(\partial_2).\label{eq:weakboundary}
    \end{equation}
    Such elements, \(\bs{w}\), are called \emph{weak boundaries} in \cite{munkresElementsAlgebraicTopology2019a}.
    They are homologically non-trivial but become trivial when multiplied by \(d\) and so are elements of order \(d\) in the homology group.
    
	The homology group, \(H_1(\mathcal{C},\mathbb{Z})\), as in Eq.~\eqref{eq:decompLX} corresponds to a code which encodes $k^\prime$ logical rotors and \(k^{\prime\prime}\) qudits of respective dimensions $d_1,\ldots,d_{k^{\prime\prime}}$.
	We denote as $k$ the number of independent logical systems which are encoded so that
	\begin{equation}
	    k = k^\prime + k^{\prime\prime}.
	\end{equation}

	We can structure the generating set of $\mathcal{L}_X$ according to this decomposition. Namely, there are $k^\prime$ integer vectors generating the free part, which correspond to the logical $X$ operator of the $k^\prime$ logical rotors.
	We can stack them into a $k^\prime\times n$ integer matrix $L_X^r$.
	There are also $k^{\prime\prime}$ integer vectors generating the torsion part which form the rows of a matrix $L_X^d$, which correspond to the $X$ logical operators of the logical qudits.
	We denote all these vectors as $\bs{l}^{X}_{i}$, and we set them as rows in a $k\times n$ integer matrix $L_X$: 
	\begin{equation}
	    L_X = \begin{pmatrix}L_X^{r}\\[1em]L_X^{d}\end{pmatrix}.
	\end{equation}
	A generating set of $X$-type logical operators can be then written as follows
	\begin{equation}
	 \bs{m}\in \mathbb{Z}^{k^\prime}\oplus\left(\mathbb{Z}_{d_1}\oplus\cdots\oplus\mathbb{Z}_{d_{k^{\prime\prime}}}\right)\equiv G_{\rm logical}^X   
    ,\quad \overline{X}(\bs{m}) = X(\bs{m}L_X).
	\end{equation}
	Generating all $X$-type logical operators is done as follows
	\begin{equation}
	     \bs{m}\in G_{\rm logical}^X,\quad\forall\bs{s}\in\mathbb{Z}^{r_x},\quad \overline{X}(\bs{m}) = X(\bs{m}L_X + \bs{s}H_X).
      \label{eq:integerchoices}
	\end{equation}
    Note that we do not strictly need to restrict the torsion part of \(\bs{m}\) to be in \(\mathbb{Z}_{d_j}\) and can pick \(\bs{m}\in\mathbb{Z}^k\).
    Indeed picking \(\bs{m}\) in \(\mathbb{Z}^k\) will also generate all logical operators but some will be stabilizer equivalent through Eq.~\eqref{eq:weakboundary}.
		Coming back to the code states, this characterization allows us to write down a basis of code states as equal superpositions over cosets and label these (unphysical) basis states by $\bs{m}$, namely
	\begin{align}
	    \ket{\overline{\bs{m}}} &\propto \sum_{\bs{g}\in\mathbb{Z}^{r_x}} \ket{\bs{\ell}=\bs{g}H_X+\bs{m}L_X},\label{eq:codestatebasis}
	\end{align}
 Hence we can write the code states as
	\begin{align}
		\ket{\overline{\psi}} &= \sum_{\bs{m}\in\mathbb{Z}^k}\alpha_{\bs{m}}\ket{\overline{\bs{m}}}.\label{eq:codestates}
	\end{align}

	Theorem~\ref{thm:cohom} guarantees that $H^1(\mathcal{C}, \mathbb{T})$ has the same structure as the one given for $H_1(\mathcal{C}, \mathbb{Z})$ in Eq.~\eqref{eq:decompLX}, that is
	\begin{equation}
	    H^1(\mathcal{C},\mathbb{T}) \simeq \mathbb{T}^{k^\prime}\oplus\left(\mathbb{Z}^*_{d_1}\oplus\cdots\oplus\mathbb{Z}^*_{d_{k^{\prime\prime}}}\right),\label{eq:decompLZ}
	\end{equation}
	with $\mathbb{Z}^*_d$ in Eq.~\eqref{eq:Zstar}.
	Representatives of the logical operators are generated using a $k\times n$ integer matrix $L_Z$:
	\begin{equation}
	    \bs{\phi}\in \mathbb{T}^{k^\prime}\oplus\left(\mathbb{Z}^*_{d_1}\oplus\cdots\oplus\mathbb{Z}^*_{d_{k^{\prime\prime}}}\right)\equiv G_{\rm logical}^Z,\quad \overline{Z}(\bs{\phi}) = Z(\bs{\phi}L_Z),
     \label{eq:phasechoices}
	\end{equation}
Generating all $Z$-type logical operators is done as follows
	\begin{equation}
	     \bs{\phi}\in G_{\rm logical}^Z,\;\forall \bs{\nu}\in\mathbb{T}^{r_z},\; \overline{Z}(\bs{\phi}) = Z(\bs{\phi}L_Z + \bs{\nu}H_Z).
	\end{equation}
	Note that the vector $\bs{\phi}$ contains unconstrained phases (for the logical rotors) as well as constrained ones (for the encoded qudits).
	Contrarily to the logical \(X\) operators, we cannot here relax the restriction on the phases since doing so would produce operators not commuting with the \(X\) stabilizers.
    We can again split the logical generators in two
	\begin{equation}
	\bs{\phi} = \left(\bs{\phi}^r, \;\; \bs{\phi}^d \right).
	\end{equation}
	
	We can also split the matrix $L_Z$ in two parts: a $k^\prime\times n$ integer matrix $L_Z^r$ generating the rotor part and a $k^{\prime\prime}\times n$ integer matrix $L_Z^d$ generating the qudit part.
	\begin{equation}
	    L_Z = \begin{pmatrix}L_Z^r\\[1em]L_Z^d\end{pmatrix}.
	\end{equation}
	
	The rows of the matrix $L_Z$ with integer entries can be denoted as $\bs{l}_i^Z$.
	Theorem~\ref{thm:cohom} also guarantees us that we can find a pairing of the $X$-type and $Z$-type logical operators such that 
	\begin{equation}
	    L_XL_Z^T = \id.
     \end{equation}

	\subsection{Formal Noise Model and Distance of the Code}
 \label{sec:distance}
	
	We want to characterize the level of protection offered by a rotor code which is usually captured by the distance of the code. The motivation for the definition of a distance is given by the noise model. For a qubit stabilizer code, the distance is the minimal number of physical qubits on which to act to realize any logical operation. For a qudit (or classical dit) code, it is more ambiguous: it could be the minimal number of qudits to act on to realize any logical operation, or some other measure of the minimal total change of any code state to another code state.

	We first discuss a reasonable noise model to which we tie our definition of distance for rotor codes.
	
	Since the group of generalized Pauli operators forms an operator basis, we consider a simple noise model consisting of Pauli noise. More precisely, we assume that each rotor independently undergoes $X$- and $Z$-type errors, of the form $X(m)$ and $Z(\phi)$ respectively, with probabilities of the following generic form
	\begin{align}
	\forall m\in\mathbb{Z},\;\mathbb{P}\left (X(m)\right ) &= A_X\exp\left (-\beta_XV_X(m)\right ),
	\label{eq:errX}\\
	\forall\phi\in\mathbb{T},\;\mathbb{P}\left (Z(\phi)\right ) &= A_Z\exp\left (-\beta_ZV_Z(\phi)\right ).
	\label{eq:errZ}
	\end{align}
	The parameters $A_Z$ and $A_X$ are some normalizing positive constants and $\beta_Z$ and $\beta_X$ are some real and positive strength parameters. We assume that the potential function $V_Z(\phi)$ is unchanged under $\phi \rightarrow 2\pi-\phi$ and monotonically increasing from $V_{Z}(\phi=0)=0$ to $\phi=\pi$. Similarly, the potential function $V_X(m)$ is monotonically increasing with $\vert m\vert$ and $V_X(m=0)=0$.
	One straightforward choice is
	\begin{align}
	V_Z(\phi) &= \sin^2\left(\frac{\phi}{2}\right),& \beta_Z &= \frac{1}{\sigma^{2}},\label{eq:specificerrZ}\\
	V_X(m) &= \vert m\vert,& \beta_X &= -\log p.\label{eq:specificerrX}
	\end{align}
	This choice makes $\mathbb{P}(Z(\phi))$ a normalized von Mises probability distribution characterized by standard deviation $\sigma$, and $p$ is the probability of a ladder jump $\ket{\ell}\rightarrow \ket{\ell\pm 1}$. The normalization constants are given by
	\begin{equation}
	A_Z^{-1} = \exp\left (-\frac{1}{2\sigma^2}\right )2\pi {\rm I}_0\left (\frac{1}{2\sigma^2}\right ),\qquad A_X^{-1} = \frac{1+p}{1-p},
	\end{equation}
	where ${\rm I}_n(x)=\frac{1}{\pi}\int_0^{\pi} d\theta\, e^{x\cos(\theta)} \cos(n\theta)$ is the modified Bessel function of the first kind of order $n$. 
 
The von Mises probability distribution is a natural choice since, unlike Gaussian noise, it respects the periodicity of the $\phi$-variable, and for small $\sigma$ and thus, small values for $\phi$, it can be approximated by a Gaussian distribution.   
 

    For $n$ quantum rotors we consider independent and identically distributed noise given by probability distributions
    \begin{align}
        \forall \bs{m}\in\mathbb{Z}^n,\;\mathbb{P}(X(\bs{m})) &= A_X^n\exp\left(-\beta_XW_X(\bs{m})\right), \\
        \forall\bs{\phi}\in\mathbb{T}^n,\;\mathbb{P}(Z(\bs{\phi})) &= A_Z^n\exp\left(-\beta_ZW_Z(\bs{\phi})\right),
    \end{align}
    where  
    \begin{align}
    W_X(\bs{m}) &= \sum_{j=1}^nV_X(m_j),\label{eq:weightX} \\
        W_Z(\bs{\phi}) &= \sum_{j=1}^nV_Z(\phi_j),\label{eq:weightZ}
    \end{align}
	
	Using these weights, we can introduce $X$ and $Z$ distances for the code. 
 We define the $X$ distance from the definition of the weight in Eq.~\eqref{eq:weightX} and minimize over stabilizer equivalent logical operators:
	\begin{align}
	    	    d_X &= \min_{\bs{m}\in G_{\rm logical}^X,\bs{m}\neq \bs{0}}\, \min_{\bs{s}\in\mathbb{Z}^{r_x}} W_X\left(\bs{m}L_X + \bs{s}H_X\right),
	    \label{eq:distanceX}
	\end{align}
	where $G_{\rm logical}^X$ was defined in Eq.~\eqref{eq:integerchoices}.
	
 As mentioned earlier, the distance in $Z$ is less straightforward when the code encodes some logical rotors. 
	For this reason, if one encodes logical rotors (besides qudits), we can compare the distance of a logical $Z$ operator to the weight of its bare implementation and we denote this distance slightly differently, namely as
	\begin{equation}
	    \delta_Z = \min_{\bs{\phi}\in G_{\rm logical}^Z,\bs{\phi} \neq \bs{0}}\,\min_{\bs{\nu}\in\mathbb{T}^{r_z}}\frac{W_Z\left(\bs{\phi}L_Z+\bs{\nu}H_Z\right)}{W_Z(\bs{\phi})},\label{eq:distanceZ}
	\end{equation}
where $G_{\rm logical}^Z$ was defined in Eq.~\eqref{eq:phasechoices} and $W_Z(\bs{\phi})=\sum_{j=1}^k V_Z(\phi_j)$, the sum of weights on the $k$ unencoded degrees of freedom. For example, if the rotor code encodes a single logical qudit of dimension $d$ we have logical shifts $\phi \in \mathbb{Z}_d$, which implies that $\frac{1}{W_Z(\phi)}$ is always bounded away from 0 by a constant $C$. The constant is noise-model dependent but irrelevant for the distance scaling with the number of physical qubits and hence we can omit the denominator and obtain a similar definition as $d_X$. When the rotor code encodes a logical rotor with logical shifts $\phi\in \mathbb{T}$, the denominator goes to zero when $\phi \rightarrow 0$, but so does the numerator, hence, it is the ratio that matters. In this paper we focus on codes encoding qubits so $\delta_Z$ is like $d_X$. For the codes that we consider, a difference between the $X$ distance and the $Z$ distance is that a lower probability logical $Z$ can be obtained by spreading the support of the logical operator by stabilizers; we discuss this in Section \ref{sec:spread}.
 
 \subsubsection{\(X\) Distance Bound from Qudit Versions of the Code}
	We can compare the $X$ distance of the rotor code which encodes $k$ logical rotors to the distance of the corresponding qudit code for any qudit dimension $l$. Let us denote as $\mathcal{C}^l(H_X, H_Z)$ the quantum code obtained by replacing each physical rotor by a qudit of dimension $l$ defined using generalized Pauli operators for qudits, see the definition in Appendix \ref{app:quditcode}.
 We denote the $X$ distance of the qudit code as $d^l_X$ defined using the Hamming weight of logical operators.

	\begin{thm}[Lower bound on $d_X$]
 \label{thm:boundXdist}
	Given a rotor code $\mathcal{C}^{\rm rot}(H_X,H_Z)$ encoding $k$ degrees of freedom.
	Let $M_{\rm min}$ be the set of non-trivial logical operators with a representative of minimal weight:
	\begin{equation}
	    M_{\rm min} = \left\{\bs{m}\in G^X_{\rm logical}, \bs{m}\neq \bs{0}\middle\vert \min_{\bs{s}\in\mathbb{Z}^{r_x}}W_X(\bs{m}L_X+\bs{s}H_X)=d_X\right\}.
	\end{equation}
	Let $L$ be the set of qudit dimensions, $l=2,3,\ldots$, such that there exists a logical operator of minimal weight which is non-trivial in $\mathcal{C}^l(H_X,H_Z)$:
	\begin{equation}
	    L = \left\{l\in\mathbb{N}^{\geq2}\middle\vert \exists \bs{m}\in M_{\rm min},\,\forall\bs{s}\in\mathbb{Z}^{r_x},\,\bs{m}L_X\neq \bs{s}H_X\pmod l\right\}.
	\end{equation}
	Then $d_X$ is lower bounded as follows:
	\begin{equation}
	    d_X\geq \max_{l\in L}d_X^l.
	\end{equation}
	\end{thm}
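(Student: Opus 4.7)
The plan is to reduce the rotor distance problem to a qudit distance problem by taking minimum-weight representatives modulo $l$ and exploiting the inequality between the weight $W_X$ and the Hamming weight.

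First I would unpack the definitions. By the definition of $d_X$ in Eq.~\eqref{eq:distanceX}, for each $\bs{m}\in M_{\rm min}$ there exists some $\bs{s}_0\in\mathbb{Z}^{r_x}$ such that the representative $\bs{v} := \bs{m}L_X + \bs{s}_0 H_X$ satisfies $W_X(\bs{v}) = d_X$. Since $\bs{v}$ lies in $\ker(H_Z^T)$ over $\mathbb{Z}$, its reduction $\bar{\bs{v}}\in\mathbb{Z}_l^n$ likewise lies in $\ker(H_Z^T)$ over $\mathbb{Z}_l$, so $\bar{\bs{v}}$ represents some element of $\ker(H_Z^T)/\im(H_X)\pmod l$, i.e.\ a (possibly trivial) logical coset of the qudit code $\mathcal{C}^l(H_X,H_Z)$.

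Next, for a fixed $l\in L$, I would use the defining property of $L$ to pick $\bs{m}\in M_{\rm min}$ witnessing $l\in L$, i.e.\ $\bs{m}L_X\neq \bs{s}H_X \pmod l$ for every $\bs{s}\in\mathbb{Z}^{r_x}$. For any minimum-weight representative $\bs{v}$ of this $\bs{m}$, triviality of $\bar{\bs{v}}$ in the qudit code would mean $\bs{v}\equiv \bs{s}'H_X\pmod l$ for some $\bs{s}'$, hence $\bs{m}L_X \equiv (\bs{s}'-\bs{s}_0)H_X\pmod l$, contradicting the choice of $\bs{m}$. Therefore $\bar{\bs{v}}$ is a \emph{nontrivial} logical of $\mathcal{C}^l(H_X,H_Z)$, so its Hamming weight is at least $d_X^l$.

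Finally I would compare weights. For the natural choice $V_X(m)=|m|$ (and more generally whenever $V_X(m)\geq 1$ for all nonzero integers $m$, which follows from $V_X(1)\geq 1$ together with monotonicity), one has
\begin{equation}
d_X = W_X(\bs{v}) = \sum_{j=1}^n V_X(v_j) \;\geq\; \bigl|\{j : v_j\neq 0\}\bigr| \;\geq\; \bigl|\{j : \bar{v}_j\neq 0\}\bigr| \;\geq\; d_X^l,
\end{equation}
where the second inequality uses that $v_j=0$ implies $\bar{v}_j=0$. Taking the maximum over $l\in L$ yields $d_X\geq \max_{l\in L} d_X^l$.

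The main delicate point is the second step: one must be careful that the $\bs{m}\in M_{\rm min}$ used to produce the representative $\bs{v}$ is the one witnessing $l\in L$, rather than an arbitrary minimum-weight logical. A different $\bs{m}\in M_{\rm min}$ might happen to vanish modulo $l$, providing no information about $d_X^l$; the definition of $L$ is tailored precisely to avoid this, and choosing the correct witness is all that is needed to close the argument.
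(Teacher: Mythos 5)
Your proof is correct and follows essentially the same route as the paper's: reduce a minimum-weight representative $\bs{v}$ modulo $l$, observe it is a logical of the qudit code, and chain $W_X(\bs{v})\geq W_H(\bs{v}\bmod l)\geq d_X^l$. You are slightly more careful than the paper in two places worth noting: you explicitly isolate the requirement $V_X(m)\geq 1$ for $m\neq 0$ that underlies the comparison of $W_X$ with Hamming weight (the paper uses it implicitly with $V_X(m)=|m|$), and you spell out that the $\bs{m}$ used must be the one witnessing $l\in L$ so that the reduction is genuinely nontrivial — the paper phrases this as a conditional and folds it into the definition of $L$ without flagging the subtlety.
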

	\begin{proof}
	    
		Pick $\bs{m}_{\rm min}$ and $\bs{s}_{\rm min}$ forming a minimum weight logical operator and consider the obtained vector 
	\begin{equation}
	    \bs{v}_{\rm min} = \bs{m}_{\rm min}L_X + \bs{s}_{\rm min} H_X,\qquad W_X(\bs{v}_{\rm min}) = d_X.
	\end{equation}
	Taking $\bs{v}_{\rm min}\pmod l$ yields a valid logical operator for the corresponding qudit code.
	If $\bs{v}_{\rm min}\pmod l$ corresponds to a non-trivial logical operator then its weight is lower bounded by the distance of the qudit code, meaning that
	\begin{equation}
	d_X = W_X(\bs{v}_{\rm min}) \geq W_X(\bs{v}_{\rm min}\pmod l)  \geq W_H(\bs{v}_{\rm min}\pmod l)  \geq d_X^l.
	\end{equation}
 where $W_H()$ is the Hamming weight of a vector $\bs{x}$, counting the non-zero entries $W_H(\bs{x}) = \left\vert\left\{x_j\vert x_j\neq 0\right\}\right\vert$.
	This means that we can also take the maximum of the $d^l_X$ over $l$ for which there is a minimum weight logical operator in the rotor code which is non-trivial in the qudit code.
	\end{proof}
	Note that if there is no torsion, i.e. there are only rotors in the codespace then $L=\mathbb{N}^{\geq 2}$, see Appendix~\ref{app:quditcode}.
	
	\subsubsection{$Z$ Distance and Logical Operator Spreading}
	\label{sec:spread}
	
	It is not clear that the distance measures that we have introduced are achieved by seeking logical operators which have minimal support, as we are used to in the qubit case. This is particularly true for the logical $Z$ operators as we can spread such operator around by adding $\bs{\nu}H_Z$ for a {\em continuous} $\bs{\nu} \in \mathbb{T}^{r_z}$.
	
	This can have non-trivial consequences for the stability and protection of the encoded information, also for the rotor codes which encode a logical qubit. The question whether spread-out logical operators have lower probability than minimal-support logical operators depends on the error model through the weight function $W_Z$. This function has a quadratic dependence on $\phi$ in $V_Z(\phi)$ for small $\phi$ as in Eq.~\eqref{eq:specificerrZ}, thus making spread-out logical $Z$ operators consisting of many small shifts more likely than logical $Z$ operators with minimized support. 
 
 To bound this phenomenon, it turns out that a measure of \emph{disjointness} of the logical $\overline{X}$ operator\footnote{Interestingly, disjointness of logical operators was studied previously \cite{jochym-oconnorDisjointnessStabilizerCodes2018} to understand which logical gates can be realized by Clifford or constant depth operations.} plays a role in minimizing the $Z$ distance as captured by the following Lemma. In this Lemma we focus on codes encoding a single logical degree of freedom, a rotor or a qudit, for simplicity. In addition, we impose some constraints on the support of the logical $\overline{X}$ which is obeyed by all the homological codes for which we wish to apply the Lemma in this paper. 
 
 \begin{lem}
 Let $\mathcal{C}^{\rm rot}(H_X,H_Z)$ be a rotor code encoding one logical degree of freedom ($k=1$).
 Suppose one can find a set \(\Delta_X\subset \mathbb{Z}^n\) of \(N_X\) representatives of the logical operator \(\overline{X}(1)\).
 Suppose furthermore that any of these representatives \(\bs{m}\in\Delta_X\) has only entries in \(\{-1,0,1\}\) and that they all have non-overlapping support pair-wise.
 Define \(D_X\) as the maximum weight among the elements of \(\Delta_X\), i.e $D_X=\max_{\bs{m}\in \Delta_X}|\bs{m}|$.
 Then for sufficiently large distance $d_X$, as defined by Eq.~\eqref{eq:distanceX}, one can lowerbound the weight of any logical $\overline{Z}(\alpha)$, with $\alpha \in G_{\rm logical}^Z$, 
as
  \begin{equation}
        \delta_Z \geq \frac{N_XD_X\sin^2\left(\frac{\alpha}{2D_X}\right)}{\sin^2(\frac{\alpha}{2})}.\label{eq:lowerboundZ}
        \end{equation}   
         \label{lem:Zbound}
 \end{lem}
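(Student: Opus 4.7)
The plan is to dress an arbitrary implementation of $\overline{Z}(\alpha)$ as $Z(\bs{\phi})$ for some phase vector $\bs{\phi}\in\mathbb{T}^n$ (already minimized over $Z$-stabilizer additions $\bs{\nu}H_Z$) and lower bound its weight $W_Z(\bs{\phi})$ by exploiting the commutation constraints imposed by the $N_X$ disjoint $X$-representatives in $\Delta_X$. Because these representatives have pairwise non-overlapping supports, the sum defining $W_Z(\bs{\phi})$ splits as an independent sum over the supports of the $\bs{m}\in\Delta_X$ (plus a non-negative contribution on the complement), so the global problem decouples into $N_X$ independent one-support optimizations.

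For a single representative $\bs{m}\in\Delta_X$ of $\overline{X}(1)$, combining the algebraic identity $\overline{X}(1)\overline{Z}(\alpha)=\mathrm{e}^{-i\alpha}\overline{Z}(\alpha)\overline{X}(1)$ with Eq.~\eqref{eq:commutationmulti} forces
\[
\bs{m}\cdot \bs{\phi}^T \equiv \alpha \pmod{2\pi}.
\]
Writing $d=|\bs{m}|\leq D_X$ and using that $m_j\in\{-1,0,1\}$ together with the evenness of $V_Z$, we can substitute $\tilde\phi_j = m_j\phi_j$ for $j\in\mathrm{supp}(\bs{m})$ and reduce the task to minimizing $\sum_{j\in\mathrm{supp}(\bs{m})}\sin^2(\tilde\phi_j/2)$ subject to $\sum_j \tilde\phi_j \equiv \alpha \pmod{2\pi}$.

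This constrained minimization is a Lagrange-multiplier exercise: rewriting $\sin^2(x/2) = (1-\cos x)/2$, the stationarity condition forces $\sin\tilde\phi_j$ to be constant in $j$, and for $\alpha$ in the convex regime the global minimum lies at $\tilde\phi_j = \alpha/d$ with value $d\sin^2(\alpha/(2d))$. A short monotonicity check on $f(d) = d\sin^2(\alpha/(2d))$, amounting to verifying that $y\sin y > 1-\cos y$ for small $y=\alpha/d$ (which gives $f'(d)<0$), then shows $d\sin^2(\alpha/(2d)) \geq D_X\sin^2(\alpha/(2D_X))$ for every $d\leq D_X$. Summing this lower bound over the $N_X$ disjoint supports and dividing by the bare weight $W_Z(\alpha) = \sin^2(\alpha/2)$ yields exactly Eq.~\eqref{eq:lowerboundZ}.

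The genuine obstacle is the circular nature of the phases: since the constraint is read modulo $2\pi$, besides the $\tilde\phi_j = \alpha/d$ solution there are competing branches $\tilde\phi_j = (\alpha + 2\pi k)/d$ with $k\neq 0$, and one must verify that none of them produces a smaller weight than the $k=0$ branch. This is precisely where the hypothesis \emph{sufficiently large $d_X$} enters: because $d_X$ lower-bounds the weight of every representative after $X$-stabilizer dressing and thereby constrains how small $D_X$ can be, a large $d_X$ places $\alpha/(2D_X)$ deep inside the convex regime of $\sin^2(\cdot/2)$, where the $k=0$ branch is unambiguously the global minimizer and the monotonicity of $f$ holds. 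Everything else, notably that contributions on the complement of $\bigcup_{\bs{m}}\mathrm{supp}(\bs{m})$ are simply dropped as non-negative terms, is routine.
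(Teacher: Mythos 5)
Your argument is correct and reproduces the paper's own proof: commutation with each disjoint representative $\bs{m}\in\Delta_X$ forces $\bs{m}\cdot\bs{\phi}\equiv\alpha\pmod{2\pi}$, the non-overlapping supports let $W_Z$ decompose into $N_X$ independent subproblems, and the Lagrange-multiplier minimization (the paper's Appendix~C) gives the per-support bound $|\bs{m}|\sin^2\bigl(\alpha/(2|\bs{m}|)\bigr)$, minimized at $|\bs{m}|=D_X$ by the monotonicity of $d\sin^2(\alpha/(2d))$, whence the claim after dividing by $\sin^2(\alpha/2)$. Your explicit monotonicity check $y\sin y>1-\cos y$ and your identification of where the large-$d_X$ hypothesis acts (keeping $\alpha/D_X$ in the convex regime so the $k=0$, uniform-branch stationary point is the global minimum) are exactly the points the paper handles in its appendix via the $(w,k)$ parameterization of stationary configurations.
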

        When $\mathcal{C}^{\rm rot}(H_X,H_Z)$ encodes a qubit, we have only one logical $\overline{Z}$ with $\alpha=\pi$ and hence Eq.~\eqref{eq:lowerboundZ} becomes
        \begin{equation}
            \delta_Z\geq N_X D_X \sin^2\left(\frac{\pi}{2D_X}\right) \sim \frac{N_X \pi^2}{4 D_X}.
            \label{eq:Zbound-qubit}
        \end{equation}
        When $\mathcal{C}^{\rm rot}(H_X,H_Z)$ encodes a rotor we have $\alpha\in \mathbb{T}$ $(\alpha \neq 0)$ and hence Eq.~\eqref{eq:lowerboundZ} at $\alpha \rightarrow 0$ becomes
        \begin{equation}
            \delta_Z\geq \frac{N_X}{D_X}. 
            \label{eq:Zbound-rotor}
         \end{equation}

\begin{proof}
Let $\overline{Z}(\alpha)$ be realized by some vector of phases, \(\bs{\phi}\in\mathbb{T}^n\), i.e. $\overline{Z}(\alpha)=Z(\bs{\phi})$.
    For all representatives \(\bs{m}\in\Delta_X\) we require
    \begin{align}
    X(\bs{m})Z(\bs{\phi}) = \e^{i\bs{m}\cdot \bs{\phi}}Z(\bs{\phi})X(\bs{m})    \Rightarrow \exists k\in\mathbb{Z},\;\bs{m}\cdot \bs{\phi}&=\alpha+2k\pi.
    \end{align}
    Note that for every index \(j\) where $\bs{m}$ has no support (\(m_j=0\)) the value of \(\phi_j\) is not constrained. Since all $\bs{m} \in \Delta_X$ are non overlapping, one can impose all constraints simultaneously. Now consider minimizing the weight $W_Z(\bs{\phi})$ of $\overline{Z}(\alpha)$ under these constraints, i.e. we have
        \begin{equation}
        \min_{\substack{\bs{\phi}\neq \bs{0}\in\mathbb{T}^n,\, k\in\mathbb{Z}\\\forall \bs{m}\in\Delta_X,\,\bs{m} \cdot \bs{\phi}=\alpha+2k\pi}} W_Z(\bs{\phi}) \geq N_X \min_{\bs{m}\in \Delta_X}\left[\min_{\substack{\bs{\phi}\neq \bs{0}\in\mathbb{T}^n,\, k\in\mathbb{Z}\\ \bs{m}\cdot\bs{\phi}=\alpha+2k\pi}} W_Z(\bs{\phi})\right].\label{eq:minimizationZ}
    \end{equation}
   In Appendix \ref{sec:minimizationZ} we show that when
    \(\bs{m}\) contains only $0,+1,-1$, the minimum in the previous equation for a fixed $\bs{m}$ is attained by {\em spreading} $\bs{\phi}$ evenly over all non-zero entries of \(\bs{m}\). That is, we prove for $|\bs{m}|=n$ (assuming a sufficiently large $n$ and hence a sufficiently large $d_X$) that
  \begin{align}
        W_Z(\bs{\phi}) \geq n \sin^2\left(\frac{\alpha}{2n}\right).
        \label{eq:app-bound}
    \end{align}
Clearly, the bound is smallest when the (sufficiently large) weight $n$ of $\bs{m}$ is maximized, hence equal to $D_X$, leading to Eq.~\eqref{eq:lowerboundZ}, using the definition of $\delta_Z$ in Eq.~\eqref{eq:distanceZ}.
\end{proof}
 
    \subsection{Notation for the Parameters of a Quantum Rotor Code}
    Now that we have defined quantum rotor codes, examined their codespace and defined a notion of distance, we can choose a notation to summarize the main parameters. 
    The main difference with the usual \(\llbracket n,k,d\rrbracket \) notation for qubit codes is that we have to fully specify the homology group and both  \(X\) and \(Z\) distances but it is otherwise very similar.
    \begin{defi}[Parameters of a quantum rotor code]
    Given a quantum rotor code \(\mathcal{C}^{\rm rot}(H_X, H_Z)\) we say it has parameters
    \begin{equation}
        \left\llbracket n, (k, d_1\cdot d_2\cdots d_{k^{\prime}}), (d_X,\delta_Z)\right\rrbracket_{\rm rot},
    \end{equation}
    if it involves \(n\) physical rotors, it encodes \(k\) logical rotors and \(k^\prime\) qudits of respective dimensions \(d_1\), \ldots, \(d_{k^\prime}\) and has \(X\) distance \(d_X\) as given by Eq.~\eqref{eq:distanceX} and \(Z\) distance \(\delta_Z\) as given by Eq.~\eqref{eq:distanceZ}. 
    \end{defi}
    When there are \(m\) qudits of same dimension \(d\) we write \(d^m\) in the sequence of qudit dimensions.
    
	\subsection{Measuring Stabilizers and Hamiltonian of the Code}
	\label{sec:meas-stab}
	
	In active stabilizer quantum error correction, we measure the stabilizer generators of the code in order to infer errors. For qubits, Pauli stabilizers are directly observable and can thus be measured. Here, for the $X$ part of the stabilizer one can construct a set of $r_x$ Hermitian observables for each generator $S_j^X$, i.e.  
	\begin{align}
		O_j^{c,X} &= \frac{S_j^X + {S_j^X}^\dagger}{2} = \cos\left(\bs{h}_j^X\cdot\bs{\hat{\theta}}\right),\qquad
		O_j^{s,X} = \frac{S_j^X - {S_j^X}^\dagger}{2i} = \sin\left(\bs{h}_j^X\cdot\bs{\hat{\theta}}\right).\label{eq:measX}
	\end{align}
	Clearly, an eigenstate of $S_j^X$ with eigenvalue $\e^{i\theta}$ is an eigenstate of $O_j^{c,X}$ with eigenvalue $\cos(\theta)$ and of $O_j^{s,X}$ with eigenvalue $\sin(\theta)$. For the $Z$ part of the stabilizer one can construct a set of $r_z$ observables, namely
	\begin{align}
			O_j^Z &= \bs{h}_j^Z\cdot\hat{\bs{\ell}},\label{eq:measZ}
			\end{align}
such that learning the integer eigenvalue, say $a$, of $O_j^Z$, fixes the eigenvalue of $S_j^Z(\varphi)$ to be $e^{i\varphi a}$ for any $\varphi$. 

    Instead of active error correction in which these observables are approximately measured, we can consider passive error correction and construct a (dimensionless) code Hamiltonian whose groundspace (and excited spaces) is the codespace.
    We thus seek a suitable function of the observables which makes the codespace have the smallest eigenvalue. A natural choice is   
    \begin{equation}
        H_{\rm code} = -\sum_{j=1}^{r_x} O_j^{c,X} + \sum_{j=1}^{r_z} \left(O_j^Z\right)^2 = -\sum_{j=1}^{r_x}\cos\left(\bs{h}_j^X\cdot\bs{\hat{\theta}}\right) + \sum_{j=1}^{r_z} \left(O_j^Z\right)^2.
        \label{eq:codehamiltonian}
    \end{equation}

One can compute the energy of an excitation of type $X(\bs{m})$ or $Z(\bs{\phi})$ on a ground state $\ket{\overline{\psi}}$:
    \begin{equation}
        E(\bs{m}) = \bs{m}^TH_Z^TH_Z\bs{m},\qquad E(\bs{\phi}) = \sum_{j=1}^{r_x}\cos\left(h_j^X\cdot\bs{\phi}\right).
    \end{equation}
One observes that the $Z(\bs{\phi})$ excitations are gapless due to the continuous nature of $\bs{\phi}$. Since actual physical states are only supported on a finite range of $\ell$, and are thus only approximate eigenstates for $S_j^X$, the spectrum in such physical subspace will not be continuous, see e.g. the discussion in Ref.~\cite{RBCD} and Section \ref{sec:protection}.

\section{Constructions of Homological Quantum Rotor Codes}
\label{sec:constructions}
In this section we give general as well as concrete ways to construct homological quantum rotor codes and investigate their parameters.
We start by using tessellations of 2D manifolds, higher dimensional ones and then products of chain complexes.

	\subsection{Homological Quantum Rotor Codes from Tessellations of Manifolds}
 	\label{sec:tes}

	A common way of getting a chain complex is to consider a manifold with a tessellation \cite{FM:plane}. Given a $D$-dimensional manifold and a tessellation of it, one chooses some $i\in\{1,\ldots,D-1\}$ and puts rotors on the $i$-cells. The $(i+1)$-cells and $(i-1)$-cells are used to define the $X$ and $Z$ stabilizers respectively as in Eq.~\eqref{eq:rotorchaincomplex}. 
    
    Note that we will always choose to put the \(X\) stabilizers on the higher dimensional \((i+1)\)-cells and the \(Z\) stabilizers on the lower dimensional \((i-1)\)-cells.
    This is opposite to the usual choice for homological quantum codes, in particular on two-dimensional manifolds. What is important is that in our case exchanging \(X\) and \(Z\) is {\em not} equivalent and the opposite choice is less interesting, in particular in 2D.\\
    
    In this section we consider tessellations of two-dimensional manifolds such as the torus, the projective plane and the M\"{o}bius strip. 
    For instance, the torus using a \(w\times N\) tessellation by square faces simply gives rise to a toric rotor code, --see also \cite{albertGeneralPhaseSpaces2017}, and \cite{VAHWPT} for the oscillator toric code--, encoding two logical rotors, as its homology is \(\mathbb{Z}^2\). We denote this code \(\mathbb{T}_2(w,N)\) and the parameters, in particular the distance, of this code can be computed to be \begin{equation}
    \mathbb{T}_2(w,N): \left\llbracket2wN, (2,0), \left(\min(N,w),\min\left(\frac{N}{w},\frac{w}{N}\right)\right)\right\rrbracket_{\rm rot}.
    \end{equation}
    The $X$ distance bound $\min(N,w)$ is simply the minimal support of a loop along the two directions. The $Z$ distance bound comes from applying Eq.~\eqref{eq:Zbound-rotor} in Lemma \ref{lem:Zbound} twice.
    Once with a set of disjoint representatives of \(\overline{X}_1\) and once of \(\overline{X}_2\).
    One set can be chosen as containing \(N_X=N\) disjoint representatives each of size \(D_X=w\).
    The other set can be chosen as containing \(N_X=w\) disjoint representatives each of size \(D_X=N\).

    \subsubsection{Real Projective Plane Encoding a Qubit}
 \label{sec:rpqubit}
 
	The real projective plane, denoted as $\mathbb{RP}^2$, is an interesting example since its first homology group has a trivial free part (i.e. not encoding any logical rotor) but a non-trivial torsion part.
	Table~\ref{tab:co-hom-projectiveplane} recalls the homology and cohomology groups of the real projective plane for different choices of coefficient groups.

	\begin{table}[ht]
 \caption{Homology and cohomology groups of the real projective plane $\mathbb{RP}^2$ over different coefficient groups \(G\).
	The $\mathbb{Z}$ and $\mathbb{T}$ rows give the possibilities for rotor codes.
	Depending on if one chooses to put $X$-type stabilizers on vertices (highlighted in red) or on faces (highlighted in violet), one gets either an empty codespace or a logical qubit. The $\mathbb{Z}_2$ row gives qubit codes \cite{FM:plane}.
	The $\mathbb{R}$ and $\mathbb{Z}_3$ rows show what would happen for oscillator and qutrit codes which would yield an empty logical codespace in both cases.}\label{tab:co-hom-projectiveplane}%
		\begin{tabular}{c| c | c c c | c c c}
			 & & \multicolumn{3}{c}{\highlight{green}{\text{Homology}}} & \multicolumn{3}{|c}{\highlight{cyan}{\text{Cohomology}}}\\
			& & \multicolumn{3}{c|}{$C_2 \xrightarrow{~~\partial_2~~} C_1 \xrightarrow{~~\partial_1~~} C_0$} & \multicolumn{3}{|c}{$C_2^* \xleftarrow{~~\partial_2^*~~} C_1^* \xleftarrow{~~\partial_1^*~~} C_0^*$}\\\hline
			\text{System} &  \(G\) & $H_2$ & $H_1$ & $H_0$ & $H^2$ & $H^1$ & $H^0$\\\hline
			\multirow{2}{*}{\text{Rotor}} & $\mathbb{Z}$ & $0$ & $\highlight{violet}{\mathbb{Z}_2}$ & $\mathbb{Z}$ & $\mathbb{Z}_2$ & $\highlight{red}{0}$ & $\mathbb{Z}$\\
			& $\mathbb{T}$ &  $\mathbb{Z}_2$ & $\highlight{red}{0}$ & $\mathbb{T}$ & $0$ & $\highlight{violet}{\mathbb{Z}_2}$ & $\mathbb{T}$\\\hline
			\text{Qubit} & $\mathbb{Z}_2$ & $\mathbb{Z}_2$ & $\mathbb{Z}_2$ & $\mathbb{Z}_2$ & $\mathbb{Z}_2$ & $\mathbb{Z}_2$ & $\mathbb{Z}_2$\\
			\text{Qutrit} & $\mathbb{Z}_3$ & $0$ & $0$ & $\mathbb{Z}_3$ & $0$ & $0$ & $\mathbb{Z}_3$\\
			\text{Oscillator} & $\mathbb{R}$ & $0$ & $0$ & $\mathbb{R}$ & $0$ & $0$ & $\mathbb{R}$
		\end{tabular}
	
	\end{table}

	To be concrete, let us construct some codes and see how the non-trivial qubit encoding comes about.
    Given a tessellation of a surface, one associates an arbitrary orientation, clockwise or anti-clockwise, with each two-dimensional face, the elements in $C_2$ in Eq.~\eqref{eq:rotorchaincomplex}. In addition, one associates an arbitrary direction with each edge, the elements in $C_1$. The properties of the code, i.e. what is encoded and what is the code distance, are not dependent on these choices. To construct a row of $H_X$ corresponding to a face in the tessellation, we then place an integer entry $m=m'-m''$ for some edge, when the edge is in the boundary ($\partial$) of the face with the same direction as the face $m'$ times, and when the edge is in the boundary ($\partial$) of the face with the opposite direction as the face $m''$ times. Similarly, to construct a row in $H_Z$, corresponding to a vertex in the tessellation, we place an integer entry $\pm 1$ for some edge, namely $+1$ when the edge is incoming to the vertex, and $-1$ when the edge is outgoing to the vertex (when two vertices with an edge between them are identified in the tessellation, one places both a $+1$ and $-1$ so $0$ in total).

In Fig.~\ref{fig:projectiveplane} we give three tessellations of increasing size of the real projective plane which encode a logical qubit. 
	
	\begin{figure}[ht]
		\centering
		\subfloat[\label{sfig:PP1}]{\includegraphics[width=.25\linewidth]{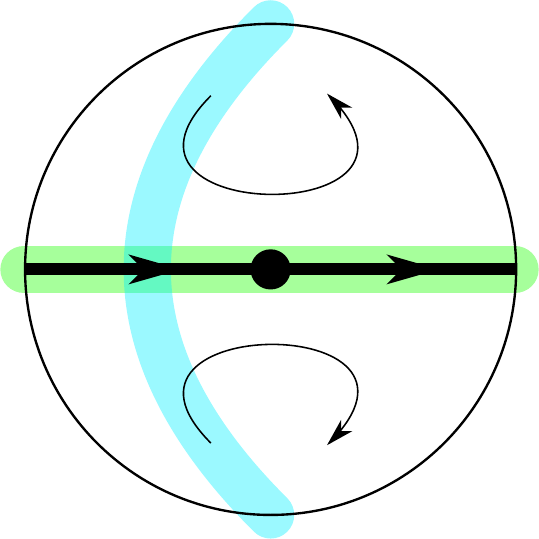}}
		\hfil
		\subfloat[\label{sfig:PP4}]{\includegraphics[width=.25\linewidth]{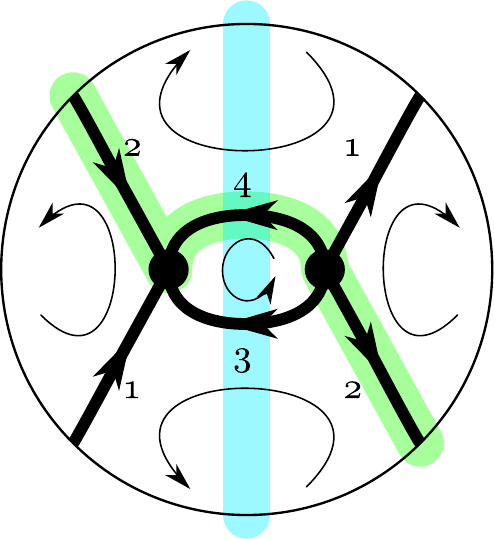}}\hfil
		\subfloat[\label{sfig:PP9}]{\includegraphics[width=.25\linewidth]{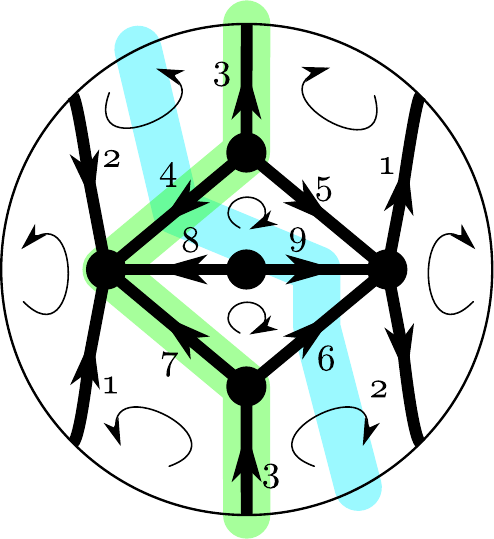}}
		\caption{In \protect\subref{sfig:PP1}, \protect\subref{sfig:PP4} and \protect\subref{sfig:PP9}, points on the outer circle are identified in antipodal pairs.
		\protect\subref{sfig:PP1} The smallest chain complex describing the real projective plane $\mathbb{RP}^2$ with one vertex, one edge and one face.
		The face is attached twice to the edge (from ``both sides'') seeing the same orientation twice.
		The vertex is attached twice to the edge with two opposite (canceling) orientations. 
		\protect\subref{sfig:PP4} A small tessellation of $\mathbb{RP}^2$ with 4 edges and \protect\subref{sfig:PP9} with 9 edges.
		The potential support for the two complementary logical operators are shown in green and blue.
		One can observe that twice the green path can be generated by the sum of all the faces.
		One can also observe that the blue path can host a logical operator, satisfying face constraints, only with coefficients that double to zero, hence they should be $\pi$. These two observations can help developing an intuitive understanding for the (co)homology groups in Table~\ref{tab:co-hom-projectiveplane}.
	}
		\label{fig:projectiveplane}
	\end{figure}
	
	The tessellation in Fig.~\ref{fig:projectiveplane}(a) leads to a stabilizer group generated by the operator $X(2)$, in other words, $H_X=2$ and $H_Z=0$. The logical operators of the encoded qubit are $\overline{X}=X(1)=\e^{i\hat\theta}$ and $\overline{Z}=Z(\pi)=\e^{i\pi\hat\ell}$. The code Hamiltonian, $H=-\cos(2\hat{\theta})$, obtained through Eq.~\eqref{eq:codehamiltonian}, can be viewed as that of a $0-\pi$ qubit: the state $\ket{\overline{+}}=\ket{\theta=0}$ and $\ket{\overline{-}}=\ket{\theta=\pi}$ are related by the $\pi$-phaseshift $\overline{Z}=Z(\pi)$. 
 
 For the tessellation in Figure~\ref{fig:projectiveplane}(b) we have the check matrices, previously given in Eq.~\eqref{eq:smallcode} and the logical operators are \(\overline{X} = X(0,-1,0,1)=\e^{i(\hat\theta_4 - \hat\theta_2)}\) and \(\overline{Z} = Z(0,0,\pi,\pi)=\e^{i\pi(\hat\ell_3+\hat\ell_4)}\). For this code, a spread-out logical, see Section \ref{sec:spread}, is \(\overline{Z} = Z\left(-\frac{\pi}{2},-\frac{\pi}{2},\frac{\pi}{2},\frac{\pi}{2}\right)=\e^{i\frac{\pi}{2}(\hat\ell_3+\hat\ell_4-\hat\ell_1-\hat\ell_2)}\) and a possible code Hamiltonian would be of the form \begin{equation}
		    H=-\cos(\hat{\theta}_1-\hat{\theta}_2)-\cos(\hat{\theta}_3-\hat{\theta}_4)-\cos(\hat{\theta}_1+\hat{\theta}_2-\hat{\theta}_3-\hat{\theta}_4)+(\hat{\ell}_1+\hat{\ell}_2+\hat{\ell}_3+\hat{\ell}_4)^2.
      \label{eq:examH}
		\end{equation}
  This four-rotor projective plane code, \(\mathbb{RP}^2(4)\), has parameters
  \begin{equation}
      \mathbb{RP}^2(4): \left\llbracket4, (0,2), (2,2)\right\rrbracket_{\rm rot}.
  \end{equation}
	
	For the third example code in Fig.~\ref{fig:projectiveplane}(c) we have
	\begin{align}
		H_X = \begin{pmatrix}
			1  & -1  & 0 & 0 & 0 & 0 & 0 & 0 & 0 \\
			-1  & 0  & 1 & 0 & -1 & 0 & 1 & 0 & 0 \\
			0  & 0  & 0 & -1 & 1 & 0 & 0 & 1 & -1 \\
			0  & 0  & 0 & 0 & 0 & -1 & 1 & -1 & 1 \\
			0  & 1  & 1 & -1 & 0 & 1 & 0 & 0 & 0 \\
		\end{pmatrix}, \;
		H_Z = \begin{pmatrix}
			1  & 1  & 0 & 1 & 0 & 0 & 1 & 1 & 0 \\
			0  & 0  & -1 & -1 & -1 & 0 & 0 & 0 & 0 \\
			-1  & -1  & 0 & 0 & 1 & 1 & 0 & 0 & 1 \\
			0  & 0  & 1 & 0 & 0 & -1 & -1 & 0 & 0 \\
			0  & 0  & 0 & 0 & 0 & 0 & 0 & -1 & -1 \\
			\end{pmatrix}.
			\label{eq:bigcode}
		\end{align}
  and we can define
  	\begin{align}
	\bs{l}_X=(0,0,-1,1,0,0,-1,0,0),\;\; \bs{l}_Z=(0,0,0,1,0,1,0,0,1),
    \end{align}
	and the logicals are $\overline{X}=X\left(\bs{l}_X\right) = \e^{i\bs{l}_X\cdot\hat{\bs{\ell}}}$ and $\overline{Z}=Z\left(\bs{l}_Z\right) = \e^{i\pi\bs{l}_Z\cdot\hat{\bs{\theta}}}$.

The parameters of this 9-rotor projective plane code, \(\mathbb{RP}^2(9)\), can be estimated to be
\begin{equation}
\mathbb{RP}^2(9): \left\llbracket9, (0,2), \left(3,\delta_Z\right)\right\rrbracket_{\rm rot},\quad 3\geq\delta_Z\geq\frac{9}{4}.
\end{equation}
The lower bound is obtained by considering a set of three \(X\) logical operators of weight three each.
The support can be \((3,4,7)\), \((1,8,9)\) and \((2,4,5)\) which has a single overlap and equal weights so the minimization technique of Section~\ref{sec:spread} still works.\\

Table \ref{tab:co-hom-projectiveplane} shows what happens when one puts qubits resp. oscillators on the edges for these tessellations, so that one encodes a single qubit resp. no logical information. We can also directly verify here that one cannot interchange $H_X$ and $H_Z$: this is expressed in the homology groups shown in Table \ref{tab:co-hom-projectiveplane}. In terms of freedom to define the code, we note that the $\pm$ signs in the matrices $H_X$ and $H_Z$ depend on the choice of orientation for each face, i.e. the signs in each row in $H_X$ could be flipped. In addition, the choice for orientation for each edge can be flipped which leads to a column in $H_Z$ and $H_X$ being flipped in sign.
Many other small tessellations of the real projective plane exist in the form of projective polyhedra such as the hemicube \cite{leverrierLocalTestabilityQuantum2022} and the tetrahemihexahedron \cite[Table 1]{CCBT:friends}.

 \subsubsection{M\"{o}bius Strip Encoding a Qubit}
\label{sec:thinmoeb}

	The M\"{o}bius strip is known to be the real projective plane without periodic boundaries on one side. One can thus obtain a M\"{o}bius strip by removing a face, say, the face between edges 2,3,4 and 6 in Fig.~\ref{fig:projectiveplane}(c), i.e., the last row of $H_X$ in Eq.~\eqref{eq:bigcode}, and identify the removed face with the logical operators $\overline{X}(m)=e^{im (\hat{\theta}_2-\hat{\theta}_4+\hat{\theta}_3-\hat{\theta}_6)}$ of an encoded rotor. This makes a M\"{o}bius strip with so-called `smooth' boundaries and the logical $\overline{Z}(\phi)$ can attach to such smooth boundaries.
	
	One can also make a M\"{o}bius strip with so-called `rough' boundaries.
    It can be constructed from the real projective plane by removing a vertex (while keeping the edges adjacent to it).
    On a closed manifold a single removed vertex is equivalent to the sum of all the other vertices.
    Hence, no additional logical rotor is associated with the removed vertex.
    Such M\"{o}bius-strip code will encode only a qubit due to the twist in orientation, see Fig.~\ref{fig:moebius}.

\begin{figure}[htb]
    \centering
\centering 
\includegraphics[width=.5\linewidth]{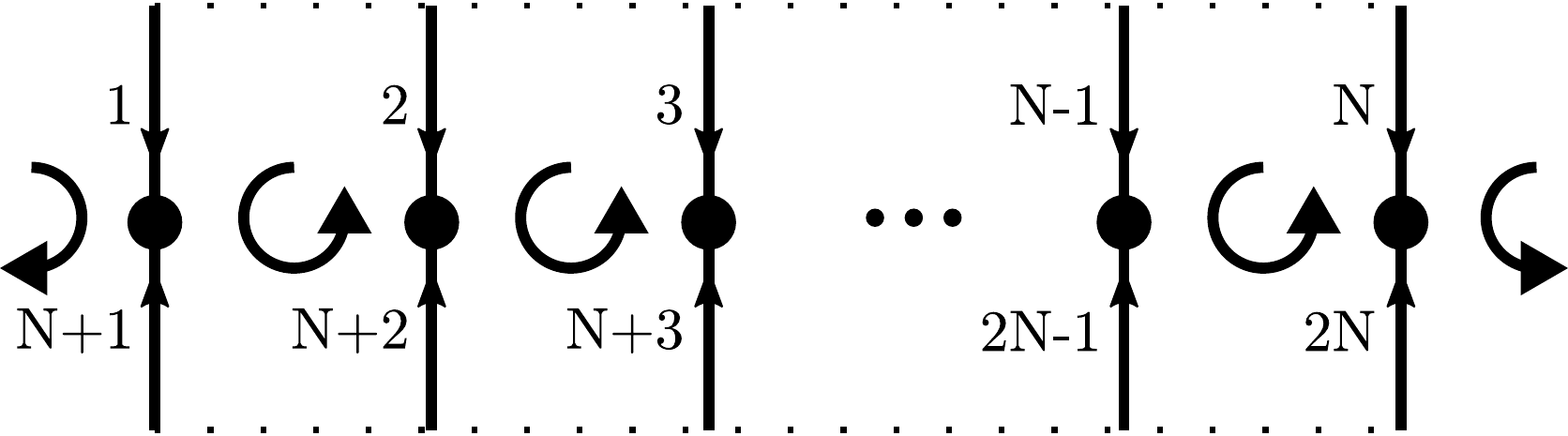}
\caption{Thin tessellation of a M\"{o}bius strip with a `rough' boundary with no rotors on the dashed edges or vertex checks on that boundary.
The edges are numbered from $1$ to $2N$.}
\label{fig:moebius}
\end{figure}

In Figure~\ref{fig:moebius} the strip is very thin: it is only one face wide. 
This rotor code is defined on a set of $n =2N$ rotors (one for each directed edge) where the rotors above each other are labeled by $i$ and $N+i$. The face stabilizers, corresponding to the rows of $H_X$ are \begin{align}
    S_j^X & =  e^{i(\hat{\theta}_j-\hat{\theta}_{N+j}+\hat{\theta}_{N+j+1}-\hat{\theta}_{j+1})},\; j=1,\ldots, N-1,\notag \\
 S_N^X &  = e^{i(\hat{\theta}_N-\hat{\theta}_{2N}-\hat{\theta}_{N+1}+\hat{\theta}_{1})},
 \end{align}
where the last face is {\em twisted}. The vertex stabilizers are 
\[
\forall \varphi,\;S_j^Z(\varphi)=e^{i \varphi (\hat{\ell}_j+\hat{\ell}_{N+j})}.
\]
Let's examine how to construct the logical $\overline{Z}$. Imagine phase shifting the upper rotors by $\epsilon$ each, i.e. we apply $e^{i \epsilon \sum_{i=1}^N \hat{\ell}_i}$. Due to the twisted check $S_N^X$, $\epsilon$ can only be $\pi$. For this thin M\"{o}bius strip, we can spread out the $\overline{Z}$ by moving half its support from the top of the ladder to the bottom. Using Eq.~\eqref{eq:distanceZ} and Eq.~\eqref{eq:specificerrZ} we see that this spread-out logical has distance $\delta_Z=2 N \sin^2(\pi/4)=N$ which is exactly the same as the minimal support logical $\overline{Z}$.

The matching logical $\overline{X}=e^{i (\hat{\theta}_j-\hat{\theta}_{N+j})}$ for any choice of $j$ and it has distance $d_X=2$ using Eq.~\eqref{eq:distanceX} and Eq.~ \eqref{eq:specificerrX}. 
So the parameters of the thin Möbius trip are given
\begin{equation}
\left\llbracket 2N, (0,2), (2,N)\right\rrbracket_{\rm rot}.   
\end{equation}
One can observe that $\overline{X}^k$ for any $k$ also commutes with the stabilizers and is not equal to $\overline{X}$ as an operator, but one can show that $\overline{X}^m$ for even $m$ is in the stabilizer group. To understand this, observe that we can move the support of $\overline{X}$ over the strip while keeping its form the same. This means that $\overline{X}^2$ can be split to different rungs on the ladder, and then we can move and annihilate them at the twist, since the face at the twist has the appropriate opposite signs.

The previous choice of logical $\overline{X}$ and $\overline{Z}$ operators gives the logical $\ket{\overline{0}}$ and logical $\ket{\overline{1}}$ code states for the thin M\"{o}bius strip, which in the angular momentum basis read:

\begin{subequations}
\label{eq:moebiuscodewords}
\begin{align}
\ket{\overline{0}} & = \sum_{\overset{\ell_1, \dots, \ell_N \in \mathbb{Z}}{\sum_{k=1}^N\ell_k = \mathrm{even} } } \ket{\ell_1, \dots, \ell_N, -\ell_1, \dots, -\ell_N}, \\
\ket{\overline{1}} & = \sum_{\overset{\ell_1, \dots, \ell_N \in \mathbb{Z}}{\sum_{k=1}^N\ell_k = \mathrm{odd} } } \ket{\ell_1, \dots, \ell_N, -\ell_1, \dots, -\ell_N}.
\end{align}
\label{eq:codewords-moeb}
\end{subequations}

In Subsec.~\ref{subsec:kitaevmirror} we show how Kitaev's current mirror qubit \cite{kitaevProtectedQubitBased2006, weissSpectrumCoherenceProperties2019} can be interpreted as a physical realization of the thin M\"{o}bius strip code described above.

\begin{figure}[htb]
\centering 
\subfloat[\label{sfig:TM1}]{\includegraphics[width=.3\linewidth]{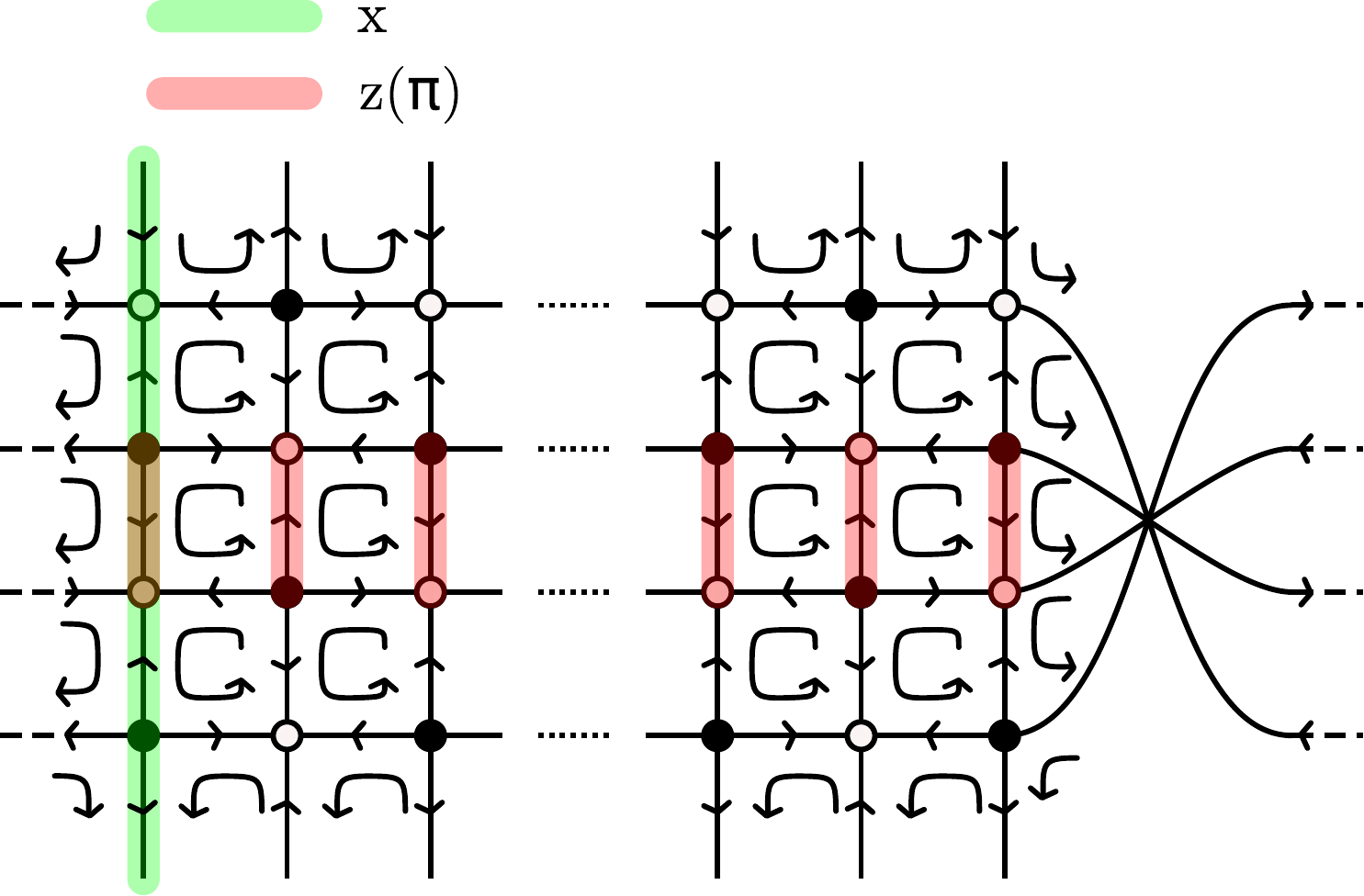}}\hfill
\subfloat[\label{sfig:TM2}]{\includegraphics[width=.3\linewidth]{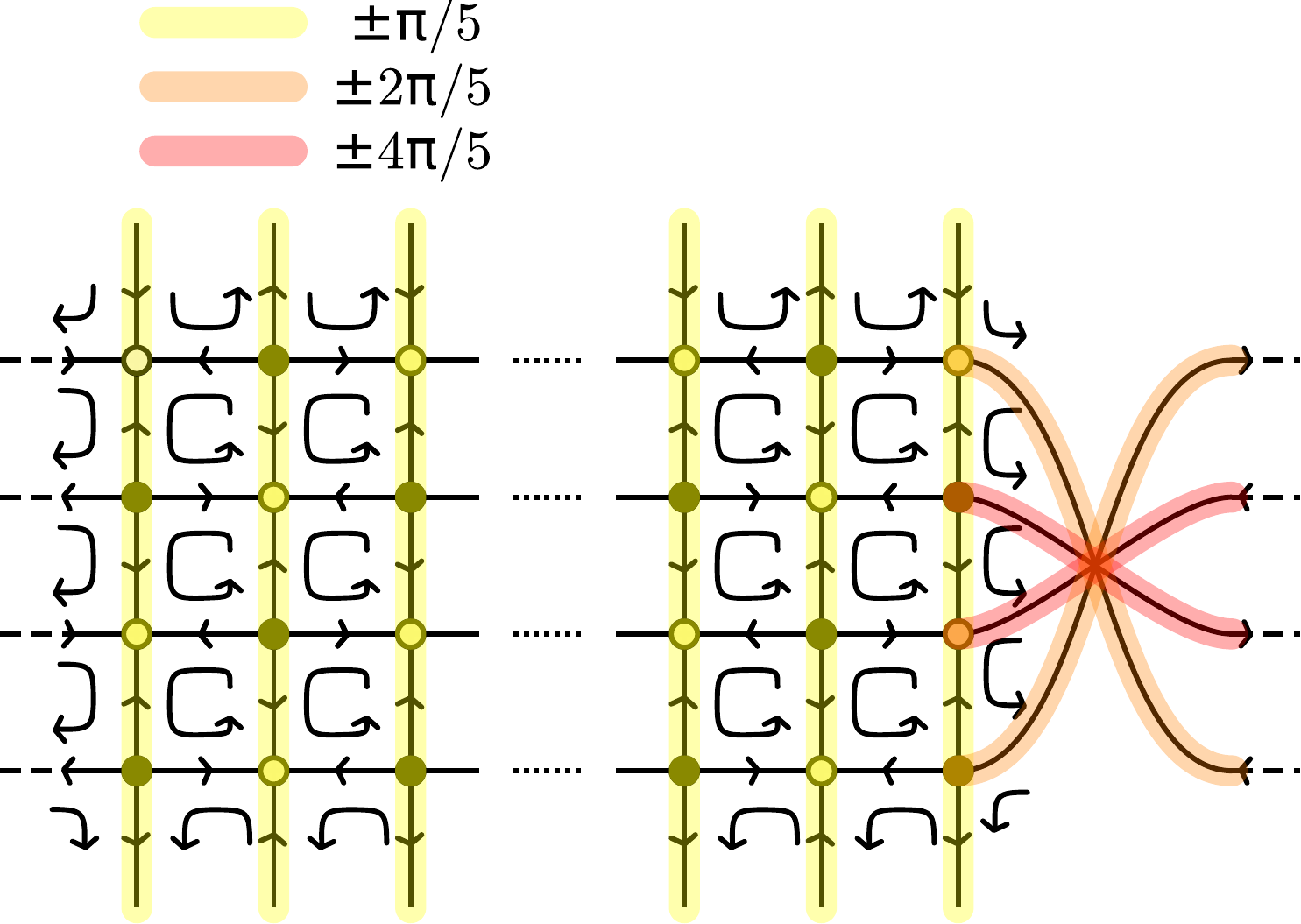}}\hfill
    \subfloat[\label{sfig:TM3}]{\includegraphics[width=.3\linewidth]{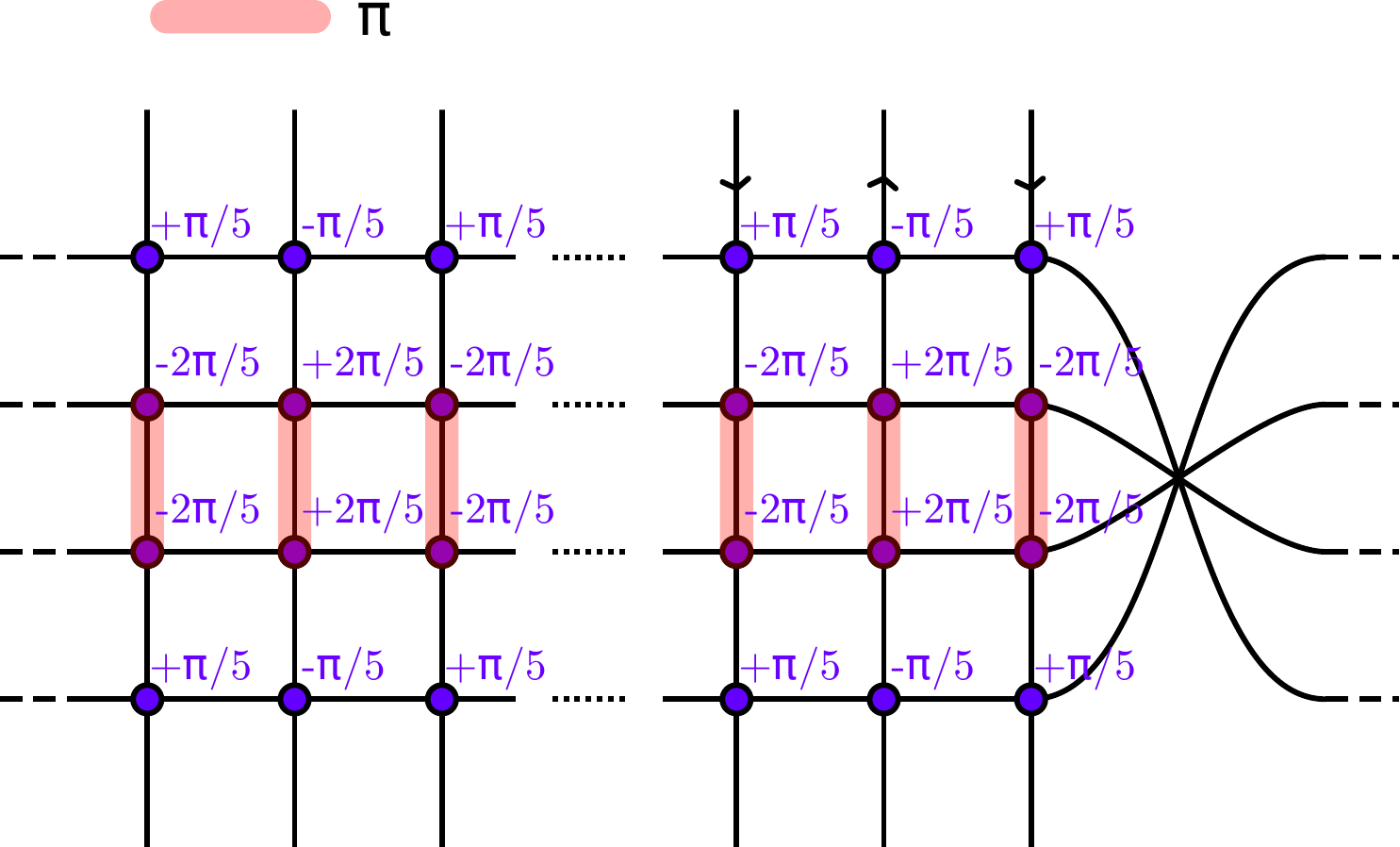}}
\caption{\protect\subref{sfig:TM1} A thicker M\"{o}bius strip with (odd) length $N$ and odd width $w=5$. In red is the support of the logical $\overline{Z}$ and in green is the support of the logical $\overline{X}$. \protect\subref{sfig:TM2} The support of the logical $\overline{Z}$ operator in red in \protect\subref{sfig:TM1} has been shifted to all the rows ($\pm \pi/5$ phase-shifts) and 2 edges with $2\pi/5$ and 2 with \(4\pi/5\) shifts occur at the twist: the shift is done by multiplying with vertex stabilizers with values given in \protect\subref{sfig:TM3}. \protect\subref{sfig:TM3} The value of the vertex stabilizers to choose to perform the spreading-out operation. To obtain the phase shift applied to each edge, one sums the values at its two neighbouring vertices (without changing the signs). Applying these shifts starting from the logical operator with \(\pi\) shifts in the middle row (highlighted in red) yields the logical operator in \protect\subref{sfig:TM2}.}
\label{fig:thick-moeb}
\end{figure}

We can make the M\"{o}bius strip code thicker: for example, we can take a strip of odd length $N$ and odd width $w$, see Fig.~\ref{fig:thick-moeb}. 
This M\"{o}bius-strip code, denoted as \(\mathbb{M}(w,N)\), on $2Nw-N$ rotors is defined by the following stabilizers. The $X$ stabilizer generators are labeled by the faces $\mu$ and the $Z$ stabilizer generators by the vertices $\nu$ of the lattice:
\begin{align}
   \mu=1,\ldots, Nw,\; && S_{\mu}^X  =e^{i \sum_{i \in\partial_2(\mu)} s_{\mu,i} \hat{\theta}_i},&& \mbox{with } s_{\mu,i}=\pm 1,  \notag \\
   \nu=1,\ldots, N(w-1),\; & & \forall \varphi\;  S_{\nu}^Z(\varphi)=e^{i\varphi \sum_{i \in \partial_1^*(\nu)}t_{\nu,i} \hat{\ell}_i},&& \mbox{with } t_{\nu,i}=\pm 1.
    \label{eq:moebstab}
\end{align}

Here the $\pm 1$ values for the $s_{\mu,i}$ and $t_{\nu,i}$ variables depend on orientation. Along the top and the bottom, the $X$ stabilizers act on three rotors, representing rough boundaries. The compact, minimal support, logical $\overline{Z}$ runs along a loop $\gamma^*$ on the dual lattice (in red in Fig.~\ref{sfig:TM1}) and applies a $\pi$-phaseshift on each rotor, i.e. $\overline{Z}=e^{i \pi \sum_{i \in \gamma^*} \hat{\ell}_i}$. We observe that 
the operator $\overline{Z}=e^{i \phi \sum_{i \in \gamma^*} \hat{\ell}_i}$ with $\phi\neq k \pi$, $k\in \mathbb{Z}$ does not commute due to the fact that the twisted face has support on the rotors $a$ and $b$ over which $\gamma^*$ runs of the form $e^{i (\hat{\theta}_a+\hat{\theta}_b)}$ (Note that an alternative commuting support for the logical of the form $e^{i \phi (\hat{\ell}_a-\hat{\ell}_b)}$ would not work due to the fact that it  does not commute with all other faces). The logical $\overline{X}$ runs over a straight line $\gamma$ from top to bottom (in green in Fig.~\ref{sfig:TM1}) and acts as $e^{i (\hat{\theta}_a-\hat{\theta}_b)}$ on two adjacent edges $a$ and $b$, incident to a vertex, in order to commute with the vertex check which has support $e^{i \varphi(\hat{\ell}_a+\hat{\ell}_b)}$ for any $\varphi$.
We can view $\overline{X}$ as applying a sequence of angular momentum or Cooper pair jumps along $\gamma$. The distance of the logical $\overline{X}$ is thus $d_X=w$, as $\gamma$ has to run from top to bottom in order to commute. These logical operators $\overline{Z}$ and $\overline{X}$ overlap on one edge where they anti-commute.

 To determine the distance $\delta_Z$ one considers spread-out logical operators, as discussed in Lemma \ref{lem:Zbound} in \ref{sec:spread}. We observe that one can move the support of the logical $\overline{Z}$ in red in Fig. \ref{sfig:TM1} on all the rows by multiplying by all $S_{\nu}^Z(\pm2\pi/5)$ for vertices alongside the logical operator and $S_{\nu}^Z(\pm\pi/5)$ for vertices closer to the boundary, see Fig.~\ref{sfig:TM3}. This creates a logical $\overline{Z}$ with $\pm \pi/5$-phaseshifts on all vertical edges, and it includes four horizontal edges with $\pm2\pi/5$ and \(\pm4\pi/5\) shifts at the twist, see Fig.~\ref{sfig:TM2}.

We can apply Lemma \ref{lem:Zbound} with $d_X=D_X=w$ and $N_X=N$ since there are $N$ disjoint representatives of the logical $\overline{X}$, all of weight $w$. One thus has
 
 \begin{prop}
 The rotor M\"{o}bius strip code, \(\mathbb{M}(w,N)\), of width $w$ and length $N$ encodes one logical qubit with distance $d_X=w$ and, for sufficiently large $w$, distance $\delta_Z\geq Nw \sin^2(\frac{\pi}{2w})\sim \frac{N \pi^2}{4 w}$.
 \label{prop:moeb-distance}
 \end{prop}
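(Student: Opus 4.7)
The proof splits into three independent parts matching the three claims: the code encodes a qubit, $d_X=w$, and $\delta_Z\geq Nw\sin^2(\pi/(2w))$.

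First I would identify the chain complex. Labeling faces $\mu=1,\ldots,Nw$, edges $i=1,\ldots,2Nw-N$, and vertices $\nu=1,\ldots,N(w-1)$, the matrices $H_X$ and $H_Z^T$ are precisely the boundary maps of a cellular decomposition of the M\"{o}bius strip with a rough boundary (no vertices on the top and bottom rails). It is a standard computation in algebraic topology that $H_1$ of this complex over $\mathbb{Z}$ is $\mathbb{Z}_2$ (pure torsion), with a generator represented by a dual loop $\gamma^*$ crossing the strip once. By the discussion surrounding Eq.~\eqref{eq:decompLX} this means $k'=0$, $k''=1$ and $d_1=2$, i.e.\ the code encodes a single qubit.

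Second, for $d_X$, an $X$-type logical is a representative of the nontrivial class in $H_1(\mathcal{C},\mathbb{Z})$, so it corresponds to a $\mathbb{Z}$-chain of edges whose boundary is zero (commuting with every vertex stabilizer) and which is not a $\mathbb{Z}$-boundary of faces. Because the rough boundaries carry no $Z$-stabilizers, such a representative may terminate on the two rails, but otherwise must satisfy Kirchhoff-like conditions at interior vertices. I would argue, working by induction on the columns of the strip, that any such chain must carry a nonzero flow across each of the $w$ rows between the boundaries; with entries in $\{-1,0,1\}$ (and thus with $V_X$-weight equal to the number of edges used), the weight is at least $w$. The straight vertical path $\gamma$ from top to bottom, of weight exactly $w$, shows this bound is tight, giving $d_X=w$. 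The harder direction is the lower bound, but it follows from an induction on horizontal strips analogous to the standard surface-code argument for string logicals.

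Third, to lower bound $\delta_Z$ I invoke Lemma~\ref{lem:Zbound}. I need a family $\Delta_X$ of $N_X=N$ representatives of $\overline{X}(1)$ whose entries are in $\{-1,0,1\}$ and whose supports are pairwise disjoint, each of weight at most $D_X=w$. For this one takes the $N$ disjoint straight columns $\gamma_1,\ldots,\gamma_N$ running from the top boundary to the bottom boundary in the $N$ different positions along the length; each is a valid $X$-logical representative of weight exactly $w$, and they are clearly pairwise disjoint. Applying Eq.~\eqref{eq:Zbound-qubit} from Lemma~\ref{lem:Zbound} with $\alpha=\pi$ then gives
\begin{equation}
\delta_Z \geq N_X D_X \sin^2\!\left(\frac{\pi}{2D_X}\right) = Nw \sin^2\!\left(\frac{\pi}{2w}\right) \sim \frac{N\pi^2}{4w}.
\end{equation}

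The main obstacle is the $d_X$ lower bound: one must check that no clever cycle in the $\mathbb{Z}$-chain complex (possibly using large integer coefficients, or wrapping around the twist) achieves weight below $w$. The cleanest way is to exhibit an explicit $\mathbb{Z}$-valued cocycle in $H^1(\mathcal{C},\mathbb{T})$ (representing $\overline{Z}$) with support on each column, so that any $X$-representative must intersect each column and hence have weight $\geq w$. The other two claims are essentially bookkeeping once the chain complex has been written down.
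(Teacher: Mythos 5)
Your proof follows essentially the same route as the paper: identify $H_1(\mathcal{C},\mathbb{Z})\simeq\mathbb{Z}_2$ to get one logical qubit, observe that the vertical column $\gamma$ realizes $\overline{X}$ with weight $w$, and apply Lemma~\ref{lem:Zbound} with the $N$ pairwise-disjoint vertical columns (so $N_X=N$, $D_X=w$) to get $\delta_Z\geq Nw\sin^2(\pi/2w)$. One small correction on the part you yourself flag as the obstacle: a single $\mathbb{T}$-valued cocycle representing $\overline{Z}$ cannot give $d_X\geq w$, since intersecting it forces only an odd pairing, i.e.\ weight $\geq 1$; the clean way to certify the lower bound $d_X\geq w$ is Theorem~\ref{thm:boundXdist}, reducing a putative minimal integer $X$-logical modulo $2$ and noting that it stays nontrivial in the qubit code $\mathcal{C}^2(H_X,H_Z)$ on the same tessellation, whose $X$-distance is the surface-code value $w$.
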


This directly implies that one should choose the length of the strip as $N=w^2$ in order to balance both distances and have them both increase with the number of physical rotors. 
Indeed this would yield the parameters
\begin{equation}
    \mathbb{M}(w,w^2): \left\llbracket 2w^3-w^2, (0,2), \left(w,\Theta(w)\right)\right\rrbracket_{\rm rot}.\label{eq:moebiusparameters}
\end{equation}
This yields a Möbius strip which is in a sense asymptotically 1D as the length increases quadratically faster than the width.

\subsubsection{Cylinder Encoding a Rotor}
\label{sec:cyl}

Instead of a M\"{o}bius strip, one can also choose a normal cylinder with a rough boundary on one side and periodic boundaries on the other side. The form of the checks is as in Eq.~\eqref{eq:moebstab}. In this case one encodes a logical rotor with $\overline{Z}(\phi)=e^{i \phi \sum_{i\in \gamma^*} \hat{\ell}_i}$ and $\overline{X}(m)=e^{i m\sum_{i \in \gamma} \pm \hat{\theta}_i}$ with alternating signs $\pm$ between adjacent edges $i$, see Fig.~\ref{fig:strip_rotor}.  We can again apply Lemma \ref{lem:Zbound} in Section \ref{sec:spread} using $D_X=w$ disjoint representatives for the logical $\overline{X}(1)$, each of weight $w$, and obtain

\begin{prop}
 The cylinder code of width $w$ and length $N$ encodes one logical rotor with $X$ distance $d_X=w$ and, for sufficiently large $w$, $\forall \alpha$, $\delta_Z\geq \frac{N w \sin^2(\frac{\alpha}{2w})}{\sin^2(\alpha/2)} \sim \frac{N}{w}$. 
  \label{prop:surface}
 \end{prop}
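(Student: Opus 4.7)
The plan is to decompose the proposition into three pieces: identifying $H_1$ of the cylinder chain complex to confirm the encoded system is a logical rotor, a short cycle-counting argument for $d_X=w$, and a direct application of Lemma~\ref{lem:Zbound} to bound $\delta_Z$.

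First I would identify the chain complex of the cylinder with one rough boundary and one periodic boundary, and verify that $H_1(\mathcal{C},\mathbb{Z})\simeq \mathbb{Z}$ with trivial torsion part. Since this cylinder is homotopy equivalent to $S^1$, its first integer homology is exactly $\mathbb{Z}$, generated by the non-contractible loop around the periodic direction. Via the decomposition in Eq.~\eqref{eq:decompLX}, a free part of rank one and no torsion yield a single encoded logical rotor, with $\overline{X}(m)$ labeled by $m\in\mathbb{Z}$ and $\overline{Z}(\phi)$ labeled by a continuous $\phi\in\mathbb{T}$, matching the explicit expressions given just above the proposition.

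Next I would pin down $d_X$. Commutation with every vertex stabilizer $S_j^Z(\varphi)$, which carries a continuous parameter $\varphi\in\mathbb{T}$, forces the integer support of any representative of $\overline{X}$ to satisfy a vanishing signed-sum condition at each bulk vertex; equivalently, the support traces a closed cocycle on the dual lattice. For the cylinder with one rough boundary, the shortest non-trivial such cocycle is a straight transverse line running from one rough boundary to the other, of weight $w$. Any shorter closed cocycle must be contractible, hence lie in $\im(H_X)$, hence be a stabilizer. This establishes $d_X=w$, realized by $\overline{X}(m)=\exp\bigl(im\sum_{i\in\gamma}\pm\hat{\theta}_i\bigr)$ with alternating signs dictated by the vertex-orientation convention.

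Finally I would apply Lemma~\ref{lem:Zbound} with the set $\Delta_X$ consisting of the $N$ column-aligned translates of the minimal-weight representative of $\overline{X}(1)$, one per column of the cylinder. These are pairwise disjoint, each of weight $w$, and each uses only entries in $\{-1,0,+1\}$, so the hypotheses are met with $N_X=N$ and $D_X=w$. Substituting into Eq.~\eqref{eq:lowerboundZ} gives the claimed
\begin{equation*}
\delta_Z \geq \frac{Nw\sin^2(\alpha/2w)}{\sin^2(\alpha/2)},
\end{equation*}
and the $\alpha\to 0$ expansion gives the asymptotic scaling $\delta_Z\sim N/w$. The step that needs the most care is verifying stabilizer equivalence of the $N$ transverse representatives: any two adjacent-column representatives differ by the product of the face stabilizers in the rectangular strip of faces between them, which is a genuine element of $\mathcal{S}$, so the family $\Delta_X$ is indeed a family of representatives of the same logical class, and the lemma applies verbatim.
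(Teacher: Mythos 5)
Your proposal follows the paper's route exactly --- identify the encoded logical rotor, obtain $d_X=w$ from the shortest transverse path, then feed $N_X=N$ disjoint column-aligned representatives of $\overline{X}(1)$, each of weight $D_X=w$, into Lemma~\ref{lem:Zbound} to get $\delta_Z\geq Nw\sin^2(\alpha/2w)/\sin^2(\alpha/2)\sim N/w$. One genuine wobble worth fixing: invoking ``homotopy equivalent to $S^1$'' computes the \emph{absolute} first homology of the annulus, whose generator is the horizontal non-contractible loop, but the chain complex $\mathcal{C}$ built from the lattice with vertex checks dropped on the two rough boundary circles instead computes a \emph{relative} homology $H_1(\text{annulus},\partial;\mathbb{Z})$, whose generator is the transverse arc running between the rough circles. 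In fact the horizontal loop is trivial in $\mathcal{C}$: the sum of faces in a horizontal band has boundary equal to the difference of the two bounding horizontal loops, and pushing all the way to a rough boundary makes the outermost loop vanish, so every horizontal loop lies in $\im(H_X)$. The two groups happen both to equal $\mathbb{Z}$, so your degree-of-freedom count survives, and you implicitly switch to the correct generator when establishing $d_X=w$; but the homology paragraph as written asserts the wrong class as generator, and should instead either compute $\ker(H_Z^T)/\im(H_X)$ directly or cite the relative version of the homotopy argument. Everything after that --- the cycle counting for $d_X$, the disjointness and stabilizer equivalence of adjacent-column representatives via the strip of face stabilizers between them, and the application of Eq.~\eqref{eq:lowerboundZ} --- is correct and is precisely what the paper does.
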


 Again, this implies that one should choose the length of the strip $N = w^2$ in order to balance both distances and have them both increase with the number of physical rotors. 
 The shape of the system is then also asymptotically 1D.
 In the next section we show that in higher dimensions we do not have to change the dimensionality of the system to obtain the same distance scaling. 

\begin{figure}[htb]
\centering 
\includegraphics[width=.35\linewidth]{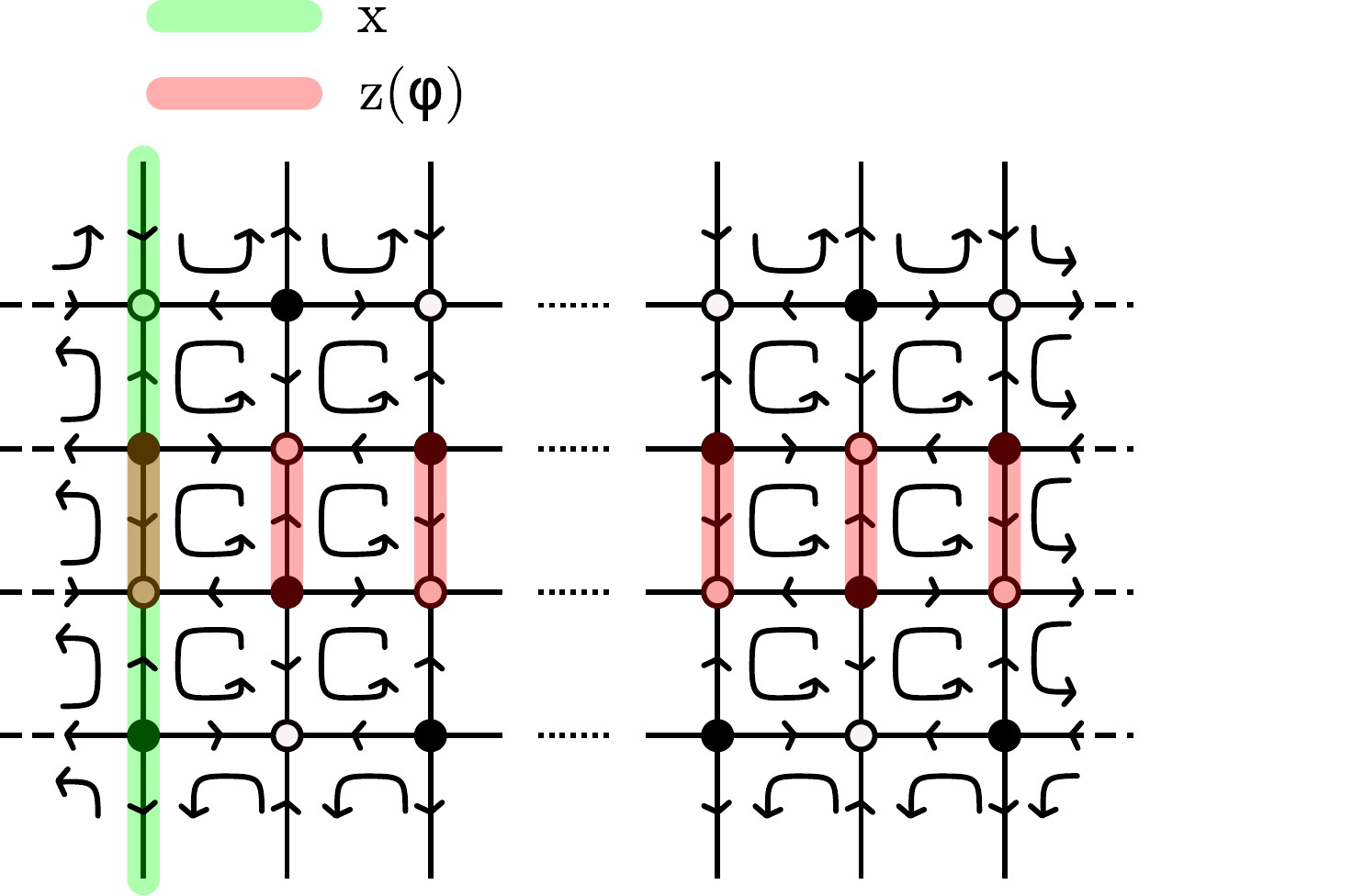}
\caption{(a) A normal strip with (odd) length $N$ and odd width $w=5$. The support of the logical $\overline{Z}$ is in red and the support of the logical $\overline{X}$ is in green. For this strip, one identifies the vertices at the left and right boundary.}
\label{fig:strip_rotor}
\end{figure}

As a curiosity one can also consider a tessellation of the Klein bottle: since the Klein bottle has first homology group $H_1({\rm Klein\, bottle}, \mathbb{Z})=\mathbb{Z} \times \mathbb{Z}_2$, it encodes both a qubit and a rotor with parameters \(\llbracket n, (1,2), (d_X,\delta_Z)\rrbracket_{\rm rot}\). \\

In general, using \(2D\) manifolds, we cannot encode more than a logical qubit alongside some number of logical rotors.
This comes from the relation between orientability and the torsion subgroup of \(H_{D-1}(\mathcal{M},\mathbb{Z})\) at the \(D-1\) level of a connected and closed \(D\)-dimensional manifold.
More precisely, if \(\mathcal{M}\) is orientable then there is no torsion in \(H_{D-1}(\mathcal{M},\mathbb{Z})\) whereas if it is non-orientable the torsion subgroup is \(\mathbb{Z}_2\), see \cite[Corollary 3.28]{hatcherAlgebraicTopology2002}.
We also give an elementary proof of this fact in the case where we have a finite tessellation of \(\mathcal{M}\) in Appendix~\ref{sec:orient=Z2}.
In larger dimensions \(D>2\), the torsion subgroups of \(H_k(\mathcal{M},\mathbb{Z})\) for \(k<D-1\) have no connection to orientability and can be arbitrary and the number of encoded logical qubits or qudits is not restricted.

\subsubsection{Higher Dimensions}
\label{sec:HD}

\begin{figure}[htb]
    \centering
    \subfloat[\label{sfig:RP3}]{\includegraphics[width=.25\linewidth]{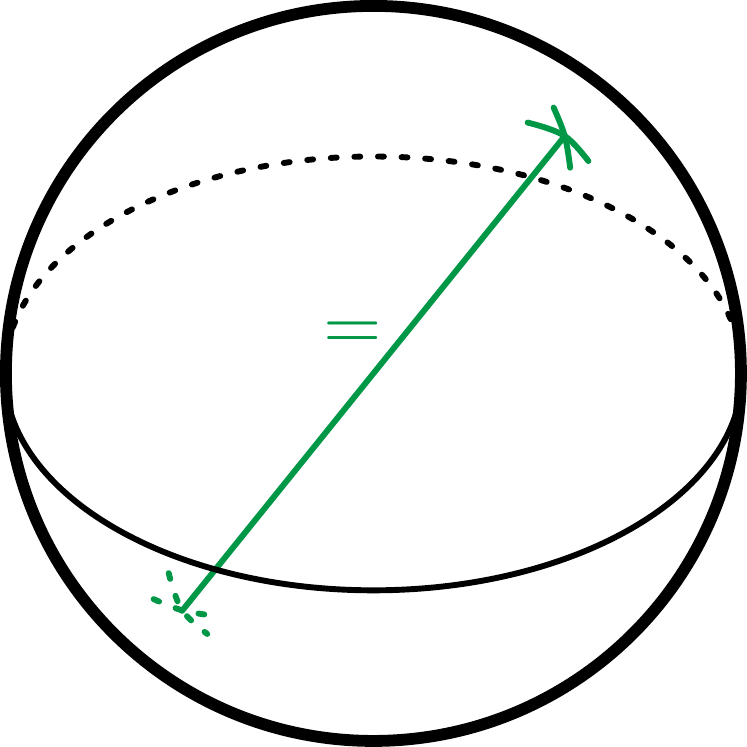}}\hfil
     \subfloat[\label{sfig:RP3*}]{\includegraphics[width=.25\linewidth]{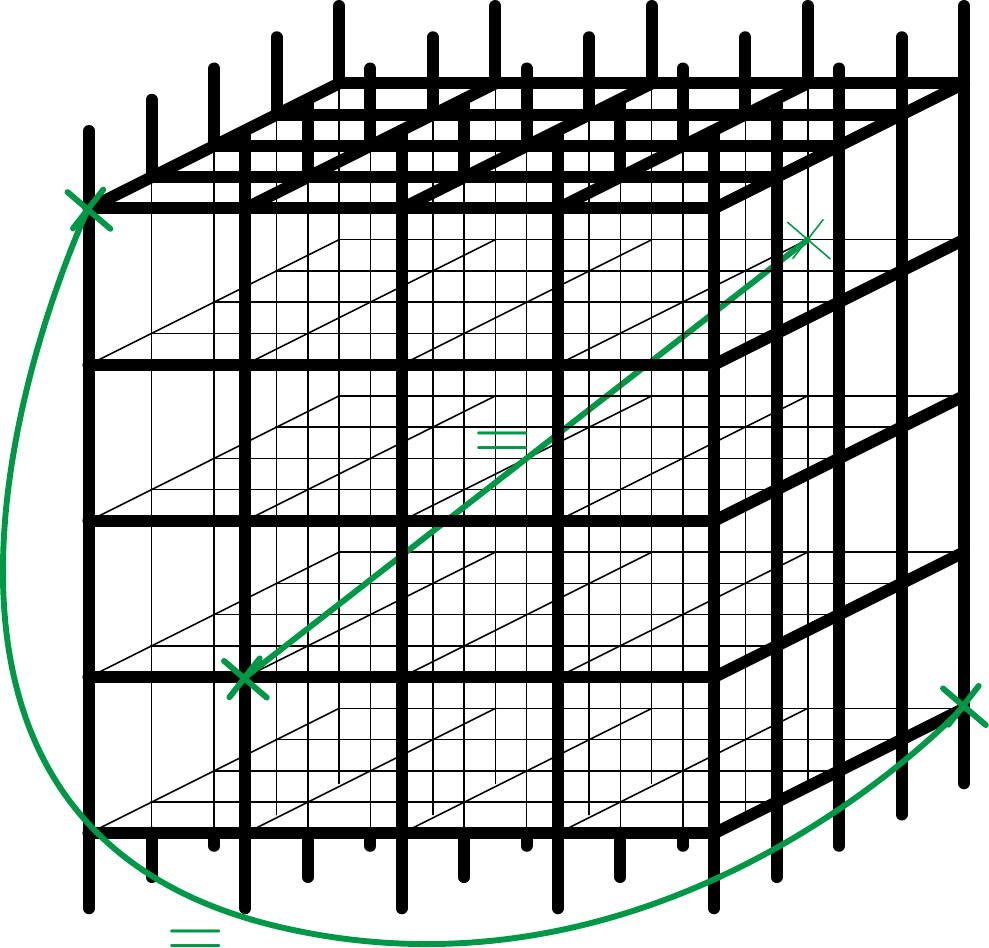}}
    \caption{\protect\subref{sfig:RP3} Representation of \(\mathbb{RP}^3\) as a \(3\)-ball with identified antipodal points on its boundary. \protect\subref{sfig:RP3*} A specific tessellation of \(\mathbb{RP}^3\) from which a point has been removed creating a rough boundary in the shape of a sphere.
    The top rough boundary and the bottom rough boundary in the drawing are connected and form a single boundary.
    They form the double cover of a real projective plane, i.e. a sphere.
    Each section plane is a real projective plane and hosts a \(Z\) logical operator.
    There are \(N^2\) vertical \(X\) logical operators of weight \(N\) each.
    }
    \label{fig:RP3}
\end{figure}

To overcome the restrictions imposed by \(2D\)-manifolds on the number of logical qubits and the \(Z\) distance scaling in, say, Proposition \ref{prop:moeb-distance} and Proposition \ref{prop:surface}, an option is to turn to higher-dimensional manifolds, starting with \(3D\).
Take a \(3\)-torus, say the \(N\times N\times N\) cubic lattice with periodic boundary conditions. We identify the edges with rotors, faces with \(X\) stabilizers and vertices with \(Z\) stabilizers, so that the number of physical rotors is \(3N^3\).
We denote this code by \(\mathbb{T}_3\).
The homology at level 1 is \(\mathbb{Z}^3\), implying that there are 3 logical rotors in the codespace. 
With this choice of dimensions, the $X$-type logical operators are non-trivial closed \(1D\) loops and the $Z$-type operators are $2D$ cuts in $2D$-torus shape.
We can compute the following parameters for the \(3D\)-toric rotor code
\begin{equation}
    \mathbb{T}_3: \left\llbracket 3N^3, (3,0), (N,N)\right\rrbracket_{\rm rot}.\label{eq:T3parameters}
\end{equation}
To get the lower bound on the \(Z\) distance via Lemma \ref{lem:Zbound}, we can exhibit a set of \(N_X = \Theta(N^2)\) disjoint \(X\)-logical operators each of weight \(D_X = N\).
All the parallel lines along the \(x\)-direction for instance.
This yields using Eq.~\eqref{eq:Zbound-rotor}
\begin{equation}
    \delta_Z = \Theta\left(\frac{N_X}{D_X}\right)= \Theta\left(N\right).
\end{equation}
So here we have a genuine growing distance with the system size for both \(X\) and \(Z\) without tweaking its shape. 

To get an example with torsion at level 1 we can turn to the real projective space in \(3D\) denoted as \(\mathbb{RP}^3\).
A way to visualize \(\mathbb{RP}^3\) is to take a \(3\)-ball and identify antipodal points on its boundary surface, see Figure~\ref{sfig:RP3}.
We have that \(H_1\left(\mathbb{RP}^3,\mathbb{Z}\right)=\mathbb{Z}_2\) hence encoding a single logical qubit.

Similarly as how a rough Moebius strip can be obtained from the projective plane by removing a vertex we can obtain a slightly simpler manifold than \(\mathbb{RP}^3\), with a rough boundary in the shape of a \(2\)-sphere by removing one vertex (but keeping the edges attached to it) from \(\mathbb{RP}^3\). To be concrete, we take again a \(N\times N\times N\) chunk of the cubic lattice with rough boundaries at the \(z=0\) and \(z=L\) planes.
Then we connect the four other sides anti-podally, see Figure~\ref{sfig:RP3*}. For short we label this code as \({\mathbb{RP}^{3}}^*\).

This punctured \(\mathbb{RP}^3\) code has parameters
\begin{align}
    {\mathbb{RP}^{3}}^* : \left\llbracket3N^3-N^2, (0,2), (N,N)\right\rrbracket_{\rm rot}.\label{eq:RP3*parameters}
\end{align}
The \(Z\) distance is obtained using the same set of disjoint \(X\) logical operators as in the 3-torus case, i.e \(N^2\) vertical paths from the top rough boundary to the bottom one in Figure~\ref{sfig:RP3*}.
In Section \ref{sec:protection} we will come back to the impact of distance scaling on the protection of the encoded information, although this is partially left to future work.

Many other \(3\)-manifolds with different 1-homology can be found and some have been tabulated \cite{hodgsonSymmetriesIsometriesLength1994}.
For instance the 3-manifold named ``m010(1,3)'' in the ``orientable closed manifold census'' of SnapPy \cite{SnapPy} has its 1-homology equal to \(\mathbb{Z}_{42}\).

\subsection{Constructions from a Product of Chain Complexes}
\label{sec:productconstruction}

For qubit and qudit codes, constructions exist to build CSS quantum codes from products of classical codes.
The first construction of this sort was the hypergraph product construction \cite{TZ:hypergraph} which can be reformulated as a tensor product of chain complexes \cite{audouxTensorProductsCSS2019}.
We show here how to use it to generate quantum rotor codes.

Given two chain complexes, \(\left(\mathcal{C}, \partial^\mathcal{C}\right)\) and \(\left(\mathcal{D}, \partial^\mathcal{D}\right)\) given by \(\mathbb{Z}\)-modules \(C_j\), \(D_j\) and boundary maps \(\partial_j^\mathcal{C}:C_j\rightarrow C_{j-1}\) and \(\partial_j^\mathcal{D}:D_j\rightarrow D_{j-1}\), we define the tensor product \(\left(\mathcal{E}, \partial^\mathcal{E}\right)\),
where we define its \(\mathbb{Z}\)-modules and boundary maps as
\begin{equation}
    E_{k} = \bigoplus_{i+j=k}C_i\otimes D_j,\quad 
    \partial^{\mathcal{E}}_k = \sum_{i+j=k} \partial^\mathcal{C}_i\otimes \id_{D_j} + (-1)^{i}\id_{C_i}\otimes \partial^\mathcal{D}_j.
\end{equation}

The tensor product of groups, \(A\otimes B\), is defined by the following properties
\begin{align}
    \forall (a, a^\prime)\in A^2,\; \forall (b,b^\prime)\in B^2,\; \forall n\in\mathbb{Z},\; &&na\otimes b &= a \otimes nb\\
   && (a+a^\prime)\otimes b &= a\otimes b + a^\prime\otimes b\\
   && a\otimes (b+b^\prime) &=  a\otimes b + a\otimes b^\prime.
\end{align}
Note the following useful identities:
\begin{equation}
    \mathbb{Z}\otimes\mathbb{Z} = \mathbb{Z},\qquad \mathbb{Z}\otimes \mathbb{Z}_d = \mathbb{Z}_d,\qquad \mathbb{Z}_{d_1}\otimes\mathbb{Z}_{d_2} = \mathbb{Z}_{\gcd(d_1,d_2)}.\label{eq:tensor}
\end{equation}
The homology of \(\mathcal{E}\) is readily obtained from the Künneth theorem, see for instance \cite[Theorem 3B.5]{hatcherAlgebraicTopology2002}.
\begin{thm}[Künneth Theorem] Given two chain complexes \(\mathcal{C}\) and \(\mathcal{D}\) such that the \(C_j\) are free, the homology groups of the product are such that
\begin{equation}
    H_k(\mathcal{E}) \simeq \left[\bigoplus_{i+j=k}H_i(\mathcal{C})\otimes H_j(\mathcal{D})\right]\oplus\left[\bigoplus_{i+j = k-1 }{\rm Tor}\left(H_i(\mathcal{C}),H_j(\mathcal{D})\right)\right].\label{eq:kunneth}
\end{equation}
\end{thm}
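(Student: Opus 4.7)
The plan is to follow the standard algebraic-topology derivation based on the cycle--boundary short exact sequence of $\mathcal{C}$. First, since each $C_j$ is a free $\mathbb{Z}$-module and subgroups of free abelian groups are again free, both $Z_j := \ker \partial_j^{\mathcal{C}}$ and $B_j := \im \partial_{j+1}^{\mathcal{C}}$ are free. Viewing the cycle complex $Z_*$ and the boundary complex $B_*$ as chain complexes with zero differential, I would start from the short exact sequence of chain complexes
\[
0 \to Z_* \hookrightarrow \mathcal{C} \xrightarrow{\partial^{\mathcal{C}}} B_*[-1] \to 0,
\]
which is termwise split because every $B_{j-1}$ is free. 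Hence tensoring with $\mathcal{D}$ preserves exactness and yields a short exact sequence of chain complexes
\[
0 \to Z_* \otimes \mathcal{D} \to \mathcal{E} \to B_*[-1] \otimes \mathcal{D} \to 0.
\]

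Next, I would take the associated long exact sequence in homology. Because $Z_*$ and $B_*$ have zero differential and consist of flat (free) $\mathbb{Z}$-modules, one computes $H_k(Z_* \otimes \mathcal{D}) = \bigoplus_{i+j=k} Z_i \otimes H_j(\mathcal{D})$ and similarly for $B_*$. A direct inspection of the connecting homomorphism shows that, up to a Koszul sign, it is induced by the inclusion $\iota_i: B_i \hookrightarrow Z_i$ tensored with $\mathrm{id}_{H_j(\mathcal{D})}$. Extracting the relevant segment of the long exact sequence then produces the short exact sequence
\[
0 \to \mathrm{coker}(\iota \otimes \mathrm{id})_{k} \to H_k(\mathcal{E}) \to \ker(\iota \otimes \mathrm{id})_{k-1} \to 0.
\]

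To identify these terms with the ingredients on the right-hand side of Eq.~\eqref{eq:kunneth}, I would observe that $0 \to B_i \xrightarrow{\iota_i} Z_i \to H_i(\mathcal{C}) \to 0$ is a length-one free resolution of $H_i(\mathcal{C})$ over $\mathbb{Z}$. Applying $- \otimes H_j(\mathcal{D})$ and reading off $\mathrm{Tor}$ from this resolution gives $\mathrm{coker}(\iota_i \otimes \mathrm{id}) \simeq H_i(\mathcal{C}) \otimes H_j(\mathcal{D})$ and $\ker(\iota_i \otimes \mathrm{id}) \simeq \mathrm{Tor}(H_i(\mathcal{C}), H_j(\mathcal{D}))$. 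Summing over $i+j=k$ and $i+j=k-1$ respectively reproduces exactly the two direct sums in the statement of the theorem; a useful sanity check is that this recovers the identities in Eq.~\eqref{eq:tensor} when the complexes are concentrated in a single degree.

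The main obstacle is to upgrade this short exact sequence into a \emph{split} one, so that $H_k(\mathcal{E})$ is the direct sum rather than a nontrivial extension. For this I would exploit that $C_n$ is free: the surjection $\partial_n: C_n \to B_{n-1}$ admits a $\mathbb{Z}$-linear section $s_n: B_{n-1} \to C_n$, and from any such family $\{s_n\}$ one constructs a chain map $\mathcal{C} \to Z_*$ (with $Z_*$ carrying zero differential) inducing the identity on $H_*(\mathcal{C})$. Tensoring this retraction with $\mathcal{D}$ and passing to homology furnishes, after careful sign bookkeeping via the Koszul rule, a section of the surjection $H_k(\mathcal{E}) \twoheadrightarrow \ker(\iota \otimes \mathrm{id})_{k-1}$, which splits the Künneth sequence. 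The resulting splitting depends on the choice of the $s_n$, which is precisely why the isomorphism in Eq.~\eqref{eq:kunneth} holds only up to non-canonical isomorphism and not naturally in $\mathcal{C}$ and $\mathcal{D}$.
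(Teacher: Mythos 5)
The paper does not prove this theorem --- it cites Hatcher, Theorem~3B.5 --- and your derivation of the short exact sequence faithfully follows that standard route: termwise-split cycle/boundary decomposition, tensoring, the long exact sequence with connecting map $\iota\otimes\mathrm{id}$, and reading $\otimes$ and $\mathrm{Tor}$ off the length-one free resolution $0\to B_i\to Z_i\to H_i(\mathcal{C})\to 0$. That part is correct.

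The splitting step has a genuine gap. There is no chain map $\mathcal{C}\to Z_*$ of the kind you invoke: since $Z_*$ carries the zero differential, such a chain map must annihilate every boundary, but the projection $p_n:C_n\cong Z_n\oplus s_n(B_{n-1})\to Z_n$ restricts to the identity on $Z_{n-1}\supseteq B_{n-1}$ and so kills $B_{n-1}$ only when $B_{n-1}=0$; and since $H_*(Z_*)=Z_*\ne H_*(\mathcal{C})$ in general, no map into $Z_*$ can ``induce the identity on $H_*(\mathcal{C})$''. What you want instead is the composite $\alpha_n=\pi_n\circ p_n:C_n\to H_n(\mathcal{C})$, with $\pi_n:Z_n\to H_n(\mathcal{C})$ the quotient; this \emph{is} a chain map into $H_*(\mathcal{C})$ (endowed with zero differential) and it induces the identity on homology. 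The standard way to split the sequence is then to tensor $\alpha$ with a companion chain map $\beta:\mathcal{D}\to H_*(\mathcal{D})$ and observe that $(\alpha\otimes\beta)_*:H_k(\mathcal{E})\to\bigoplus_{i+j=k}H_i(\mathcal{C})\otimes H_j(\mathcal{D})$ retracts the Künneth inclusion; constructing $\beta$ requires the $D_j$ to be free as well, a hypothesis your write-up (and the statement as reproduced) omits but which holds in every application in the paper. Tensoring $\alpha$ with $\mathrm{id}_{\mathcal{D}}$ alone, as you propose, maps into $H_k(H_*(\mathcal{C})\otimes\mathcal{D})$ and does not directly yield either a section of the surjection onto the $\mathrm{Tor}$ summand or a retraction of the inclusion without a further universal-coefficient computation on $\mathcal{D}$.
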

In order to compute the \({\rm Tor}\) part in our case, we only need to know that it maps pairs of groups to some other group obeying the following rules \cite[Proposition 3A.5]{hatcherAlgebraicTopology2002}
\begin{align}
    {\rm Tor}(A,B) &= {\rm Tor}(B,A), &{\rm Tor}\left(\bigoplus_i A_i, B\right) &= \bigoplus_i{\rm Tor}(A_i,B),\nonumber\\
    {\rm Tor}(A, \mathbb{Z}) &= 0, & {\rm Tor}(\mathbb{Z}_{d_1},\mathbb{Z}_{d_2}) &= \mathbb{Z}_{\gcd(d_1,d_2)}. 
    \label{eq:Tor}
\end{align}

From Eq.~\eqref{eq:kunneth} and Eqs.~\eqref{eq:tensor} and \eqref{eq:Tor}, we see that if we want to construct rotor codes encoding logical rotors we can take the product of chain complexes with free homology and the usual parameter scaling will follow in the same way as with qubit code construction.
If we want a rotor code encoding qubits or qudits, the torsion groups need to agree or we can combine free homology with torsion.

We can thus adopt different construction strategies depending on what we want our logical space to be.
In Appendix~\ref{app:products} we develop in detail three different ways to use this construction to construct rotor codes.
Notably we show how taking the product of a repetition code and a good LDPC code with asymmetric code-lengths yields a code family for encoding logical rotors with parameters
\begin{equation}
    \left\llbracket n, \left(\Theta(\sqrt[3]{n}),0\right) ,\left(\Theta(\sqrt[3]{n}),\Theta(\sqrt[3]{n})\right)\right\rrbracket_{\rm rot}.
\end{equation}
The distance scaling is the same as that of the 3D-toric code, Eq.~\eqref{eq:T3parameters}, or skewed 2D-cylinder code but with a better encoding rate.

We also show that taking a repetition code with a sign twist together with a good LDPC code still with asymmetric code-lengths yields a code family with the same parameter scalings but encoding logical qubits, so described by parameters
 \begin{equation}
        \left\llbracket n, \left(0,2^{\Theta(\sqrt[3]{n})}\right) ,\left(\Theta(\sqrt[3]{n}),\Theta(\sqrt[3]{n})\right)\right\rrbracket_{\rm rot}.
\end{equation}
Again, the distance scaling is the same as for the 3D real projective space code, Eq.~\eqref{eq:RP3*parameters}, or skewed 2D Moebius code, Eq.~\eqref{eq:moebiusparameters}, but with a better encoding rate.
 
 \section{Physical Realizations in Circuit-QED}
	\label{sec:circuit-QED}

Superconducting circuits form a natural platform for the realization of rotor codes, where the superconducting phase $\theta$ of a superconducting island of sufficiently small size can realize the physical rotor degree of freedom, while its conjugate variable $\ell$ represents the (excess) number of Cooper pairs on the island (relative to another island). Thus, we seek to engineer the Hamiltonian or the stabilizer measurements of a rotor code with circuit-QED hardware, namely using Josephson junctions and capacitors between superconducting islands, and possibly inductors, to realize the $X$ and $Z$ terms in the Hamiltonian. To this end, we will use standard procedures for converting an electric circuit to a Hamiltonian \cite{Vool2016, burkardMultilevelQuantumDescription2004, rasmussenSuperconductingCircuitCompanion2021}. This passive approach to quantum error correction has been discussed in circuit-QED systems in Ref.~\cite{Doucot_2012}, and pursued to obtain the Hamiltonian of the GKP code \cite{RBCD, le2019, conrad2021, kolesnikow2023} and of the surface code \cite{sametiSuperconductingQuantumSimulator2017}. In what follows, in order to avoid confusion, we will also denote matrices with bold symbols.

	Naturally, in such superconducting systems, the number of Cooper pairs on an island is confined to a range which is set by the capacitive couplings that the island has to other conducting structures. In particular, we imagine that each island $i$ is capacitively coupled to a common ground-plane via a sufficiently large capacitance $C_{g_i}$ and for simplicity we will take $C_{g_i}=C_g$. This sets an energy scale for the charge fluctuations on each island. In the absence of any further couplings, the Hamiltonian would be $H=4 E_{C_{g}}\sum_i \hat{\ell}_i^2$ with \begin{align}
 E_{C_g}=\frac{e^2}{2C_{g}},
 \end{align}
	where $e$ is the charge of a single electron.
	We can then consider the effect of adding a large capacitor $C$ between two islands $1$ and $2$, with $C\gg C_g$.
  The Hamiltonian of two such islands is
  \begin{equation}
   H=\frac{4 e^2}{2}\sum_{i,j=1,2}\hat{\ell}_i (\bs{C}^{-1})_{ij} \hat{\ell}_j,
\end{equation}
    with capacitance matrix
    \begin{align}
        \bs{C}=\begin{pmatrix}C+C_{g} & -C \\-C & C+C_{g} \end{pmatrix}.
  \end{align}
  The eigenvalues of $\bs{C}^{-1}$ are $1/C_g$ (eigenvector $(1,1)$) and $\frac{1}{2C+C_g}$ (eigenvector $(1,-1)$) hence
  \begin{align}
   H=2 E_{C_g}
   (\hat{\ell}_1+\hat{\ell}_2)^2+\frac{e^2}{2C+C_g}(\hat{\ell}_1-\hat{\ell}_2)^2.
   \label{eq:CC}
      \end{align}
     For large $C \gg C_g$, the second term is small and the first term enforces the constraint $\hat{\ell}_1+\hat{\ell}_2=0$. This shows that a rotor code with weight-2 checks of the form $e^{i (\hat{\ell}_i+\hat{\ell}_j)}$ can be fairly directly realized using capacitors as long as the pair $(i,j)$ is disjoint from other pairs $(k,l)$. In the Sections \ref{subsec:fourphase}, \ref{subsec:zeropi} and Appendix \ref{app:SW4phase} we will refer to this idea as gapping out the `agiton' variable $\frac{1}{2}(\ell_i+\ell_j)$. The code Hamiltonian is realized in the low-energy no-agiton sector with $\ell_i+\ell_j=0$  where the `exciton' variable $\frac{1}{2}(\ell_i-\ell_j)$ can still vary.
     
     The Hamiltonian of a superconducting Josephson junction between two islands $i$ and $j$ is given by $H=-E_J\cos(\hat{\theta}_i-\hat{\theta}_j)$, if we neglect the small capacitance induced by the junction between the nodes. Hence, this naturally represents a code constraint of weight-2. 

To understand the challenge of engineering an electric network which implements all code constraints, consider the following approach. One designs a capacitive network of nodes (all coupled to ground) which is composed of disjoint connected components $V_m$ with $m=1,2,\ldots$, with the nodes in each connected component $\mathrm{V}_m$ connected by some large capacitance $C \gg C_g$. The capacitance matrix $\bs{C}$ will have the smallest eigenvalue of $C_g$ associated with eigenvectors of the form $(1,1,\ldots, 1)$ on the support of any of the connected components. Hence one can obtain a set of capacitive code constraints $(\sum_{i\in \mathrm{V}_m} \hat{\ell}_i)^2$ for non-overlapping sets $\mathrm{V}_m$. However, these capacitive constraints act only on non-overlapping sets of nodes and thus we need a way of `identifying' nodes. However, if we would do this too strongly, then it was a priori incorrect to treat each connected component separately. Instead, we want to identify the node variables only in the subspace where one obeys the constraints $\sum_{i\in \mathrm{V}_m} \hat{\ell}_i=0$ whose violation costs energy $E_{C_g}$. This can be in principle be done using inductors which should act as closed circuits at sufficiently low energies. Furthermore, inside the subspace where the capacitive constraints are obeyed, the pair-wise Josephson junction terms should be treated perturbatively so as to generate face terms which express the joint tunneling of multiple Cooper pairs. In Sections \ref{subsec:zeropi} and \ref{subsec:kitaevmirror} we will show this approach for some particular known examples.

	\subsection{Subsystem Rotor Codes and Bacon-Shor Code}
\label{subsec:BS} 

One can easily generalize the definition of stabilizer rotor codes, Definition \ref{def:rotorcode} to subsystem rotor codes where one starts with a group $\mathcal{G}$ generated by non-commuting generalized Pauli operators $X(\bs{m})$ and $Z(\bs{\phi})$. The reason to study subsystem \cite{SBT:sub} (or Floquet) codes is that one can potentially construct non-trivial codes with gauge checks operators acting only on two rotors. In particular, due to the interest of rotor codes in circuit-QED, we consider codes of a particular restricted form with 2-rotor $X$ checks, which, as Hamiltonians, relate to a Josephson junction between two superconducting islands, i.e. $H \sim -\cos(\hat{\theta}_a-\hat{\theta}_b)$. We prove the following:

\begin{prop}[Josephson-Junction Based Subsystem Codes]
Given $n$ rotors, let \(\mathcal{G}=\left\langle e^{i \bs{h}_j^X \cdot \bs{\hat{\theta}} }, e^{i \varphi \bs{h}_k^Z \cdot \bs{\hat{\ell}} }\;\middle\vert\;\forall j=1,\ldots, r_x, \forall \varphi \in \mathbb{T}, \forall k=1,\ldots, r_z\right\rangle\) where the vector $\bs{h}_j^X$ is of the general restricted form $\bs{h}_j^X=(0,\ldots, 1,0, \ldots,0,-1, \ldots, 0)$. Let $\mathcal{C}(\mathcal{G})$ be the group of generalized Paulis which commute with all elements in $\mathcal{G}$. Let $\mathcal{S}=\mathcal{G} \cap \mathcal{C}(\mathcal{G})$ be the stabilizer center \footnote{In this notation we ignore any phases by which elements in the center can be multiplied due to the non-commutative nature of $\mathcal{G}$, so that $\mathcal{S}$ is a stabilizer group with a $+1$ eigenspace.}. Then either $\mathcal{C}(\mathcal{G})=\mathcal{S}$ (i.e., one encodes nothing) or $\mathcal{C}(\mathcal{G})=\left\langle \mathcal{S}, \overline{X}_i(m), \overline{Z}_i(\phi), \;\middle\vert\;\forall \phi\in \mathbb{T},m\in \mathbb{Z}, i=1,\ldots,k \right\rangle$ (i.e. one encodes some $k > 1$ logical rotors). 
\label{prop:subsys}
\end{prop}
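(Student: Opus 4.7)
The plan is to turn the group-theoretic data of $\mathcal{G}$ into purely algebraic invariants from which the rotor-only dichotomy falls out. First, I would characterize $\mathcal{C}(\mathcal{G})$: every generalized Pauli, up to phase, has the form $Z(\bs{\phi})X(\bs{m})$ with $\bs{m}\in\mathbb{Z}^n$ and $\bs{\phi}\in\mathbb{T}^n$, and by Eq.~\eqref{eq:commutationmulti} it commutes with a gauge generator $X(\bs{h}_j^X)=X(e_a-e_b)$ iff $\phi_a\equiv\phi_b\pmod{2\pi}$, and with $Z(\varphi\,\bs{h}_k^Z)$ for all $\varphi\in\mathbb{T}$ iff $\bs{m}\cdot\bs{h}_k^Z=0$. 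Building the graph $G$ on $n$ vertices whose edges are the supports of the $\bs{h}_j^X$, with connected components $V_1,\ldots,V_c$, the centralizer is parametrised by pairs $(\bs{m},\bs{\phi})$ with $\bs{m}\in\ker(H_Z^T)$ and $\bs{\phi}$ constant on each $V_j$; the center $\mathcal{S}$ sits inside it as the further restriction $\bs{m}\in\im(H_X)$ and $\bs{\phi}\in\im(H_Z:\mathbb{T}^{r_z}\to\mathbb{T}^n)$.

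Next I would compute the $X$-logical group $\mathcal{L}_X:=\ker(H_Z^T)/\bigl(\im(H_X)\cap\ker(H_Z^T)\bigr)$. Because each row of $H_X$ is an edge vector $e_a-e_b$, elementary graph homology gives $\im(H_X)=\{\bs{v}\in\mathbb{Z}^n:\sum_{i\in V_j}v_i=0\ \text{for all}\ j\}$, so the component-sum map $\rho:\mathbb{Z}^n\to\mathbb{Z}^c$, $\rho(\bs{v})_j=\sum_{i\in V_j}v_i$, has kernel exactly $\im(H_X)$ and descends to an injection $\mathcal{L}_X\hookrightarrow\mathbb{Z}^c$. Hence $\mathcal{L}_X$ is finitely generated and torsion-free, i.e., $\mathcal{L}_X\cong\mathbb{Z}^k$ for some $k\leq c$. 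This step is the heart of the argument: the restricted form of the $X$-gauges prevents any analogue of the weak-boundary phenomenon of Eq.~\eqref{eq:weakboundary}, since $\rho$ takes values in the torsion-free group $\mathbb{Z}^c$, so $d\bs{v}\in\im(H_X)$ forces $\bs{v}\in\im(H_X)$.

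Finally I would identify $\mathcal{L}_Z$ via the commutator pairing $\mathcal{L}_X\times\mathcal{L}_Z\to\mathbb{T}$, $(\bs{m},\bs{\phi})\mapsto\bs{m}\cdot\bs{\phi}\pmod{2\pi}$, which is well defined by the characterizations above. Non-degeneracy in the $X$-slot is the component-sum argument again. Non-degeneracy in the $Z$-slot reduces to the claim that any $\bs{\phi}\in\mathbb{T}^n$ with $\bs{m}\cdot\bs{\phi}\equiv 0\pmod{2\pi}$ for every $\bs{m}\in\ker(H_Z^T)$ lies in $\im(H_Z:\mathbb{T}^{r_z}\to\mathbb{T}^n)$; this follows from Pontryagin duality on the exact sequence $0\to\ker(H_Z^T)\to\mathbb{Z}^n\xrightarrow{H_Z^T}\im(H_Z^T)\to 0$ together with divisibility of $\mathbb{T}$, so that any character of the subgroup $\im(H_Z^T)\subseteq\mathbb{Z}^{r_z}$ extends to a character of $\mathbb{Z}^{r_z}$. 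Non-degeneracy then yields $\mathcal{L}_Z\cong\Hom(\mathcal{L}_X,\mathbb{T})\cong\mathbb{T}^k$; choosing dual bases $(\bs{v}_i)_{i=1}^k$ of $\mathcal{L}_X$ and $(\bs{\psi}_i)_{i=1}^k$ of $\mathcal{L}_Z$, and setting $\overline{X}_i(m)=X(m\bs{v}_i)$ and $\overline{Z}_i(\phi)=Z(\phi\bs{\psi}_i)$, gives the claimed rotor generators. The dichotomy is then $k=0$ (so $\mathcal{C}(\mathcal{G})=\mathcal{S}$) versus $k\geq 1$ (so one encodes $k$ logical rotors). The main technical obstacle is the clean packaging of the $Z$-slot Pontryagin-duality step, since $\mathbb{Z}^n/\ker(H_Z^T)$ is in general a subgroup, not a direct summand, of $\mathbb{Z}^{r_z}$, and one must invoke injectivity of $\mathbb{T}$ in the category of abelian groups to carry out the required character extension.
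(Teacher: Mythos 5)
Your proof is correct and takes a genuinely different route from the paper's: it works on the $X$-side where the paper works on the $Z$-side.

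The paper's proof characterizes the $Z$-part of $\mathcal{C}(\mathcal{G})$ directly: commuting with every gauge $X(e_a-e_b)$ forces the phase vector $\bs{\phi}$ to be constant on each connected component of the graph whose edges are the supports of the $\bs{h}_j^X$, so the $Z$-centralizer is a connected torus $\mathbb{T}^c$ ($c$ the number of components); because the $Z$-part of $\mathcal{G}$ is also a connected subgroup (continuous in $\varphi$), the quotient is connected, hence a torus, and no finite (qudit) factor can appear. You instead go after $\mathcal{L}_X$: you observe that the restricted $e_a-e_b$ form makes $\im(H_X)$ exactly the kernel of the component-sum map $\rho:\mathbb{Z}^n\to\mathbb{Z}^c$ (this is just freeness of $H_0$ of a $1$-complex with integer coefficients), so $\mathcal{L}_X$ embeds into $\mathbb{Z}^c$ and is torsion-free. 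This is a clean algebraic articulation of why the weak-boundary phenomenon of Eq.~\eqref{eq:weakboundary} cannot occur under the stated hypothesis, and it also makes the Remark following the proposition transparent: allowing rows $e_a+e_b$ in $H_X$ lets $\coker(H_X)$ acquire torsion. You then recover the $Z$-side by the Pontryagin-duality argument (injectivity of $\mathbb{T}$ in the category of abelian groups), which is a heavier tool than the paper invokes but closes the argument rigorously where the paper's final step is somewhat terse. Both proofs are sound; yours has the advantage of isolating the algebraic mechanism and pairing it with the torsion discussion in Section~3.2, while the paper's is shorter and more physical (continuity of the gauge phase shifts). One small note: the proposition's ``$k>1$'' should read $k\geq 1$, which is how you correctly state the dichotomy at the end.
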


\begin{proof}
 Consider a Pauli $Z(\bs{\phi})$ that should commute with elements of $\mathcal{G}$. In order to commute with some $e^{i \bs{h}_j^X \cdot \bs{\hat{\theta}}}$, with $\bs{h}_j^X$ having nonzero entries on rotor $a$ and $b$, the support of $Z(\bs{\phi})$ on rotor $a$ and $b$ must be of the form $e^{i \phi(\hat{\ell}_a+\hat{\ell}_b)}$ for \emph{any} $\phi \in \mathbb{T}$ (or its Hermitian conjugate). Notice that the case in which the support is of the form $e^{i \pi (\hat{\ell}_a - \hat{\ell}_b)}$ is a special case of $e^{i \phi (\hat{\ell}_a + \hat{\ell}_b)}$, since $e^{i \pi(\hat{\ell}_a+\hat{\ell}_b)}=e^{i \pi(\hat{\ell}_a-\hat{\ell}_b)}e^{i 2\pi 
\hat{\ell}_b}=e^{i \pi(\hat{\ell}_a-\hat{\ell}_b)}$. Now consider the support of the collection of vectors $\bs{h}_j^X$: this could break down in several disjoint connected components $V_m$. 
Then to make some operator $Z(\bs{\phi})$ commute with all $\bs{h}_j^X$, without loss of generality, its support on a connected component $V_m$ must be of the form $e^{i\phi \sum_{i\in V_m} \hat{\ell}_i}$ for some {\em arbitrary} $\phi \in \mathbb{T}$. In subsystem codes logical operators are operators that are in $\mathcal{C}(\mathcal{G})$, but not in $\mathcal{G}$, i.e., not in the stabilizer center $\mathcal{S}$. Thus, we just need to exclude the possibility that an operator $Z(\bs{\phi})$ of the previous form is not in $\mathcal{G}$ for some discrete set of $d$ values of $\phi$ for any $d$. However, since the generators of the $Z$ part of $\mathcal{G}$ are of the form $e^{i \phi \bm{h}_k^Z \cdot \bs{\hat{\ell}}}$, $ \forall \phi \in \mathbb{T}$, this cannot be the case.
\end{proof}

\begin{remark}
    If $\bs{h}_j^X$ can be of the form $\bs{h}_j^X=(0,\ldots, 1,0, \ldots,0,1, \ldots, 0)$, then the no-go result expressed in the Proposition \ref{prop:subsys} would not hold. Such constraint would correspond to a Hamiltonian with terms $-\cos(\hat{\theta}_a+\hat{\theta}_b)$ which would model a coherent increase or decrease of the number of Cooper pairs on both islands $a$ and $b$ by 1 (instead of the tunneling of a Cooper pair through the junction) which is not immediately physical.\\
\end{remark}

The following example demonstrates that even though we have defined (stabilizer) rotor codes in Definition \ref{def:rotorcode} with checks for continuous values $\bs{\varphi} \in \mathbb{T}^{r_z}$ in Eq.~\eqref{eq:stabilizergroup} (no modular constraints), a subsystem rotor code can have a stabilizer subgroup in which $Z$-checks only appear for discrete values of $\bs{\varphi}$.
\begin{examp}
Consider an even-length chain of $n$ rotors with \(\mathcal{G}=\left\langle e^{i(\hat{\theta}_j-\hat{\theta}_{j+1})}, e^{i\varphi(\hat{\ell}_j-\hat{\ell}_{j+1})}\;\middle\vert\;\forall j=1,2,\ldots,n-1,\forall \varphi\in \mathbb{T}\right\rangle\). The stabilizer subgroup is $\mathcal{S}=\mathcal{G} \cap \mathcal{C}(\mathcal{G})= e^{i\pi \sum_{j=1}^{n/2} (\hat{\ell}_{2j-1}-\hat{\ell}_{2j})}$ as this is the only element which is in $\mathcal{G}$ and which commutes with all elements in $\mathcal{G}$. The subsytem code encodes one logical rotor, namely $\mathcal{C}(\mathcal{G})\backslash \mathcal{G}$ is generated by $\overline{Z}(\phi)=e^{i \frac{\phi}{n} \sum_{j=1}^n \hat{\ell}_j}$ and $\overline{X}(m)=e^{i m \sum_{j=1}^n \hat{\theta}_j}$ obeying Eq.~\eqref{eq:paulicommutation}. \\
\label{exam}
\end{examp}

The point of Proposition \ref{prop:subsys} is that it demonstrates that one cannot capture the difference between a M\"{o}bius strip and a cylinder using such restricted (Josephson-junction based) weight-2 checks. 

\subsubsection{Thin Rotor Bacon-Shor code}
An example of a subsystem code is the rotor Bacon-Shor code, of which we consider a thin example. 
For such thin Bacon-Shor code, in analogy with the qubit Bacon-Shor code \cite{aliferisSubsystemFaultTolerance2007}, the group $\mathcal{G}$ is generated by
\begin{align}
     G_j^X&=e^{i(\hat{\theta}_j-\hat{\theta}_{j+1})},& j&=1,\ldots, N-1, \\
     \tilde{G}_{j}^X&=e^{i(\hat{\theta}_{N+j}-\hat{\theta}_{N+j+1})},&j&=1,\ldots, N-1,  \\
G_j^Z(\phi)&=e^{i\varphi(\hat{\ell}_j+\hat{\ell}_{N+j})},& j&=1,\ldots, N, \forall \varphi \in \mathbb{T}.
\end{align}
The logical rotor operators can be taken as $\overline{Z}(\phi)=e^{i \phi \sum_{i=1}^N \hat{\ell}_i}$ and  $\overline{X}(m)=e^{i m (\hat{\theta}_k-\hat{\theta}_{N+k})}$, where we can choose $k$ to be any $k$ in $\{1, \dots, N\}$, forming a row and column which intersect on a single rotor as in the standard Bacon-Shor code \cite{aliferisSubsystemFaultTolerance2007}. The `double column' operators and `double row' operators generate the stabilizer group $\mathcal{S}=\mathcal{G} \cap \mathcal{C}(\mathcal{G})$:
\begin{align}
S_j^X&=G_j^X (\tilde{G}_{j+1}^X)^{-1}=e^{i(\hat{\theta}_j-\hat{\theta}_{j+1}-\hat{\theta}_{N+j}+\hat{\theta}_{N+j+1})},\\
S^Z(\varphi)&=\prod_{j=1}^NG_j^Z(\varphi) = e^{i\varphi\sum_{j=1}^{2N} \hat{\ell}_k}, \forall \varphi.
\label{eq:stab-BS}
\end{align}
The $X$ distance of the code is $d_X=2$ since $\overline{X}(m)$ acts on at least two rotors and the minimum in the definition of the $X$ distance in Eq.~\eqref{eq:distanceX} is achieved at $m=1$.
For the logical $Z$ distance we consider $\delta_Z$ in Eq.~\eqref{eq:distanceZ}. One notices that one can make a slightly spread-out logical $\overline{Z}(\phi)=e^{i\frac{\phi}{2}(\sum_{j=1}^N \hat{\ell}_j-\sum_{j=N+1}^{2N}\hat{\ell}_j)}$, which extends over two rows, and $\delta_Z\sim N$ as the expression in Eq.~\eqref{eq:distanceZ} is linear in $N$ times a constant $C(\phi)$ which depends on the choice of $\phi$ but which is bounded away from 0 for all $\phi$.

The targeted (dimensionless) Hamiltonian associated with this code is of the form
\begin{equation}
    H_{\rm thin-BS}=-\sum_{j=1}^{N-1}\left[\cos(\hat{\theta}_j-\hat{\theta}_{j+1})+ \cos(\hat{\theta}_{N+j}-\hat{\theta}_{N +j+1})\right]+\sum_{j=1}^N (\hat{\ell}_j+\ell_{N+j})^2.
    \label{eq:HthinBS}
\end{equation}
In principle $H_{\rm thin-BS}$ has a spectrum in which each eigenlevel is infinitely-degenerate, i.e. in each degenerate eigenspace, we can build a rotor basis $\ket{\overline{\ell}}$ with $\overline{X}(m) \ket{\overline{\ell}}=\ket{\overline{\ell+m}}$. In practice, the Hamiltonian is approximately realized by the circuit in Fig.~\ref{fig:BS}, where each node is connected to a ground plane via the capacitance $C_g\ll C$. The Hamiltonian, omitting the Josephson capacitances and assuming that the Josephson junctions are equal, of the network then equals:
\begin{multline}
    H_{{\rm thin-BS,circuit}}=-E_J\sum_{j=1}^{N-1}\left[ \cos(\hat{\theta}_j-\hat{\theta}_{j+1})+ \cos(\hat{\theta}_{N + j}-\hat{\theta}_{N+j+1})\right]+\\ 2E_{C_g}\sum_{j=1}^N (\hat{\ell}_j+\hat{\ell}_{N+j})^2+ 
    \sum_{j=1}^N\frac{ e^2}{2C+C_g}(\hat{\ell}_j-\hat{\ell}_{N+j})^2,
\end{multline}
where the large capacitance $C \gg C_g$ suppresses the last term.

\begin{figure}[htb]
\centering 
\includegraphics[height=3cm]{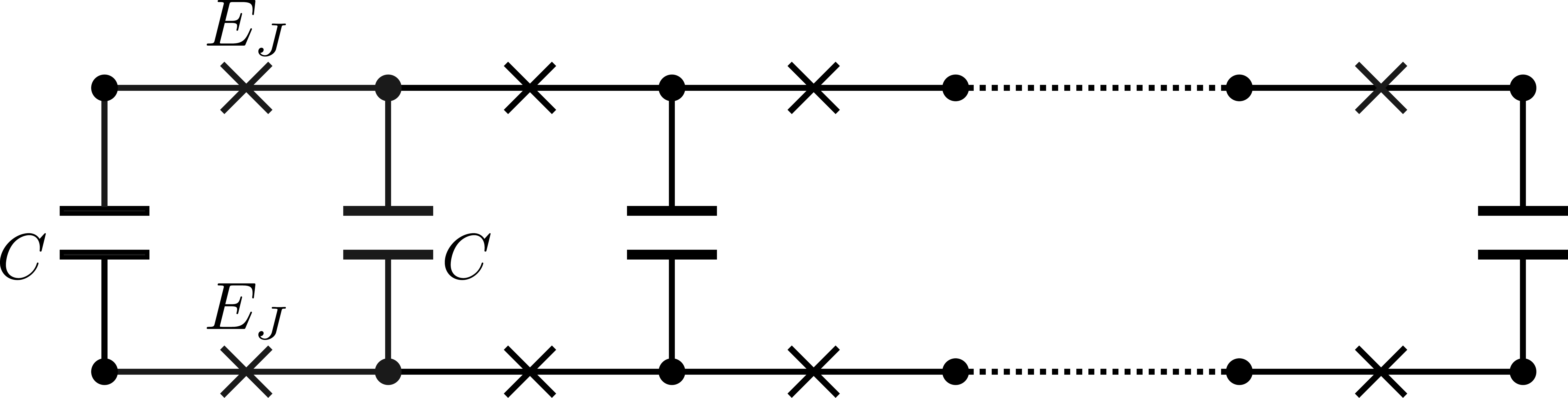}
\caption{Electric circuit for the thin Bacon-Shor code: capacitances to ground for each node are not shown.}
\label{fig:BS}
\end{figure}

	It is clear that, in general, the constraints imposed by parity check matrices $H_X$ and $H_Z$ do not immediately translate to a circuit-QED Hamiltonian constructed from capacitors and Josephson junctions. However, two protected superconducting qubits exist, namely the $0$-$\pi$ qubit \cite{brooksProtectedGatesSuperconducting2013} and the current-mirror qubit by Kitaev \cite{kitaevProtectedQubitBased2006} which can be identified as rotor codes based on tessellating $\mathbb{RP}^2$ and a M\"{o}bius strip respectively, as will be shown in the next sections.

\subsection{The Four-Rotor Circuit}
\label{subsec:fourphase}
In this section, we introduce a fundamental circuit, shown in Fig.~\ref{fig:4phase}, that will be the building block for obtaining a rotor code face term with four rotors (with alternating signs) when we work in the regime $C \gg C_g$. The Hamiltonian of the circuit can also be interpreted as the approximate code Hamiltonian associated with a four-rotor Bacon-Shor code discussed in Subsec.~\ref{subsec:BS} encoding a logical rotor. One can provide analytical expressions of its spectrum, in full analogy with the Cooper-pair box spectrum \cite{cottetThesis, kochChargeinsensitiveQubitDesign2007}, as we will see. When we treat the Josephson junction terms perturbatively, we call this circuit element the four-phase gadget realizing an effective four-rotor face term.
 
\begin{figure}
    \centering
    \includegraphics[height=6cm]{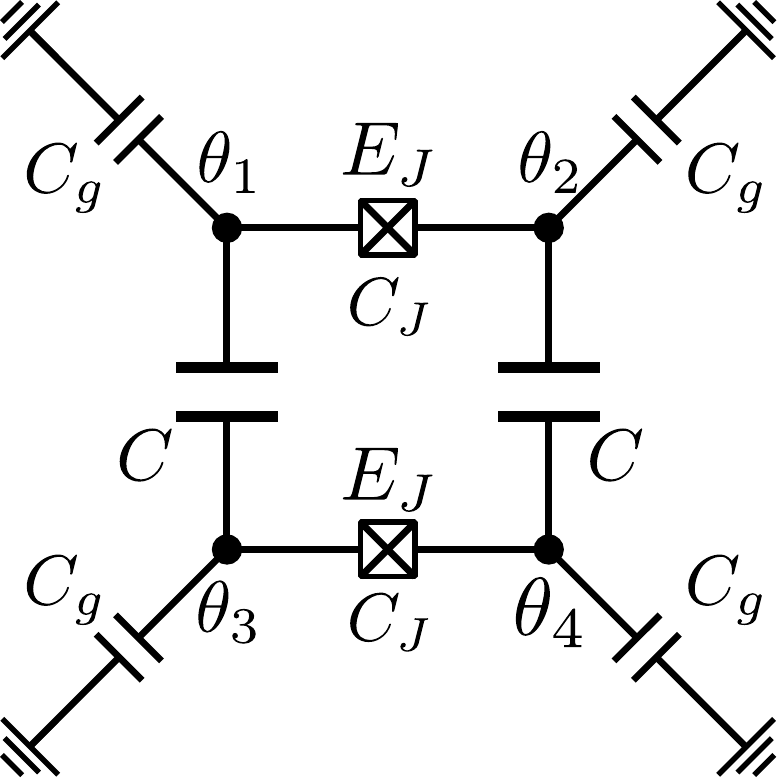}
    \caption{Electric circuit of the four-phase gadget (showing capacitances to ground).}
    \label{fig:4phase}
\end{figure}

For the four-rotor Bacon-Shor code, the group $\mathcal{G}$ is generated by
 \begin{align}
         \mathcal{G}=\left\langle e^{i \varphi(\hat{\ell}_1+\hat{\ell}_3)},e^{i \varphi'(\hat{\ell}_2+\hat{\ell}_4)}, e^{i(\hat{\theta}_2-\hat{\theta}_1)}, e^{i (\hat{\theta}_3-\hat{\theta}_4)}\;\middle\vert\;\varphi,\varphi'\in \mathbb{T}\right\rangle.
 \end{align}
We define the gauge rotor logicals in $\mathcal{G}$ as $\overline{X}_g(m)=e^{i m (\hat{\theta}_2-\hat{\theta}_1)}$ and $\overline{Z}_g(\phi)=e^{i \phi(\hat{\ell}_2+\hat{\ell}_4)}$. The stabilizer is generated by
    \begin{equation}
    \mathcal{S}=\left\langle S^X(1)\equiv e^{i( \hat{\theta}_1-\hat{\theta}_2-\hat{\theta}_3+\hat{\theta}_4)},
    S^Z(\phi)\equiv e^{i\varphi(\hat{\ell}_1+\hat{\ell}_2+\hat{\ell}_3+\hat{\ell}_4)}\;\middle\vert\; \forall \varphi \in \mathbb{T}\right\rangle.
    \end{equation}
    The encoded logical rotor has logical operators \[\overline{Z}_l(\phi)=e^{i \phi(\hat{\ell}_1+\hat{\ell}_2)},\; \overline{X}_l(m)=e^{i m(\hat{\theta}_2-\hat{\theta}_4)}.\]
     
   A basis for the four-rotor space is thus $\ket{\ell_l,\ell_g,\phi_x\in \mathbb{T},s_z\in \mathbb{Z}}$ where $\phi_x$ is the eigenvalue of 
   the $S^X(1)$ check, i.e. $S^X(1)=e^{i\phi_x}$,
   and $s_z=\ell_1+\ell_2+\ell_3+\ell_4$ is the syndrome of the $S^Z$-check. Here $\ket{\ell_g}$ is defined by $\overline{X}_g(m)\ket{\ell_g}=\ket{(\ell+m)_g}$, $\overline{Z}_g(\phi)\ket{\ell_g}=e^{i \phi \ell}\ket{\ell_g}$.
   The targeted (dimensionless) Hamiltonian, a special case of Eq.~\eqref{eq:HthinBS}, is
   \begin{align}
    H_{\rm 4-BS} =(\hat{\ell}_2+\hat{\ell}_4)^2+(\hat{\ell}_1+\hat{\ell}_3)^2-\cos(\hat{\theta}_1-\hat{\theta}_2)-\cos(\hat{\theta}_3-\hat{\theta}_4).
    \label{eq:4-BS}
    \end{align}

Looking at the circuit in Fig.~\ref{fig:4phase} and setting $C_J=0$ and neglecting capacitive terms $\sim 1/C$, we have the circuit-QED Hamiltonian

\begin{equation}
H_{C_J=0} \approx 2 E_{C_g} \bigl(\hat{\ell}_1 + \hat{\ell}_3 \bigr)^2 + 2 E_{C_g} \bigl(\hat{\ell}_2 + \hat{\ell}_4 \bigr)^2 -E_J \cos(\hat{\theta}_1 - \hat{\theta}_2) - E_J \cos(\hat{\theta}_3 - \hat{\theta}_4).
\end{equation}

For simplicity, we now assume $E_J=2 E_{C_g}$ so that we can remove the energy scales. We can also include the effect of the Josephson capacitances $C_J$ in this description, by using a first-order approximation in $C_J/C_g$ in a Taylor expansion of $\bs{C}^{-1}$. Applying this to the dimensionless Hamiltonian, we get an extension of Eq.~\eqref{eq:4-BS}, namely
\begin{align}
    H = (\hat{\ell}_2+\hat{\ell}_4)^2+(\hat{\ell}_1+\hat{\ell}_3)^2+\epsilon(\hat{\ell}_1+\hat{\ell}_3)(\hat{\ell}_2+\hat{\ell}_4)-\cos(\hat{\theta}_1-\hat{\theta}_2)-\cos(\hat{\theta}_3-\hat{\theta}_4),
    \label{eq:4-BS-ext}
    \end{align}
where $\epsilon=4 C_J/C_g$.
All terms in $H$ commute with $\mathcal{S}$ and the logical operators $\overline{Z}_l(\phi), \overline{X}_l(m)$, and thus $H$ has a degenerate spectrum with respect to the logical rotor. Each eigenlevel can additionally be labelled by the quantum numbers $\phi_x$ and $s_z$. It remains to consider the spectrum of $H$ with respect to the gauge logical rotor: this spectrum is identical to that of a Cooper-pair box in the presence of some off-set charge set by $s_z$ and $\epsilon$, and a flux-tunable Josephson junction with flux set by $\phi_x$. To derive this explicitly we use
\begin{align}
        \bra{\ell_1,\ell'_g,\phi_x,s_z}(\hat{\ell}_1+\hat{\ell}_3)^2 \ket{\ell_1,\ell_g, \phi_x, s_z}=\bra{\ell_1,\ell'_g,\phi_x,s_z}(s_z-\hat{\ell}_2-\hat{\ell}_4)^2 \ket{\ell_1,\ell_g, \phi_x, s_z},
    \end{align}
   and introduce the operators $e^{i\hat{\theta}_g}=e^{i(\hat{\theta}_1-\hat{\theta}_2)}$ and $\hat{\ell}_g=\hat{\ell}_2+\hat{\ell}_4$. Using a transformation $e^{i\hat{\theta}_g'}=e^{i(\hat{\theta}_g-\frac{\phi_x}{2})}$ and dropping primes and the subscript label $g$, Eq.~\eqref{eq:4-BS-ext} can be rewritten as
\begin{align}
H=(2-\epsilon)\left(-i\frac{\partial}{\partial \theta}-\frac{s_z}{2}\right)^2+\frac{s_z^2}{2}(1+\epsilon/2)-2\cos(\phi_x/2) \cos(\hat{\theta}).
     \label{eq:heff-g-ext}
     \end{align}
     In the sector labeled by $s_z$ and $\phi_x$, this Hamiltonian has a spectrum $E_n(s_z,\phi_x)$ and eigenfunctions $\psi_n(\theta)$ which are determined by the solutions of the Mathieu equation. The analysis of the Cooper-pair box, which has the transmon qubit as a particular case \cite{kochChargeinsensitiveQubitDesign2007}, is precisely done in this fashion \cite{cottetThesis}. For the Cooper-pair box, $s_z/2$ models an off-set charge and $\phi_x/2$ models the effect of an external flux in a flux-tunable Josephson junction. Unlike the Cooper-pair box, the spectrum also has a $s_z$-dependent shift and the groundstate sits in the sector with smallest $s_z=0$, see Fig.~\ref{fig:bs_levels}.
To convert to the standard form of the Mathieu equation, one defines $g_n(\theta)=e^{-2 is_z \theta} \psi_n(2\theta)$, (with $\psi(\theta+2\pi)=\psi(\theta)$, thus translating into a different boundary condition for $g_n(\theta)$), setting $\epsilon=0$ for simplicity:
\begin{align}
\left[-\frac{\partial^2}{\partial \theta^2}-4\cos(\phi_x/2)\cos(2\theta)+2s_z^2-2E_n(s_z,\phi_x)\right] g_n(\theta) =0
\end{align}
It is known that the spectrum of the effective Hamiltonian in Eq.~\eqref{eq:heff-g-ext}, depends only on the parity of the integer $s_z$ (besides the $s_z$-dependent shift) \cite{cottetThesis, kochChargeinsensitiveQubitDesign2007}. We show the first three energy levels for even and odd $s_z$ as a function of $\phi_x$ in Fig.~\ref{fig:bs_levels}.

\begin{figure}
    \centering
    \includegraphics{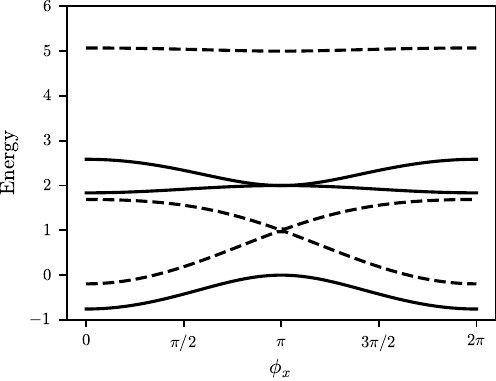}
    \caption{First three energy bands of the four-rotor Bacon-Shor code as a function of $\phi_x$ for even (solid) and odd (dashed) $s_z$ with $\epsilon=0$. The ground states are associated with even $s_z$ at $\phi_x=0$ and $\phi_x= 2 \pi$. When $\phi_x=\pi$, the Josephson term is zero, and the degeneracies between the first and second band at odd $s_z$, and second and third band at even $s_z$, are true degeneracies associated with the charging term.}
    \label{fig:bs_levels}
\end{figure}

The encoded logical rotor is protected when events which change the logical rotor state are (energetically) suppressed: such events are small fluctuations of phase and Cooper-pair jumps on each rotor, such that two rotors are affected as characterized by the weight-2 logical operators. Some of these events could happen in conjunction with the excitation of the gauge logical rotor or changes in stabilizer quantum number, in particular when these additional events cost little energy. If one encodes into the logical rotor at $s_z=0$ at $\phi_x=2\pi$, one observes from the spectrum in Fig.~\ref{fig:bs_levels} that there is a gap towards the second excited state (at $s_z=0$ and $\phi_x=2\pi$), and one is at the minimum for continuous variations in $\phi_x$, so this seems a good working point. In addition, there is a gap to the state $s_z=1$ at $\phi_x=2\pi$, suppressing processes which change $s_z$.

One could generalize this analysis of the four-rotor Bacon-Shor code to the $2N$-rotor Bacon-Shor code where there will be one eigenvalue $s_z=\sum_{j=1}\ell_j$ and $N-1$ eigenvalues $e^{i\phi_{x,j}}$ of the stabilizers $S_j^X$ in Eq.~\eqref{eq:stab-BS} as quantum numbers, besides $N-1$ gauge logical rotors and 1 encoded logical rotor. Again the Hamiltonian, corresponding to the circuit in Fig.~\ref{fig:BS} in the regime $C \gg C_g$ will only act on the gauge logical rotors ($\hat{\ell}_{g_j}=\hat{\ell}_j+\hat{\ell}_{j+N}$) as 
\[
H=\sum_{j=1}^{N-1}\hat{\ell}_{g_j}^2+(s_z-\sum_j \hat{\ell}_{g_j})^2-2\sum_{j=1}^{N-1}\cos(\phi_{x,j})\cos(\hat{\theta}_j-\hat{\theta}_{j+1}).
\]
We note that the degeneracy of the spectrum with respect to $\overline{Z}(\phi)=e^{i\phi \sum_{j=1}^N \hat{\ell}_j}$ for all $\phi$ is only lifted when connecting the last nodes on this strip in twisted fashion as is done in the M\"obius strip qubit, discussed in Section \ref{subsec:kitaevmirror}.

\subsubsection{The Four-Phase Gadget}

For the four-phase gadget, we study a particular regime of the circuit in Fig.~\ref{fig:4phase}: namely, the effect of the Josephson junctions is treated perturbatively (while the dependence on $C_J$ is kept in analytical form). In the language of the previous analysis this corresponds to considering the Cooper-pair box spectrum when $E_J \ll E_C$ and applying second-order perturbation theory with respect to the Josephson term which connects states of different charge. 

The capacitance matrix of the circuit in Fig.~\ref{fig:4phase} in terms of the nodes variables $\ell_k, k=1,\ldots, 4$ reads

\begin{equation}
\bm{C} = \begin{pmatrix}
    C_g + C + C_J & -C_J & -C & 0 \\
    - C_J & C_g + C + C_J & 0 & -C \\
    -C & 0 & C_g + C + C_J & -C_J \\
    0 & -C & -C_J & C_g + C + C_J
\end{pmatrix}.
\end{equation}

It is convenient to first introduce another set of variables, namely the left and right exciton $\theta_{L, R e}$ and agiton $\theta_{L, R a}$ variables defined as
\begin{equation}
\label{eq:excag}
\begin{pmatrix}
    \theta_{L e} \\ 
    \theta_{R e} \\ 
    \theta_{L a} \\
    \theta_{R a}
\end{pmatrix} = \underbrace{\begin{pmatrix}
    1 & 0 & -1 & 0 \\
    0 & 1 & 0 & -1 \\
    1 & 0 & 1 & 0 \\
    0 & 1 & 0 & 1
\end{pmatrix}}_{\bm{M}}
\begin{pmatrix}
    \theta_1 \\ 
    \theta_2 \\ 
    \theta_3 \\
    \theta_4
\end{pmatrix} \;\;,
    \begin{pmatrix}
    \ell_{L e} \\ 
    \ell_{R e} \\ 
    \ell_{L a} \\
    \ell_{R a}
\end{pmatrix}  = 
     \underbrace{\begin{pmatrix}
    \frac{1}{2} & 0 & -\frac{1}{2} & 0 \\
    0 & \frac{1}{2} & 0 & -\frac{1}{2} \\
    \frac{1}{2} & 0 & \frac{1}{2} & 0 \\
    0 & \frac{1}{2} & 0 & \frac{1}{2}
\end{pmatrix}}_{(\bm{M}^T)^{-1}}
   \begin{pmatrix}
    \ell_1 \\ 
    \ell_2 \\ 
    \ell_3 \\
    \ell_4
\end{pmatrix}. 
\end{equation}
These variables have been employed in Ref.~\cite{weissSpectrumCoherenceProperties2019} to study Kitaev's current mirror qubit \cite{kitaevProtectedQubitBased2006}. Note that the change of variables in Eq.~\eqref{eq:excag} does not represent a valid rotor change of variables as described in Appendix \ref{app:rotorchange} which would require the matrices $\bm{M}$ and $(\bm{M}^T)^{-1}$ to be integer matrices. This manifests itself, as one can see from Eq.~\eqref{eq:excag}, in that the exciton and agiton charges on the left $L$ (or right $R$) have either both integer or both half-integer eigenvalues.

The choice for exciton and agiton variables is motivated by block-diagonalizing the capacitance matrix $\bs{C}$, i.e. we define
$$
\bm{\tilde{C}} =(\bm{M}^T)^{-1} \bm{C} \bm{M}^{-1}=  \frac{1}{2}\begin{pmatrix}    
2C+C_g + C_J & -C_J & 0 & 0 \\
-C_J & 2C+C_g + C_J & 0 & 0 \\
0 & 0 & C_g + C_J & -C_J \\
0 & 0 & -C_J & C_g + C_J
\end{pmatrix}.
$$
Let the transformed charging energy matrix be
\begin{equation}
\bm{\tilde{E}}_C = \frac{e^2}{2} \bm{\tilde{C} }^{-1} = \begin{pmatrix}
\bm{E}_C^{(e)} & \bm{0} \\
\bm{0} & \bm{E}_{C}^{(a)}
\end{pmatrix},
\end{equation}
with the exciton and agiton sub-matrices given by
\begin{subequations}
\begin{equation}
\bm{E}_{C}^{(e)} = \frac{e^2}{(2C+C_g)(2C+C_g+2C_J)}\begin{pmatrix}
2C+C_g+C_J & C_J \\ 
C_J & 2C+C_g+C_J
\end{pmatrix},
\end{equation}
\begin{equation}
\bm{E}_{C}^{(a)} = \frac{e^2}{C_g(C_g+2C_J)} \begin{pmatrix}
C_g+C_J & C_J \\ 
C_J & C_g+C_J
\end{pmatrix}.
\end{equation}
\label{eq:ec-coef}
\end{subequations}
The Hamiltonian of the circuit in Fig.~\ref{fig:4phase} then equals
\begin{multline}
\label{eq:exaghamil}
    H_{4-\rm phase} =  4E_{C, 11}^{(e)} \bigl(\ell_{Le}^2 + \ell_{Re}^2 \bigr) + 8 E_{C, 12}^{(e)} \ell_{Le} \ell_{Re} +  4 E_{C, 11}^{(a)} \bigl(\ell_{La}^2 + \ell_{Ra}^2 \bigr) + 8E_{C, 12}^{(a)} \ell_{La} \ell_{Ra}+V
     \\  = 
    \frac{2e^2}{2C+C_g}(\ell_{L e}+\ell_{R e})^2+ \frac{2e^2}{2C+C_g+2C_J}(\ell_{L e}-\ell_{R e})^2+ \\
    \frac{2 e^2}{C_g}(\ell_{L a}+\ell_{R a})^2+ \frac{2 e^2}{C_g+2C_J}(\ell_{L a}-\ell_{R a})^2
    +V,
\end{multline}
using matrix entries $\bm{E}_{C,ij}^{(e/a)}$ and 
\begin{equation}
\label{eq:v4phase}
V=-2 E_J \cos \biggr[\frac{1}{2}\bigl(\theta_{La} - \theta_{Ra}\bigl) \biggr]\cos \biggr[\frac{1}{2}(\theta_{Le}- \theta_{Re}) \biggr].
\end{equation}

We show in Appendix \ref{app:SW4phase}
 that in the limit $C \gg C_J, C_g$ and when $E_J$ is much smaller than the characteristic energy of an agiton excitation, the four-phase gadget gives rise to an effective potential in the zero-agiton subspace given by

\begin{align}
V_{\rm eff}=-E_{J_{\rm eff}}\cos(\hat{\theta}_1+\hat{\theta}_4-\hat{\theta}_2-\hat{\theta}_3),
\label{eq:veff}
\end{align}

where the effective Josephson energy is given by

\begin{equation}
\label{eq:ejeff}
    E_{J_{\rm eff}}\equiv \frac{E_J^2}{4 E_{C,\rm diff}^{(a)}}.
\end{equation}
Here the typical energy of an agiton excitation $E_{C, \mathrm{diff}}^{(a)}$ is approximately given by, see Eq.~\eqref{eq:ecadiff} in Appendix \ref{app:SW4phase}, 

\begin{equation}
\label{eq:ecdiffa}
E_{C, \mathrm{diff}}^{(a)} \approx \frac{e^2}{ C_g + 2 C_J}. 
\end{equation}

For later convenience, we introduce the typical energy of a single exciton as
\begin{equation}
\label{eq:en1exc}
    E_{C}^{(e)} = 4 E_{C, 11}^{(e)} \approx 2 \frac{e^2}{C}.
\end{equation}

In the following subsections, we will see that the perturbative four-phase gadget is a fundamental building block of the $0$-$\pi$ qubit \cite{brooksProtectedGatesSuperconducting2013, gyenisMovingTransmonNoiseprotected2021, dempsterUnderstandingDegenerateGround2014} in Subsec.~\ref{subsec:zeropi} and of Kitaev's current mirror qubit \cite{kitaevProtectedQubitBased2006, weissSpectrumCoherenceProperties2019} in Subsec.~\ref{subsec:kitaevmirror}.

The crucial point here is that while the four-rotor Bacon-Shor encodes a logical rotor, the $0$-$\pi$ and current mirror circuits aim to encode a qubit. Note that if we were to treat the Josephson junctions perturbatively in the four-rotor Bacon-Shor Hamiltonian, we still encode a logical rotor. Hence, in order to encode a qubit with the four-rotor circuit, one uses inductors to `identify nodes' and effectively tessellate a non-orientable surface, as we will see.

	\subsection{The $0$-$\pi$ qubit as small Real Projective Plane Rotor Code}
 \label{subsec:zeropi}
 
	The $0$-$\pi$ qubit Hamiltonian in \cite{brooksProtectedGatesSuperconducting2013,paoloControlCoherenceTime2019, Groszkowski_2018, gyenisExperimentalRealizationProtected2021, dempsterUnderstandingDegenerateGround2014} can be seen as the smallest example of a quantum rotor code obtained by tessellating $\mathbb{RP}^2$. We start with the tessellation in Fig.~\ref{fig:0pia} (left) with two rotors (on the edges). 
The single face uses each edge twice so that the face constraint is $H_X=\left( 2 \,\, -2 \right)$. For later convenience, we label the vertices $1$ and $3$. The two vertex constraints are identical and equal $\hat{\ell}_1+\hat{\ell}_3=0$, or $H_Z=\left( 1 \,\, 1\right)$. The logical $\overline{Z}$ is a $\pi$-phaseshift on a single edge, say $\overline{Z}=Z_1(\pi)=e^{i \pi \hat{\ell}_1}$ and $\overline{X}=e^{i (\hat{\theta}_1-\hat{\theta}_3)}$. The dimensionless Hamiltonian associated with the code is

\begin{align}
    H= (\hat{\ell}_1+\hat{\ell}_3)^2 -\cos[2(\hat{\theta}_1-\hat{\theta}_3)].
    \label{eq:Htar}
\end{align}
	
	 \begin{figure}[ht]
		\centering
  \subfloat[]{\includegraphics[width=\linewidth]{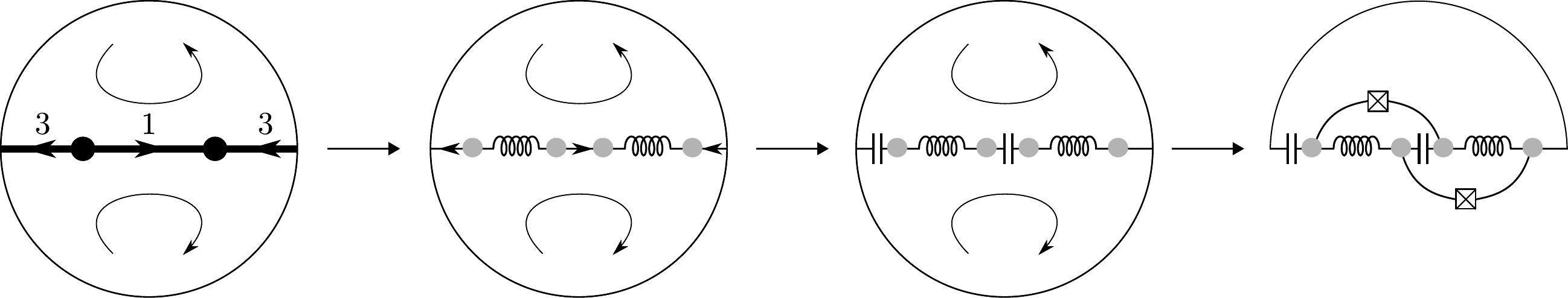}
  \label{fig:0pia}
  } \\
  \subfloat[]{\includegraphics[width=.3\linewidth]{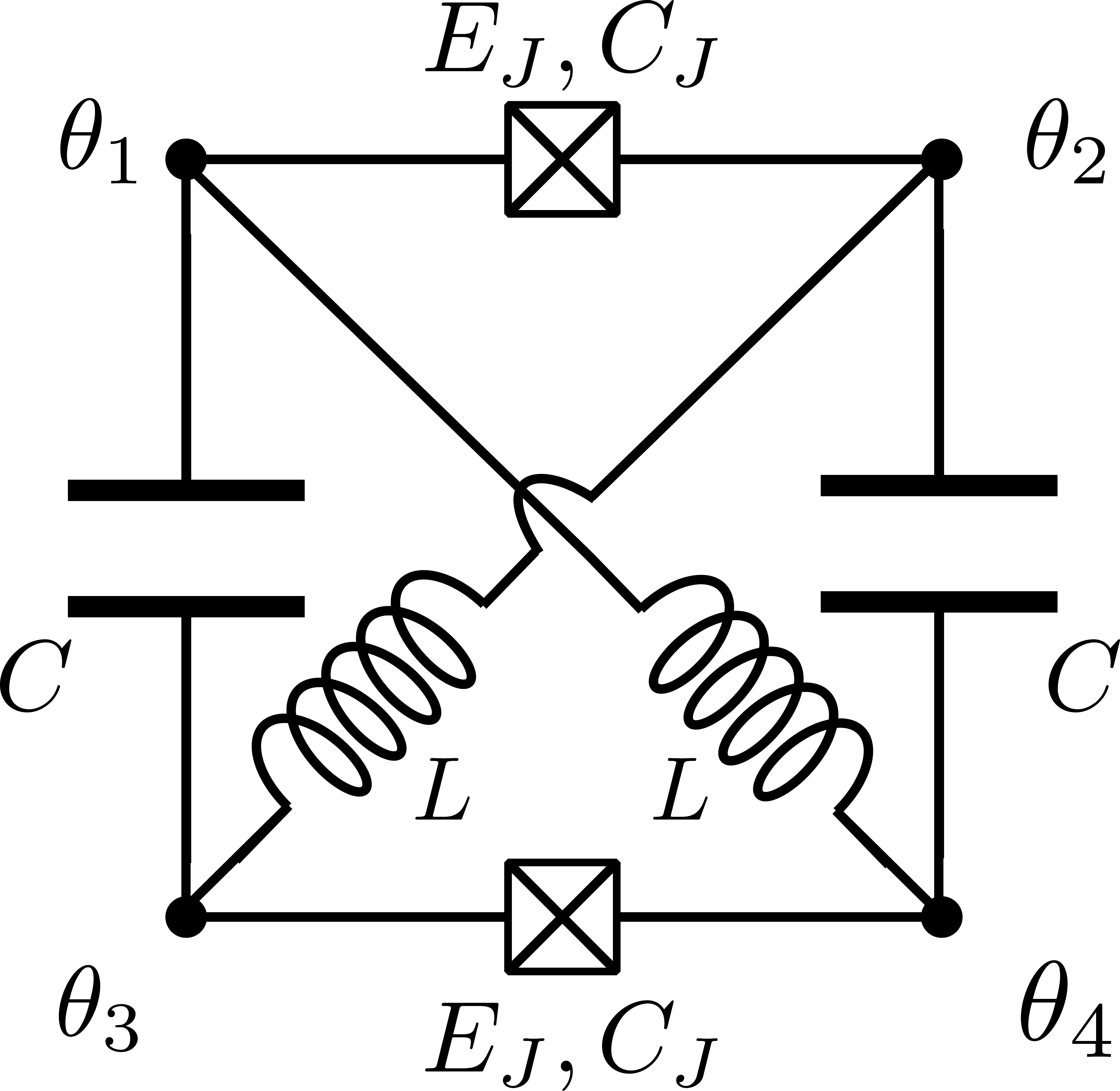}
  \label{fig:0pib}
  }
		\caption{(a) A very simple tiling of the real projective plane with two rotors and a representation of how it can be associated with the $0$-$\pi$ qubit circuit. Each node (on an edge) is split into two nodes (grey dots) and we put an inductor between them. Then two (large) capacitances are added on the edges to implement the capacitive `vertex stabilizers'. Finally, Josephson junctions are added to implement the `face stabilizer' perturbatively leading to the circuit in (b).
		(b) Final circuit of $0$-$\pi$ qubit which can be viewed as a circuit-QED realization of the code in (a). The $0$-$\pi$ qubit is essentially the four-phase gadget in Fig.~\ref{fig:4phase}, with inductances on the diagonal (capacitances $C_g$ to ground are not shown in the picture).}
		\label{fig:0pi}
	\end{figure}
    
    It is clear that the potential energy has two minima, one at $\theta_1-\theta_3=0$ and one at $\theta_1-\theta_3=\pi$.
    In particular, the (unphysical) logical codewords are
	\begin{align}
	\ket{\overline{0}} & \propto \sum_{\ell \in \mathbb{Z}} \ket{2\ell}_1\ket{-2\ell}_3, &
	\ket{\overline{+}} & \propto\int_{\mathbb{T}} d\theta  \ket{\theta}_1\ket{\theta}_3, \nonumber\\
	\ket{\overline{1}} & \propto \sum_{\ell \in \mathbb{Z}} \ket{2\ell+1}_1\ket{-2\ell-1}_3, & \ket{\overline{-}} & \propto \int_{\mathbb{T}} d\theta \ket{\theta}_1\ket{\pi+\theta}_3.
	\label{eq:codewords-simple}
	\end{align}
	
The question is how to obtain the second term in Eq.~\eqref{eq:Htar} using Josephson junctions. A first solution was conceived in Ref.~\cite{brooksProtectedGatesSuperconducting2013}, which introduced the circuit of the $0$-$\pi$ qubit. We note that engineering a $0$-$\pi$ circuit allowing an effective two-Cooper pair tunneling process is also the goal of \cite{smithSuperconductingCircuitProtected2020} which we do not analyze here.
In Fig.~\ref{fig:0pia} we sketch the idea of how the $0$-$\pi$ circuit comes about from a simple tessellation of the real projective plane, using the doubling of each rotor to two rotors coupled by an inductor. Fig.~\ref{fig:0pib} shows the final circuit of the $0$-$\pi$ qubit. We see that the circuit of the $0$-$\pi$ qubit in Fig.~\ref{fig:0pib} is the same as that of the four-phase gadget in Fig.~\ref{fig:4phase}, with the addition of inductances on the diagonal.  More specifically, the system is operated in the regime of $C \gg C_J, C_g$ and when the perturbative analysis in Appendix \ref{app:SW4phase} is valid. 

In what follows, we provide a simplified and intuitive explanation of why the circuit of the $0$-$\pi$ qubit gives rise to the code associated with the Hamiltonian in Eq.~\eqref{eq:Htar}, using the exciton and agiton variables introduced in Subsec.~\ref{subsec:fourphase}. We remark that in the literature a different change of variables is usually used to study the $0$-$\pi$ circuit \cite{dempsterUnderstandingDegenerateGround2014, Groszkowski_2018, gyenisExperimentalRealizationProtected2021}. However, the exciton-agiton picture lets one see the connection to the rotor code more clearly.

The impedance of an inductor in the Fourier domain is $Z_{L}(\omega) = i \omega L$, which means that at high frequencies the circuit behaves as an open circuit, while at low frequencies it functions as a short circuit. In the four-phase gadget circuit, when $C \gg C_J, C_g$ agiton excitations have high energy, i.e., high frequency. Thus, we expect intuitively that if the inductance is large enough, it will behave as an open circuit with respect to the agiton excitations, which are consequently only weakly affected. Mathematically, this requires that the typical energy of an agiton excitation $E_{C, \mathrm{diff}}^{(a)}$ given in Eq.~\eqref{eq:ecdiffa} satisfies 
$E_{C, \mathrm{diff}}^{(a)} \gg E_L$, with $E_L=\frac{\Phi_0^2}{4 \pi^2 L}$ the inductive energy of the inductor.
This means that the perturbative analysis in Appendix \ref{app:SW4phase} is still valid if the previous condition is satisfied and we expect that within the zero-agiton subspace the four-phase gadget effectively gives a potential as in Eq.~\eqref{eq:veff}.

On the other hand, we want the inductor to behave as a short circuit within the low-frequency, zero-agiton subspace. This condition requires that the energy of a typical exciton $E_{C}^{(e)}$ defined in Eq.~\eqref{eq:en1exc} and approximately equal to $\frac{2 e^2}{C}$ is much smaller than $E_L$. In this case, we can identify $\hat{\theta}_1\approx\hat{\theta}_4$, $\hat{\theta}_3\approx\hat{\theta}_2$, and obtain

\begin{equation}
\label{eq:zeropiveff}
V_{\mathrm{eff}} = -E_{J_{\rm eff}} \cos[2(\hat{\theta}_1 - \hat{\theta}_3)].
\end{equation}

	In what follows, we provide an analysis at the Hamiltonian level of the previous argument.  The classical Hamiltonian of the $0$-$\pi$ circuit, using the exciton and agiton variables, is
	\begin{equation}
	  H_{0-\pi}=H_{4-\rm phase}	+\frac{E_L}{4}(\theta_{L a}-\theta_{R a})^2
	+\frac{E_L}{4}(\theta_{R e}+\theta_{L e})^2.
	\end{equation}
 where $H_{4-\rm phase}$ is given in Eq.~\eqref{eq:exaghamil}. 

First, we note that the inductive terms do not couple excitons and agitons and the effect of $\frac{E_L}{4}(\hat{\theta}_{La}-\hat{\theta}_{Ra})^2$ is zero in the no-agitons subspace. This latter term induces a coupling between different agiton variables, which we do not expect to modify the perturbative analysis in Appendix \ref{app:SW4phase}, as long as $E_{C, \mathrm{diff}}^{(a)} \gg E_L$. 

	Next, in the zero-agiton subspace the idea is that the capacitive terms, which have characteristic energy scale $E_{C}^{(e)}$, are small relative to the inductive term.
    Then, the inductive term can enforce $\hat{\theta}_{Le} + \hat{\theta}_{Re} = \hat{\theta}_1- \hat{\theta}_3+ \hat{\theta}_2-\hat{\theta}_4=0$ which leaves $\hat{\theta}_{Le}-\hat{\theta}_{Re} = \hat{\theta}_1+\hat{\theta}_4-\hat{\theta}_2-\hat{\theta}_3 \approx 2(\hat{\theta}_1-\hat{\theta}_3)$, hence plugging this into Eq.~\eqref{eq:veff} we realize the term in Eq.~\eqref{eq:zeropiveff}. We note that this also implies $\hat{\ell}_{Le} = -\hat{\ell}_{Re} = \frac{1}{2}(\hat{\ell}_1 - \hat{\ell}_3)$. 
    Finally, we get the Hamiltonian
    \begin{equation}
    \label{eq:zeropieffexc}
    H_{0-\pi} \approx E_{C}^{(e)} (\hat{\ell}_{1} - \hat{\ell}_3)^2 - 2 E_{J_{\rm eff}} \cos[2 (\hat{\theta}_1 - \hat{\theta}_3)],
    \end{equation}
    where the eigenstates must satisfy the zero-agiton constraint

    \begin{equation}
        \hat{\ell}_1 + \hat{\ell}_3 = 0,
    \end{equation}
which effectively implements the $Z$-type stabilizer associated with the rotor code in Eq.~\eqref{eq:Htar}. Thus, in order for the eigenstates to represent the codewords of the rotor code, we need to neglect the charging term in Eq.~\eqref{eq:zeropieffexc} and require 

\begin{equation}
    E_{C}^{(e)} \ll E_{J_{\mathrm{eff}}}.
\end{equation}
 Let us also examine the wavefunctions obeying these constraints. The zero-agiton subspace restricts the wavefunctions of the four variables $\theta_1,\theta_2,\theta_3,\theta_4$ to be states of the form 
 \[
 \ket{\psi}=\sum_{\ell \in \mathbb{Z},\ell' \in \mathbb{Z}} \alpha_{\ell,\ell'} \ket{\ell}_1 \ket{\ell'}_2 \ket{-\ell}_3 \ket{-\ell'}_4.
 \]
Now consider that this state has to obey the inductive energy constraint, $\hat{\theta}_1- \hat{\theta}_3+ \hat{\theta}_2-\hat{\theta}_4=0$. This requires {\em at least} that the state is an approximate $+1$ eigenstate of $e^{i(\hat{\theta}_1-\hat{\theta}_3+\hat{\theta}_2-\hat{\theta}_4)}$, hence $\alpha_{\ell+1,\ell'+1}=\alpha_{\ell,\ell'}$.

Now to be at a minimum of the effective potential in Eq.~\eqref{eq:veff}, one also wishes to be a $+1$ eigenstate of $e^{i (\hat{\theta}_{L e}-\hat{\theta}_{R e})}=e^{i ( \hat{\theta}_1-\hat{\theta}_3-\hat{\theta}_2+\hat{\theta}_4)}$, which implies $\alpha_{\ell+1,\ell'-1}=\alpha_{\ell,\ell'}$, or $\alpha_{\ell,\ell'}=f(\ell+\ell')$ (only a function of the sum $\ell+\ell'$). 
 Thus, both constraints together imply that $\alpha_{\ell,\ell'}$ is some constant for even $\ell+\ell'$ (and even $\ell-\ell'$), and (another) constant for odd $\ell+\ell'$ (and odd $\ell-\ell'$). Such states can be made from a superposition of orthogonal basis states
\begin{align}
\ket{0} & =\sum_{\ell \in \mathbb{Z},k \in \mathbb{Z}} \ket{\ell}_1\ket{\ell+2k}_2\ket{-\ell}_3\ket{-\ell-2k}_4, \; \notag \\\ket{1} &=\sum_{\ell \in \mathbb{Z},k \in \mathbb{Z}} \ket{\ell}_1\ket{\ell+2k+1}_2\ket{-\ell}_3\ket{-\ell-2k-1}_4.
\end{align}

If we use Eq.~\eqref{eq:four} and Fourier sums, one can also write
\begin{align}
\ket{+} \propto \ket{0}+\ket{1} &=
\int d\theta \int d\theta'  \ket{\theta}_1\ket{\theta'}_2 \ket{\theta}_3 \ket{\theta'}_4,\; \notag \\ \ket{-} \propto \ket{0}-\ket{1}&=\int d\theta \int d\theta'  \ket{\theta}_1 \ket{\theta'}_2\ket{\theta+\pi}_3 \ket{\theta'+\pi}_4.
\label{eq:Xcode}
\end{align}
 We observe that $\ket{\pm}$ are product states between nodes 1-3 and 2-4, i.e. they are of the form $\ket{\pm}=\ket{\overline{\pm}}\ket{\overline{\pm}}$
where $\ket{\overline{\pm}}$ are the codewords in Eq.~\eqref{eq:codewords-simple}. Hence, the $0$-$\pi$ circuit represents the code in `doubled form' and we observe that due to this doubling the logical $\overline{Z}$ is now a $\pi$-phaseshift on two rotors $e^{i \pi(\hat{\ell}_1+\hat{\ell}_3)}$ while the logical $\overline{X}$ is the same as before.

We note that in this analysis we have connected each node to ground via a capacitance $C_g$ while in the usual analysis one of the nodes is grounded itself. That grounding choice would remove one degree of freedom, setting, say, $\theta=0$ in Eq.~\eqref{eq:Xcode}, so we get the code without doubling.
 
	\subsection{Kitaev's Current-Mirror Qubit as thin M\"{o}bius-strip Rotor Code}
 \label{subsec:kitaevmirror}

Kitaev's current mirror qubit \cite{kitaevProtectedQubitBased2006, weissSpectrumCoherenceProperties2019} has a direct interpretation as the quantum rotor code of the thin M\"{o}bius strip discussed in Section \ref{sec:thinmoeb} and Fig.~\ref{fig:moebius}. As stabilizer checks in the Hamiltonian we have
\begin{align}
\label{eq:moebius_xstab}
O_j^{c,X} & =  \cos(\hat{\theta}_j-\hat{\theta}_{N+j}+\hat{\theta}_{N+j+1}-\hat{\theta}_{j+1}), \quad j=1,\ldots,N-1,\; \notag \\
O_N^{c,X} & =  \cos(\hat{\theta}_N-\hat{\theta}_{2N}-\hat{\theta}_{m+1}+\hat{\theta}_{1}), \quad j=N.
\end{align}
The weight-2 vertex $Z$ checks are specified as 
\begin{equation}
\label{eq:moebius_zstab}
O_j^Z=\hat{\ell}_j+\hat{\ell}_{N+j}, \quad j=1,\ldots, N,  
\end{equation}

\begin{figure}[htb]
    \centering
\centering 
\includegraphics[height=3.8cm]{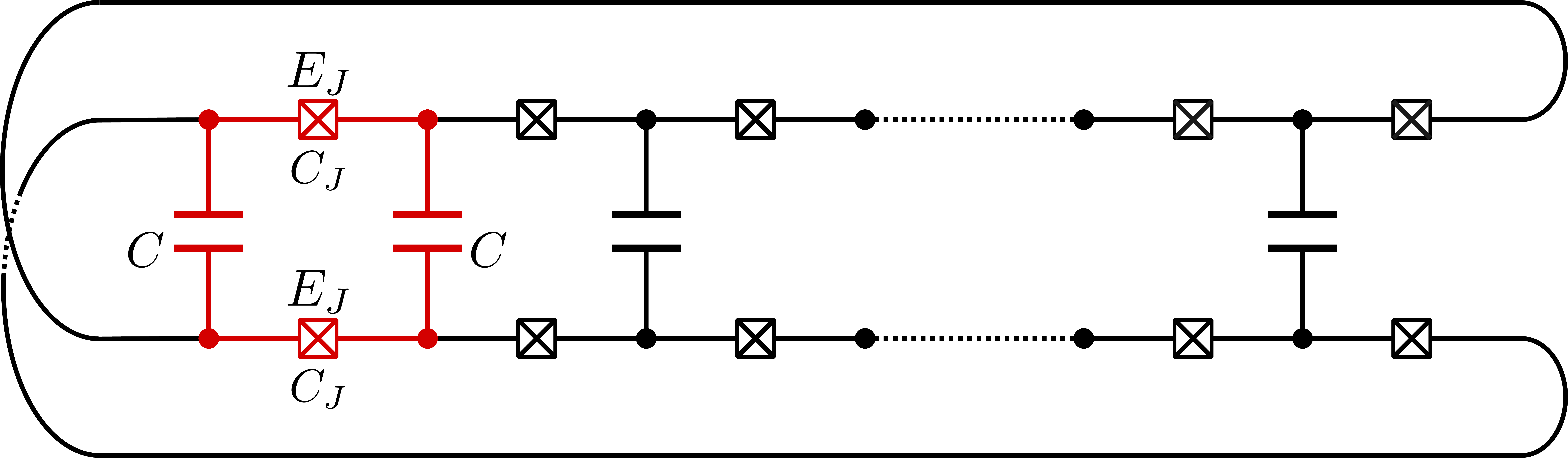}
\caption{Kitaev's current mirror qubit \cite{kitaevProtectedQubitBased2006, weissSpectrumCoherenceProperties2019}. The circuit can be viewed as a sequence of four-phase gadgets discussed in Subsec.~\ref{subsec:fourphase} (highlighted in red) in the limit $C \gg C_J, C_g$ that we analyze in detail in Appendix \ref{app:SW4phase}. Each node is assumed to have a capacitance $C_g$ to ground (not shown in the picture).}
\label{fig:current_mirror_qubit}
\end{figure}

 The key to understand this mapping is the perturbative analysis of the four-phase gadget circuit (in red in Fig.~\ref{fig:current_mirror_qubit}) as performed in \cite{weissSpectrumCoherenceProperties2019} and shown in Appendix~\ref{app:SW4phase}.
We refer the reader to Appendix~\ref{app:SW4phase} for all the details, while here we highlight the main features of the analysis:
\begin{itemize}
    \item In the limit of $C \gg C_J, C_g$ it is energetically unfavourable to have charges on the Josephson junction capacitance $C_J$ and/or on the ground capacitance $C_g$. This translates into the `no-agiton' constraints
        \begin{equation}
    \label{eq:zero_ag_cond}
        \hat{\ell}_{j} + \hat{\ell}_{m + j} = 0, \quad j=1, \dots, N,
    \end{equation}
    which is exactly the stabilizer constraint imposed by the weight-2, $Z$-type stabilizers in Eq.~\eqref{eq:moebius_zstab}. 
\item The Josephson potential associated with the two junctions in the four-phase gadget is treated perturbatively. The perturbative regime corresponds to the condition
\begin{equation}
\label{eq:moebius_param_1}
    \frac{E_J}{E_{C, \mathrm{diff}}^{(a)}} \approx \frac{E_J (C_g + 2 C_J)}{e^2}\ll 1,
\end{equation}
with $E_{C, \mathrm{diff}}^{(a)}$ defined in Eq.~\eqref{eq:ecdiffa}, under the additional assumption that
\begin{equation}
\label{eq:moebius_param_2}
\frac{E_{J, \mathrm{eff}}}{E_{C}^{(e)}} \approx \frac{E_J^2 C (C_g + 2 C_J) }{8 e^4} \gg 1,
\end{equation}
where we used the definition of the effective Josephson energy in Eq.~\eqref{eq:ejeff}, while $E_C^{(e)}$ is given in Eq.~\eqref{eq:en1exc}, respectively. In this regime, 
the four-phase gadgets in Kitaev's current mirror qubit in Fig.~\ref{fig:current_mirror_qubit} contribute terms in the Hamiltonian that are dominated by the purely inductive terms given by (see Eq.~\eqref{eq:h_eff_4phase} in the Appendix)
\begin{align}
\label{eq:moebius_4phase}
-E_{J_{\rm eff}} \cos (\hat{\theta}_j - \hat{\theta}_{N+j} + \hat{\theta}_{N+j + 1} - \hat{\theta}_{j+1}  ), & \quad j=1, \dots, N-1, \notag \\
-E_{J_{\rm eff}} \cos (\hat{\theta}_N - \hat{\theta}_{2N} - \hat{\theta}_{N+ 1} + \hat{\theta}_{1}  ), & \quad j=N,
\end{align}
while the charging terms are smaller in magnitude. The contributions in Eq.~\eqref{eq:moebius_4phase} are exactly the terms of the $X$-type stabilizers in Eq.~\eqref{eq:moebius_xstab}.  

\item Consequently, the (unphysical) codewords of the thin M\"{o}bius strip given in Eq.~\eqref{eq:moebiuscodewords} give an ideal representation of the degenerate ground states of Kitaev's current mirror qubit.
\end{itemize}

We remark that Eq.~\eqref{eq:moebius_param_2} is not in contradiction with Eq.~~\eqref{eq:moebius_param_1}. In fact, we can satisfy both constraints if we take a very large ratio $E_{C, \mathrm{diff}}^{(a)}/E_{C}$, which again means $C \gg C_{J}, C_g$. In practice, we will have residual charging energy terms that can be seen as a perturbation of the ideal rotor code Hamiltonian, and thus of the code subspace. 

As for the spectrum of the M\"obius strip qubit, we observe the following. If one takes the Bacon-Shor strip with its circuit in Fig.~\ref{fig:BS} and adds the Josephson junctions with a twist at the end to get the circuit in Fig.~\ref{fig:current_mirror_qubit}, then the degeneracy with respect to the encoded logical rotor of the Bacon-Shor code is lifted through the additional Josephson junctions. However, these rotor states still constitute many low-lying eigenstates (local minima) as discussed in \cite{weissSpectrumCoherenceProperties2019}.

    \subsection{Distance and Protection}
\label{sec:protection}

 In this subsection, we discuss in what way the distance of the rotor code determines the level of noise protection of the encoded qubit or rotor. It seems intuitive that the larger the $X$ and $Z$ distance, defined in Section \ref{sec:distance}, the more protected the encoded qubit or rotor should be. 
	
As a first comment, we note that
	 we have assumed that a superconducting island is characterized by (an excess) number of Cooper pairs $\ell$, but in practice any superconductor has a distribution of dynamically-active (single electron) quasi-particles, as excitations above the superconducting ground state. Rotor codes do not intrinsically protect against this source of noise, as it breaks the assumption of the rotor model.
 
	 Otherwise, noise in superconducting devices can physically originate from many sources, but can be classified as either charge noise, coupling to $\hat{\ell}$, or flux noise coupling to $e^{i\hat{\theta}}$. One can thus imagine that the code Hamiltonian in Eq.~\eqref{eq:codehamiltonian} is arrived at, imperfectly, as $H_{\rm code}+f(\{e^{i\hat{\theta}_i}\})+g(\{\hat{\ell}_i\})$ where $f()$ and $g()$ are (time-dependent) functions which are 1- or 2-local (or more generally O(1)-local) in the variables $e^{\hat{\theta}_i}$ resp. $\hat{\ell}_i$ (and which may involve other noncoding degrees of freedom).
  
A large $X$ distance $d_X$ suggests protection against flux noise, as the logical $\overline{X}$ corresponds to changing the number of Cooper pairs on a collection of superconducting islands, which, via the $O_j^Z$ charging constraints in the code Hamiltonian in Eq.~\eqref{eq:codehamiltonian}, costs energy. If such processes, modeled by $f(\{e^{i\hat{\theta}_i}\})$, are present in the Hamiltonian itself, then, if they are sufficiently weak to be treated perturbatively, their effect on the spectrum would only be felt in $d_X$-th order. This captures the idea of (zero-temperature) topological order which, for qubit stabilizer codes, has been proved rigorously \cite{BHS:topo}. 
Similarly, one can think about a large $Z$ distance $\delta_Z$ as protection against charge-noise processes, and these processes are modeled by the additional term $g(\{\hat{\ell}_i\})$ in the code Hamiltonian, for example representing fluctuating off-set charges. However, there is no direct argument for a gap in the spectrum with respect to these fluctuations, i.e. the spectrum of $O_j^{c/s,X}$ in Eq.~\eqref{eq:measX} varies continuously, and this continuous freedom is also expressed in the phenomenon of logical operator spreading suppressing the effective distance $\delta_Z$. Thus, we expect that the nature of a zero-temperature topological phase transition for these systems would be fundamentally different than in the qubit case. In particular, one may expect only a zero-temperature Kosterlitz-Thouless type transition for a 2D rotor code, and only a proper memory phase for the 3D rotor code in Section \ref{sec:HD}.

The nature of these phase transitions could be explicitly studied for a model of active error correction in which one (inaccurately) measures the stabilizer observables, and one can map the decoding problem onto a statistical mechanics problem, see e.g. \cite{DKLP, VAHWPT}. Our preliminary investigation shows that decoding the 2D rotor code undergoing continuous phase errors, --in a model of noiseless stabilizer measurements--, maps onto a (disordered) 2D XY model, which does not exhibit spontaneous magnetization due to the Mermin-Wagner theorem and can only undergo a finite-temperature Kosterlitz-Thouless phase-transition. Similarly, the decoding problem for a 2D rotor code undergoing discrete angular momentum errors can be related to a disordered 2D solid-on-solid model \cite{TD:super}. It will be interesting to relate properties of these phase transitions in these disordered models to properties of a possible memory phase. A 3D rotor code would similarly map to (disordered) 3D XY model and a 3D solid-on-solid model for which one would expect a genuine memory phase for sufficiently weak noise. Naturally, decoding the code when stabilizer measurements are inaccurate is the physically more relevant question and can alter the nature of these phases. \\

Let us now further examine the consequences of distance for the Hamiltonians of the $0$-$\pi$ qubit and the M\"obius-strip qubit in particular. For both qubits, $d_X$ is only 2, not growing with system size. If the qubits are operated in the presence of external flux, then the Hamiltonian will contain terms such as $-E_J \cos(\hat{\theta}_a-\hat{\theta}_b+\phi_{\rm ext}(t))\approx -E_J \cos(\hat{\theta}_a-\hat{\theta}_b)+E_J \phi_{\rm ext}(t)\sin(\hat{\theta}_a-\hat{\theta}_b)+O(\phi^2_{{\rm ext}}(t))$, directly proportional to logical $\overline{X}$ of the thin M\"oebius-strip code. Since the dependence on external flux (noise) only depends on a small $E_J$ in these circuits and one can work at a flux sweetspot to remove first-order effects, its effect on $T_1$ and $T_{\phi}$ may be small, but should however get progressively worse with $N$. Hence this qubit does not have flux noise protection beyond standard sweet-spot arguments. 

For the thin M\"obius-strip qubit, $\delta_Z$ scales with $N$ and this results in $T_{\phi}$ increasing with $N$ \cite{weissSpectrumCoherenceProperties2019}. The effect of charge noise on $T_1$ depends on transitions between the logical $\ket{\overline{0}}$ and $\ket{\overline{1}}$ via multiple perturbative steps to higher levels, i.e. an individual noise term in the Hamiltonian has the property that $\bra{\overline{0}} g(\{\hat{\ell}_i\})\ket{\overline{1}}=0$ as $g()$ only affects $O(1)$ rotors. However, the analysis in \cite{weissSpectrumCoherenceProperties2019} suggests that $T_1$ decreases with $N$, but this may be due to the authors defining $T_1$ by any transition to a low-lying excited state (instead of transitions from `a valley of states' around $\ket{\overline{0}}$ to a valley of states $\ket{\overline{1}}$). However, as the authors also note, transitions from $\ket{\overline{0}}$ to $\ket{\overline{1}}$ require stepping through $N$ multiple low-lying (energy scaling as $1/N$) excitations, --these excitations are the logical rotor encodings of the Bacon-Shor code--, and one may expect that $T_1=O(1)$ instead.
   \section{Discussion}
\label{sec:discussion}

    In this paper, we have introduced a formalism of quantum rotor codes and have explored how such codes can encode qubits via torsion of the underlying chain complex. We have discussed physical realizations of these codes in circuit-QED and we have pointed out some challenges and opportunities. Here we like to make a few concluding remarks. 
    
   
   It is not clear that a Hamiltonian encoding is preferred over realizing the code by active error correction, measuring face and vertex checks, or perturbatively, by measuring weight-2 checks by employing weakly-coupled circuits. It might be possible to continuously measure a total charge on two islands ($\hat{\ell}_i+\hat{\ell}_j$) and simultaneously weakly measure the current through a junction proportional to $\sin(\hat{\theta}_a-\hat{\theta}_b)$, to realize the four-rotor Bacon-Shor code, in analogy with continuous monitoring of the four-qubit Bacon-Shor code \cite{atalayaBaconShorCodeContinuous2017}. 

   On the topic of two-body measurements, recently new families of quantum error correction protocols, named Floquet codes, have attracted interest \cite{hastingsDynamicallyGeneratedLogical2021}.
   They use carefully chosen sequences of two-qubit measurements to stabilize a dynamical codespace. 
   The first schemes involved \(Y\) operators and present a periodic Hadamard transformation of the logical information, this hints that they most likely cannot work with quantum rotors.
   Some other schemes use only \(ZZ\)- or \(XX\)-measurements \cite{davydovaFloquetCodesParent2023, kesselringAnyonCondensationColor2022} and could possibly be adapted for quantum rotors.
   While it seems that the adaptation would work well on 2D orientable manifolds, it is less clear if it would work on non-orientable manifolds which is required in order to encode a logical qubit.

For a bosonic code such as the GKP code, a useful theory of approximate codewords has been developed in various papers. It would be of interest to develop numerically-useful approximations to explore rotor codes since the Hilbert space of multiple large rotor spaces grows rather quickly and limits numerics.
    
The Josephson junction has been invoked to contribute a $-\cos(\hat{\theta}_a-\hat{\theta}_b)$ term to the Hamiltonian, but the tunnel junction can also include some higher-order terms of the form $-\cos(k (\hat{\theta}_a-\hat{\theta}_b))$ for $k> 1$, which relate to the tunneling of multiple Cooper pairs \cite{willschObservationJosephsonHarmonics2023, golubovCurrentphaseRelationJosephson2004}. It is not clear whether such higher-order processes could help in engineering the targeted code terms more directly.

 As a separate observation, we note that the encoded qubit proposed in \cite{gladchenkoSuperconductingNanocircuitsTopologically2008} does not seem to be captured by the rotor code formalism. In \cite{gladchenkoSuperconductingNanocircuitsTopologically2008} one uses Josephson junctions and external fluxes $\phi_{\rm ext}$ through loops to generate terms of the form $-\cos(\hat{\theta}_a-\hat{\theta}_b+\phi_{\rm ext})$. Degeneracy in the spectrum comes about from classical frustration in these (commuting) potential terms, allowing to encode a qubit.

Another interesting avenue for further investigations is that of better constructions of quantum codes based on quasi-cyclic codes \cite{kovalevQuantumKroneckerSumproduct2013, panteleevDegenerateQuantumLDPC2021, panteleevQuantumLDPCCodes2022a} or modified products \cite{hastingsFiberBundleCodes2021a, breuckmannBalancedProductQuantum2021a, panteleevAsymptoticallyGoodQuantum2022a,leverrierQuantumTannerCodes2022}.
Notably it would be interesting to see if LDPC rotor code families with linear encoding rate as well as growing distances for both \(d_X\) and \(\delta_Z\) exists. 
The main difficulty we see is to be able to guarantee a lower bound on \(\delta_Z\).
In the examples we presented we had to degrade the other parameters of the codes to be able to prove a lower bound on \(\delta_Z\).

Finally some manifolds are equipped with a Riemannian metric, in such cases, a relation can be established between the distance of the codes based on this manifold and the size of the smallest non-trivial \(p\)-dimensional cycle according to the metric, called the \(p\)-\emph{systole}.
Some \(D\)-dimensional Riemannian manifolds can exhibit so-called \emph{systolic freedom}, meaning that the size of the \(p\)-systole and \(q\)-systole, for \(p+q=D\), multiply to more than the overall volume of the manifold.
Systolic freedom has been used to build the first qubit codes with larger than \(\sqrt{n}\) distance \cite{guthQuantumErrorCorrecting2014}.
Strikingly, this feature is more easily obtained with integer coefficients than \(\mathbb{Z}_2\) coefficients \cite{freedmanSystolicFreedom1999, freedmanZ2systolicFreedomQuantum2002}, and only recently more than logarithmic freedom was obtained over \(\mathbb{Z}_2\) \cite{freedmanBuildingManifoldsQuantum2021}.
An interesting question one can ask is whether systolic freedom over integer coefficients can be used to obtain quantum rotor codes with good parameters and what is the interplay between torsion and systolic freedom.
One complication needing further investigation is that contrary to the direct relation between the \(p\)-systole and the corresponding \(d_X\) the connection between the \(p\)-cosystole and \(\delta_Z\) is less straightforward due to the possibility of operator spreading.
\backmatter

\bmhead{Supplementary information}

Relevant code to reproduce the results of the paper is available at Ref.~\cite{githublink}. 

\bmhead{Acknowledgments}

  We thank Emmanuel Jeandel, Mac Hooper Shaw and David DiVincenzo for discussions.

\section*{Declarations}

\bmhead{Funding and/or Conflicts of interests/Competing interests}
C.V. acknowledges funding from the Plan France 2030 through the project NISQ2LSQ ANR-22-PETQ-0006. 

A.~C. acknowledges funding from the Deutsche
Forschungsgemeinschaft (DFG, German Research Foundation) under Germany’s Excellence Strategy – Cluster
of Excellence Matter and Light for Quantum Computing
(ML4Q) EXC 2004/1 – 390534769 and from the German Federal Ministry of Education and Research in the
funding program “quantum technologies – from basic research to market” (contract number 13N15585).

B.M.T. acknowledges support by QuTech NWO funding 2020-2024 – Part I “Fundamental Research”, project number 601.QT.001-1, financed by the Dutch Research Council (NWO). 

The authors declare no conflict of interest or competing interests.

\begin{appendices}

		\section{Quantum Rotor Change of Variables}
  \label{app:rotorchange}

	For continuous variable degrees of freedom representing, the $2n$ quadratures of, say $n$, oscillators, it is common to perform linear symplectic transformations \cite{VAHWPT} which preserve the commutation relations and which do not change the domain of the variables, namely $\mathbb{R}^{2n}$. For $n$ rotor degrees of freedom one cannot perform the same transformations and obtain a set of independent rotors.  
	Imagine we apply a transformation $\bs{A}$, with $\hat{\ell}'_j=\sum_k \bs{A}_{jk} \hat{\ell}_k$, defining new operators. This in turn defines an operator $Z'(\bs{\phi})=\prod_j e^{i\phi_j \hat{\ell}'_j}=\prod_j e^{i\phi_j \sum_k \bs{A}_{jk} \hat{\ell}_k}=\prod_k e^{i\sum_j \phi_j \bs{A}_{jk}\hat{\ell}_k}=Z(\bs{\phi}')$ with $\bs{\phi}'=\bs{A}^T \bs{\phi}$. We require that $\bs{\phi}'\in \mathbb{T}^n$, implying that $\bs{A}^T$ is a matrix with integer coefficients. Similarly, we can let $\hat{\theta}'_j=\sum_k \bs{B}_{jk} \hat{\theta}_k$, defining $X'(\bs{m})=X(\bs{m'})$ with $\bs{m'}=\bs{B}^T \bs{m}$. We also require that $\bs{m'} \in \mathbb{Z}^n$, which implies that $\bs{B}^T$ is a matrix with integer coefficients. In order to preserve the commutation relation in Eq.~\eqref{eq:commutationmulti}, 
	we require that
	\[
	\forall \bs{m},\bs{\phi}\;\;
	X'(\bs{m}) Z'(\bs{\phi})=e^{-i \bs{m} \bs{\phi}^T} Z'(\bs{\phi}) X'(\bs{m})=e^{-i \bs{m}'\bs{\phi}'^T} Z'(\bs{\phi}) X'(\bs{m}),
	\]
	which implies that $\bs{B} \bs{A}^T=\id$, or $\bs{A}=(\bs{B}^T)^{-1}$. Thus $\bs{A}$ needs to be a unimodular matrix, with integer entries, having determinant equal to $\pm 1$. An example is a Pascal matrix
\[
\bs{A}=\begin{pmatrix} 1 & 0 & 0 \\ 1 & 1 & 0 \\ 1& 2 & 1 \end{pmatrix}, \;\;
\bs{B}=\begin{pmatrix} 1 & -1 & 1 \\ 0 & 1 & -2 \\ 0 & 0 & 1 \end{pmatrix}. \\
\]	

	In the circuit-QED literature other coordinate transformations are used for convenience which are not represented as unimodular matrices. For example, for two rotors one can define the (agiton/exciton) operators 
	\[
	\hat{\theta}^{\pm}=\hat{\theta}_1 \pm \hat{\theta}_2,\; \hat{\ell}^{\pm}=\frac{1}{2}(\hat{\ell}_1\pm \hat{\ell}_2).
	\]
	In such cases it is important to understand that the new operators, even though the correct commutation rules are obeyed, do not represent an {\em independent} set of rotor variables.

	\section{Quantum Qudit Code Corresponding to a Quantum Rotor Code}
 \label{app:quditcode}
 First, recall the definition of the cyclic groups \(\mathbb{Z}_d\) and \(\mathbb{Z}_d^*\) in Eq.~\eqref{eq:Zstar} in Section \ref{sec:prelim}, where the group operation for \(\mathbb{Z}_d\) is addition modulo \(d\) and for \(\mathbb{Z}_d^*\) it is addition modulo \(2\pi\).
 Given a qudit dimension \(d\in\mathbb{N}^{\geq 2}\), the Hilbert space of a qudit, \(\mathcal{H}_{\mathbb{Z}_d}\), is defined by $d$ orthogonal states indexed by \(\mathbb{Z}_d\),
    \begin{equation}
        \forall j\in\mathbb{Z}_d,\quad\ket{j}\in\mathcal{H}_{\mathbb{Z}_d}.
    \end{equation}
    The qudit quantum states are
    \begin{equation}
        \ket{\psi} = \sum_{j=1}^d\alpha_j\ket{j},\quad\sum_{j=1}^d\vert\alpha_j\vert^2 = 1.
    \end{equation}
    The qudit generalized Pauli operators are given by

	\begin{align}
	    \forall k \in \mathbb{Z}_d, \quad & X_d(k)\ket{j}= \ket{j+k\!\pmod{d}}, \\
	    \forall \phi \in \mathbb{Z}_d^*, \quad & Z_d(\phi) \ket{j}= e^{i \phi j}\ket{j}.
	\end{align}
    By direct computation we have the following properties
	\begin{align}
	\id = X_d(0) &= Z_d(0),\label{eq:XZidqudit}\\
	X_d(k_1)X_d(k_2) &= X_d(k_1 + k_2\!\pmod{d}),\label{eq:Xaddqudit}\\
	Z_d(\phi_1)Z_d(\phi_2) &= Z_d(\phi_1+\phi_2\!\pmod{2\pi}),\label{eq:Zaddqudit}
	\end{align}
and the commutation relation
\begin{equation}
    	X_d(k)Z_d(\phi) = \e^{-i\phi k} Z_d(\phi)X_d(k).
    	\label{eq:paulicommutation-d}
\end{equation}    
The multi-qudit Pauli operators are defined by tensor product of single-qudit Pauli operators and labeled by tuples of \(\mathbb{Z}_d\) and \(\mathbb{Z}_d^*\) similarly to the rotor case.
If we have integer matrices \(H_X\) and \(H_Z\) as in Definition~\ref{def:rotorcode} we can define a qudit stabilizer code as follows.
\begin{defi}[Quantum Qudit Code, \(\mathcal{C}^d(H_X,H_Z)\)]
 Let $H_X$ and $H_Z$ be two integer matrices of size $r_x\times n$ and $r_z\times n$ respectively, such that
    \begin{equation}
    H_X H_Z^T = 0.
\end{equation}
We define the following group of operators, $\mathcal{S}$, and call it the stabilizer group:
\begin{equation}
    \mathcal{S}_d = \langle Z_d(\bs{\varphi}H_Z) X_d(\bs{s}H_X) \;\vert\;\forall \bs{\varphi}\in{\mathbb{Z}_d^*}^{r_z},\,\forall \bs{s}\in\mathbb{Z}_d^{r_x}\rangle.
\end{equation}
We then define the corresponding quantum qudit code, $\mathcal{C}^{d}(H_X,H_Z)$, on $n$ quantum rotors as the codespace stabilized by \(\mathcal{S}\):
\begin{equation}
    \mathcal{C}^{d}(H_X,H_Z) = \big\{\ket{\psi}\;\big\vert\; \forall P\in\mathcal{S}_d,\,P\ket{\psi} = \ket{\psi}\big\},
\end{equation}\label{def:quditcode}    
    \end{defi}
In other words: to go from the quantum rotor code to the qudit code we restrict the phases to \(d^{\rm th}\) roots of unity and pick the \(X\) operators modulo \(d\). A general definition of \eczoo[qudit stabilizer codes]{qudit_stabilizer} (beyond the `CSS' codes described here) can be found at the error correction zoo.

    We can relate logical operators from a quantum rotor code and the corresponding qudit code.
    Pick a non-trivial logical $X$ operator for \(\mathcal{C}^{\rm rot}(H_X,H_Z)\), that is to say, let 
    \begin{equation}
        \overline{X}(\bs{m}) = X(\bs{m}L_X + \bs{s}H_X).
    \end{equation}
    The following then holds: 
  \begin{align}
         0&=\left(\bs{m}L_X+\bs{s}H_X\right)H_Z^T \label{eq:logXrot}\\
        \not\exists \bs{s}^\prime,\quad \bs{s}^\prime H_X &= \bs{m}L_X+\bs{s}H_X.\label{eq:non-trivialXrot}
    \end{align}
    Eq.~\eqref{eq:logXrot} still holds if taken modulo \(d\) so \(X_d(\bs{m}L_X+\bs{s}H_X)\) is a valid logical \(X\) operator for \(\mathcal{C}^d(H_X,H_Z)\).
    On the other hand, Eq.~\eqref{eq:non-trivialXrot} might not hold when considering things modulo \(d\).
    This is the case for instance in the projective plane where taking \(d=3\) makes the logical operator become trivial since \(-2=1\!\pmod3\).
    Note that these cases only occur for logical operators in the torsion part of the homology.

    For the \(Z\) logical operators we can get a logical operator from the rotor code to the qudit code if and only if the logical operator is generated with \(\mathbb{Z}_d^*\) phases.
    \begin{equation}
     \bs{\phi},\bs{\nu}\in{\mathbb{Z}_d^*}^k,\quad\overline{Z}(\bs{\phi}) = Z(\bs{\phi}L_Z + \bs{\nu}H_Z).
    \end{equation}
    In this case we have that
    \begin{align}
0&=\left(\bs{\phi}L_Z+\bs{\nu}H_Z\right)H_X^T, \label{eq:logZrot}\\
        \not\exists \bs{\nu}^\prime\in\mathbb{T}^{r_z},\quad \bs{\nu}^\prime H_Z &= \bs{\phi}L_Z+\bs{\nu}H_Z,\label{eq:non-trivialZrot}
    \end{align}
    and Eq.~\eqref{eq:logZrot} and Eq.~\eqref{eq:non-trivialZrot} will remain valid when restricting phases to \(\mathbb{Z}_d^*\) (note that addition is modulo $2\pi$ in Eqs.~\eqref{eq:logZrot} and \eqref{eq:non-trivialZrot}).
    That is to say, some \(Z\) logical operators present in the quantum rotor code can be absent in the qudit code if these logical operators come from the torsion part, from some \(\mathbb{Z}_p\) and \(d\) does not divide \(p\).

    \section{Bounding the \(Z\) Distance from Below}
    \label{sec:minimizationZ}
This appendix proves Eq.~\eqref{eq:app-bound} in the proof of Lemma \ref{lem:Zbound}.    
    Given \(\bs{m}\in\Delta_X\), with $n=|\bs{m}|$, we want to lower bound the following quantity from Eq.~\eqref{eq:minimizationZ}:
    \begin{equation}
        \min_{\substack{\bs{\phi}\in\mathbb{T}^n,\, k\in\mathbb{Z}\\ \bs{m}\cdot\bs{\phi}=\alpha+2k\pi}} W_Z(\bs{\phi}) =  \min_{\substack{\bs{\phi}\in\mathbb{T}^n,\, k\in\mathbb{Z}\\ \bs{m}\cdot\bs{\phi}=\alpha+2k\pi}} \sum_{j=1}^n\sin^2\left(\frac{\phi_j}{2}\right).
    \end{equation}
    We define the Lagrangian function
    \begin{equation}
        \mathcal{L}(\bs{\phi},\lambda) = \sum_{j=1}^n\sin^2\left(\frac{\phi_j}{2}\right) - \lambda\left(\bs{m}\cdot\bs{\phi}-\alpha-2k\pi\right).
    \end{equation}
    We can compute the partial derivatives of \(\mathcal{L}\)
    \begin{align}
        \forall j,\;\frac{\partial\mathcal{L}}{\partial\phi_j} &= \frac{1}{2}\sin(\phi_j) - m_j\lambda,\\
        \frac{\partial\mathcal{L}}{\partial\lambda} &= \alpha+2k\pi - \bs{m}\cdot\bs{\phi}.
    \end{align}
    According to the Lagrange multiplier theorem \cite{book:arfkenweber} there exists a unique \(\lambda^*\) for an optimal solution \(\bs{\phi}^*\) such that
        \begin{equation}
        \left\{\begin{matrix}
            \forall j,\; \sin(\phi_j^*)=2m_j\lambda^*,\\[1em]
      \bs{m}\cdot{\bs{\phi}^*} =\alpha+2k\pi.
        \end{matrix}\right.
        \end{equation}
        For the optimal solution we have, for each \(j\), two options for \(\phi_j^*\):
    \begin{equation}
    \phi_j^* = \left\{\begin{matrix}
        \arcsin\left(2m_j\lambda^*\right),\\
        \pi - \arcsin\left(2m_j\lambda^*\right).
    \end{matrix}\right.
    \end{equation}
    We set a binary vector \(\bs{x}\) with \(x_j=0\) if it is the first case and \(x_j=1\) if it is the second.
    We deduce that
    \begin{align}
        \alpha+2k\pi &= \sum_{j=1}^n m_j\phi_j^*\\
        &= \sum_{j=1}^n (-1)^{x_j}m_j\arcsin\left(2m_j\lambda^*\right) +w\pi \quad (w = W_H(\bs{x})).       
        \end{align}
with Hamming weight $W_H()$.
       From now on we restrict ourselves to the case where \(m_j=\pm 1\).
    
    \begin{align}
            \Rightarrow\;\frac{\alpha+(2k - w)\pi}{n-2w} &=  \arcsin\left(2\lambda^*\right).
    \end{align}
    Let's compute the objective function
    \begin{align}
        W_Z(\bs{\phi}^*)&=\sum_{j=1}^n\sin^2\left(\frac{\phi_j^*}{2}\right) \\
        &= (n-w)\sin^2\left(\frac{\alpha+(2k - w)\pi}{2(n-2w)}\right) +w\cos^2\left(\frac{\alpha+(2k - w)\pi}{2(n-2w)}\right)\label{eq:objcomputesin2cos2}\\
        &=\frac{(n-w)}{2}\left(1-\cos\left(\frac{\alpha+(2k - w)\pi}{n-2w}\right)\right) +\frac{w}{2}\left(1+\cos\left(\frac{\alpha+(2k - w)\pi}{n-2w}\right)\right)\\
        &= \frac{n}{2} + \frac{2w-n}{2}\cos\left(\frac{\alpha+(2k - w)\pi}{n-2w}\right).\label{eq:objcomputesend}
    \end{align}
     We can assume that 
    \begin{equation}
    w < \frac{n}{2},
    \end{equation}
    as the other case behaves symmetrically and the case \(w=n/2\) always yields \(W_Z(\bs{\phi}^*) = n/2\) (see Eq.~\eqref{eq:objcomputesin2cos2}).
    Minimizing over \(w\) and \(k\) will give the lower bound.
    From Eq.~\eqref{eq:objcomputesend} we always have
        \begin{equation}
        W_Z(\bs{\phi}^*) \geq  w .
    \end{equation}
    We consider the asymptotic regime when \(n\rightarrow\infty\).
    If \(w\) grows with \(n\) then \(W_Z(\bs{\phi}^*)\) has to grow as well.
    Therefore choosing \(w\) constant can (and indeed will) yield a smaller value for \(W_Z(\bs{\phi}^*)\).
    Hence we fix \(w\) to a constant from now on.
    We can always put each \(\phi_j^*\) in the range \([-\pi,\pi)\) so that \(k\) is in the range \([-n/2,n/2]\).
    To minimize \(W_Z(\bs{\phi}^*)\) one needs to make 
    \begin{equation}
        \frac{\alpha + (2k-w)\pi}{n-2w} \sim 0,
    \end{equation}
    for which the best choice of \(k\in[-n/2,n/2]\) is \(k=0\).
    In fact \(w=0\) is also the best choice, since
    \begin{align}
        W_Z(\bs{\phi}^*) &= \frac{n}{2} + \frac{2w-n}{2}\cos\left(\frac{\alpha+(2k - w)\pi}{n-2w}\right)\\
        &= \frac{n}{2} + \frac{2w-n}{2}\cos\left(\frac{\alpha- w\pi}{n-2w}\right)\\
        &=w + \frac{1}{4n}\left(\alpha -\pi w\right)^2+o\left(\frac{w^2}{n}\right).
    \end{align}

Therefore, with $w=0$, we have
  \begin{align}
        W_Z(\bs{\phi}^*) &= \frac{n}{2}\left(1-\cos\left(\frac{\alpha}{n}\right)\right) =\frac{\alpha^2}{4n}+o\left(\frac{1}{n}\right).
    \end{align}

    \section{Orientability and Single Logical Qubit at the \((D-1)\) Level}
    \label{sec:orient=Z2}
    We consider the case of a rotor code \(\mathcal{C}^{\rm rot}(H_X,H_Z)\) where the entries in \(H_X\) and \(H_Z\) are taken from \(\{-1, 0, 1\}\) and \(H_X\) has the additional property that each column contains exactly 2 non-zero entry.
    This corresponds to the boundary map from \(D\) to \(D-1\) in the tessellation of a closed \(D\)-dimensional manifold (in \(2D\) edges are adjacent to exactly 2 faces, in \(3D\) faces are adjacent to exactly 2 volumes etc.\ldots).
    We also assume that the bipartite graph obtained by viewing \(H_X\) as an adjacency matrix (ignoring the signs) is connected.
    This corresponds to the \(D\)-dimensional manifold being connected.
    
    We want to consider the possibility that there is some \(\mathbb{Z}_p\) torsion, for an integer \(p\geq2\), at the \((D-1)\)-level in the corresponding chain complex.
    For this it needs to be the case (see Eq.~\eqref{eq:weakboundary}) that there exist \(\bs{s}\in\mathbb{Z}^{r_x}\) such that
    \begin{equation}
        \bs{s}H_X = p\bs{v},
    \end{equation}
        and such that
        \begin{equation}
            \bs{v}\not\in\im(H_X).\label{eq:properweakboundary}
        \end{equation}
    For any entry of \(p\bs{v}\), say \(j\), we have that
    \begin{equation}
        pv_j = \pm s_k \pm s_l,
    \end{equation}
    for some \(k\) and \(l\).
    Since the manifold is connected this implies that all entries of \(\bs{s}\) have the same residue modulo \(p\), say \(r\in\{1,\ldots,p-1\}\) (\(r\) cannot be zero without contradicting Eq.~\eqref{eq:properweakboundary}).
    In turns this means that 
    \begin{equation}
        p\bs{v} = 2r\bs{u} + p\bs{w},
    \end{equation}
    where \(\bs{u}\) is a vector with entries in \(\{-1,0,1\}\) and \(\bs{w}\) some integer vector.
    We conclude from this that \(p\) is necessarily even, so \(p=2q\).
    Hence we have
    \begin{equation}
        q\left(\bs{v}-\bs{w}\right) = r\bs{u}.
    \end{equation}
    This implies in turn that \(q\) divides \(r\), say \(r=tq\).
    Although we have \(tq=r<p=2q\) hence 
    \begin{equation}
    r=q.
    \end{equation}
    Which yields
    \begin{equation}
        q\bs{u}H_X = 2q\bs{v}^\prime,
    \end{equation}
    where \(\bs{v}^\prime\) differs from \(\bs{v}\) by the boundary of the quotient of \(\bs{s}\) by \(p\).
    We see that \(\bs{v}\) has in fact order 2 as the \(q\) simplifies and so we can fix \(q=1\) and \(p=2\).
    This leaves the only the possibility of \(r=1\) and so \(\bs{s}\) has only odd entries.
    
    Finally any \(\mathbb{Z}_2\) torsion generator \(\bs{v}_1\) and \(\bs{v}_2\) are related by a boundary since the corresponding \(\bs{s}_1\) and \(\bs{s}_2\) (having each only odd entries) sum to a vector with even entries so 
    \begin{equation}
    \frac{\bs{s}_1 + \bs{s}_2}{2}H_X = \bs{v}_1 + \bs{v}_2.    
    \end{equation}

    This concludes the proof that any finite tessellation of a connected closed \(D\)-dimensional manifold has either \(\mathbb{Z}_2\) or no torsion in the homology group \(H_{D-1}(\mathcal{M},\mathbb{Z})\) at the \((D-1)\)-level. 
    In particular rotor codes obtained from a \(2D\) manifold will have at most one logical qubit and some number of logical rotors.\\

    We remark that in this picture the manifold is orientable if there is a choice of signs for the \(D\) cells \(\bs{s}\in\{-1,1\}\) such that 
    \begin{equation}
        \bs{s}H_X = \bs{0},
    \end{equation}
    in which case there is no torsion since \(\bs{v}\) is necessarily trivial.

    \section{Products of chain complexes}
    \label{app:products}
In this Appendix we detail the construction of quantum rotor codes from the product of chain complexes given in Section~\ref{sec:productconstruction}.
We explore three settings, the first to encode logical rotors, the next two to encode qudits.

We use this construction on two integer matrices seen as chain complexes of length 2.
Take two arbitrary integer matrices \(\partial^\mathcal{C}\in\mathbb{Z}^{m_C\times n_C}\) and \(\partial^\mathcal{D}\in\mathbb{Z}^{n_D\times m_D}\).
They can be viewed as boundary maps of chain complexes
\begin{equation}
    \begin{matrix}
    \mathcal{C}:\;& \mathbb{Z}^{m_C}&\xrightarrow{\partial^\mathcal{C}}&\mathbb{Z}^{n_C} & &  \\[.6em]
    &H_1(\mathcal{C}) = \ker(\partial^\mathcal{C})&&H_0(\mathcal{C}) = \mathbb{Z}^{n_C}/\im(\partial^\mathcal{C})\\[1em]
    \mathcal{D}:\;& \mathbb{Z}^{n_D}&\xrightarrow{\partial^\mathcal{D}}&\mathbb{Z}^{m_D}\\[.6em]
    &H_1(\mathcal{D}) = \ker(\partial^\mathcal{D})&&H_0(\mathcal{D}) = \mathbb{Z}^{m_D}/\im(\partial^\mathcal{D}),
    \end{matrix}
\end{equation}
where the homology groups are given in the second row.
Taking the product \(\mathcal{C}\otimes\mathcal{D}\) will give a chain complex of length 3 characterized by two matrices \(H_X\) and \(H_Z\) 
\begin{equation}
    \mathcal{C}\otimes\mathcal{D}:\quad \mathbb{Z}^{m_Cn_D}\quad\xrightarrow{H_X}\quad\mathbb{Z}^{n_Cn_D + m_Cm_D}\quad\xrightarrow{H_Z^T}\quad\mathbb{Z}^{n_Cm_D}.
\end{equation}
The matrices \(H_X\) and \(H_Z\) can be written in block form as
\begin{align}
    H_X &= \begin{pmatrix}
        \partial^\mathcal{C}\otimes\id_{n_D} & -\id_{m_C} \otimes \partial^\mathcal{D}
    \end{pmatrix},\\
    H_Z &= \begin{pmatrix}
        \id_{n_C}\otimes {\partial^\mathcal{D}}^T & {\partial^\mathcal{C}}^T \otimes \id_{m_D}
    \end{pmatrix}.
\end{align}
The homology in the middle level, corresponding to the rotor code logical group, can be characterized according to the Künneth formula in Eq.~\eqref{eq:kunneth} as
\begin{align}
    H_1(\mathcal{C}\otimes\mathcal{D}) \simeq &
    \left(\ker(\partial^\mathcal{C})\otimes \mathbb{Z}^{m_D}/\im({\partial^\mathcal{D}})\right)\nonumber\\
    &\oplus \left(\mathbb{Z}^{n_C}/\im(\partial^\mathcal{C})\otimes \ker({\partial^\mathcal{D}})\right)\nonumber\\
    &\oplus{\rm Tor}\left(\mathbb{Z}^{n_C}/\im(\partial^\mathcal{C}), \mathbb{Z}^{m_D}/\im({\partial^\mathcal{D}})\right).\label{eq:homprod}
\end{align}
\subsection{Free+Free Product: Logical Rotors}
By picking, for \(\partial^\mathcal{C}\) and \({\partial^\mathcal{D}}\), matrices which are full-rank, rectangular, with no torsion and with\begin{align}
    k_{C} &= n_{C} - m_C>0,\\
    k_{D} &= n_{D} - m_D>0,
\end{align}
we can ensure that 
\begin{equation}
    \mathbb{Z}^{n_C}/\im(\partial^\mathcal{C})=\mathbb{Z}^{k_C},\qquad\ker({\partial^\mathcal{D}}) = \mathbb{Z}^{k_D},
\end{equation}
and, that all other terms in Eq.~\eqref{eq:homprod} vanish, such that we get, using Eq.~\eqref{eq:tensor}
\begin{equation}
    H_1(\mathcal{C}\otimes\mathcal{D}) = \mathbb{Z}^{n_C}/\im(\partial^\mathcal{C})\otimes \ker({\partial^\mathcal{D}}) = \mathbb{Z}^{k_Ck_D},
\end{equation}
that is to say, a rotor code encoding logical rotors. This configuration is the common one when using the hypergraph product to construct qubit codes. To obtain the \(X\) logical operators we can use the following matrix, \(L_X\), acting only on the first block of rotors:
\begin{equation}
    L_X = \begin{pmatrix}
        E_C \otimes G_D & 0\label{eq:freefreeLX}
    \end{pmatrix},
\end{equation}
where \(G_D\) and \(E_C\) are the generating matrices for \(\ker(\partial^\mathcal{D})\) and \(\mathbb{Z}^{n_C}/\im(\partial^\mathcal{C})\) respectively. It is straightforward to check that \(H_ZL_X^T = 0\) by definition of \(G_D\) and that \(H_X\) cannot generate the rows of \(L_X\) by definition of \(E_C\).
To obtain the \(Z\) logical operators we can use the following matrix, \(L_Z\), also acting only on the first block of rotors: 
\begin{equation}
    L_Z = \begin{pmatrix}
        G_C \otimes E_D & 0
    \end{pmatrix},\label{eq:freefreeLZ}
\end{equation}
where \(G_C\) and \(E_D\) are the generating matrices for \(\ker\left({\partial^\mathcal{C}}^T\right)\) and \(\mathbb{Z}^{n_D}/\im\left({\partial^\mathcal{D}}^T\right)\) respectively.
It is straightforward to check that \(H_XL_Z^T = 0\) by definition of \(G_C\) and that \(H_Z\) cannot generate the rows of \(L_Z\) by definition of \(E_D\).

If \(\partial^{\mathcal{D}}\) is a full-rank parity check matrix of a classical binary code with minimum distance \(d^{\mathcal{D}}\), we get the following lower bound on the \(X\) distance
\begin{equation}
    d_X^{\mathcal{C}\otimes\mathcal{D}}\geq d^\mathcal{D},
\end{equation}
using Theorem~\ref{thm:boundXdist} and the known lower bound of the \(X\) distance of the qubit code corresponding to the hypergraph product obtained by replacing all the rotors by qubits \cite{TZ:hypergraph}.
So we have for parameters
\begin{equation}
    \left\llbracket n_Cn_D+m_Cm_D, (k_Ck_D,0), (\Omega(d^\mathcal{D}),O(d^\mathcal{C}))\right\rrbracket_{\rm rot}.
\end{equation}

The weight of the logical \(Z\) operators in Eq.~\eqref{eq:freefreeLZ} is \(O(d^\mathcal{C})\) but they could be spread around using \(Z\) stabilizers.
To bound the \(Z\) distance using Lemma~\ref{lem:Zbound} one can try to get a large set of disjoint logical \(X\) operators by considering sets of the type
\begin{equation}
    \Delta_X(\bs{e}_i^\mathcal{C}\otimes\bs{g}_j^\mathcal{D}) = \left\{\begin{pmatrix}
        \left(\bs{e}_i\oplus \partial^\mathcal{C}(\bs{s})\right)\otimes\bs{g}_j^\mathcal{D} & 0
    \end{pmatrix}\;\vert\;\bs{s}\in S\subset\mathbb{Z}^{m_C}\right\},\label{eq:candidateDeltaHP}
\end{equation}
where \(\bs{e}_i^\mathcal{C}\) is the \(i\)th row of \(E_C\) and \(\bs{g}_j^\mathcal{D}\) the \(j\)th row of \(G_D\) and the set \(S\) is the largest subset of \(\mathbb{Z}^{m_C}\) which generates elements \(\bs{e}_i\oplus \partial^\mathcal{C}(s)\) that are pair-wise disjoints.
The best to hope for is a set \(S\) of size linear in \(n_C\) for which we would have per Lemma~\ref{lem:Zbound}
\begin{equation}
    \delta_Z = \Omega\left(\frac{n_C}{d^\mathcal{D}}\right).
\end{equation}
This is guaranteed for instance if \(\partial^\mathcal{C}\) is the parity check matrix of the repetition code.
Very similar to the Möbius and cylinder rotor codes in Sections \ref{sec:thinmoeb} and \ref{sec:cyl}, one needs to `skew the shape' in order to have a growing distance \(\delta_Z\).
So choosing for \(\partial^\mathcal{C}\) the standard parity check matrix of the repetition code with size \(n_C = \Theta(n_D^2)\) and a good classical LDPC code for \(\partial^\mathcal{D}\), with \(d^\mathcal{D}=\Theta(n_D)\) and \(k_D=\Theta(n_D)\), one would have as parameters of the \(\mathcal{C}\otimes\mathcal{D}\) code 
\begin{equation}
    \left\llbracket n, \left(\Theta(\sqrt[3]{n}),0\right) ,\left(\Theta(\sqrt[3]{n}),\Theta(\sqrt[3]{n})\right)\right\rrbracket_{\rm rot}.
\end{equation}

To provide a concrete example, we can pick the \([7,4,3]\) Hamming code with parity check matrix, \(H\), generator matrix, \(G\), and generator matrix for the complementary space of the row space of the parity-check matrix, \(E\), given by
\begin{equation}
    H = \begin{pmatrix}
        1 & 1 & 1 & 0 & 0 & 1 & 0\\
        0 & 1 & 1 & 1 & 0 & 0 & 1\\
        1 & 0 & 1 & 1 & 1 & 0 & 0
    \end{pmatrix},\qquad G = \begin{pmatrix}
        1& 0& 0& 0& -1& -1& 0\\
        0& 1& 0& 0& 0& -1& -1\\
        0& 0& 1& 0& -1& -1& -1\\
        0& 0& 0& 1& -1& 0& -1
    \end{pmatrix},\qquad E = \begin{pmatrix}
        1 & 0 & 0 & 0 & 0 & 0 & 0\\
        0 & 0 & 0 & 0 & 1 & 0 & 0\\
        0 & 0 & 0 & 0 & 0 & 1 & 0\\
        0 & 0 & 0 & 0 & 0 & 0 & 1
    \end{pmatrix}.\label{eq:ParityHammingCode}
\end{equation}
We can set
\begin{equation}
    \partial^\mathcal{C} = H,\qquad\partial^\mathcal{D} = H^T,
\end{equation}
and so for the stabilizers
\begin{align}
    H_X &= \begin{pmatrix}
        H\otimes\id_{7} & -\id_{3} \otimes H^T
    \end{pmatrix}\\
    H_Z &= \begin{pmatrix}
        \id_{7}\otimes H & H^T \otimes \id_{3}
    \end{pmatrix}.
\end{align}
This rotor code has \(58\) physical rotors and \(16\) logical rotors.
The minimal \(X\) distance is \(d_X=3\) and the minimal \(Z\) distance, \(\delta_Z\), is such that
\begin{equation}
3\geq \delta_Z(\alpha)\geq \frac{3\sin^2\left(\frac{\alpha}{6}\right)+9\sin^2\left(\frac{\alpha}{18}\right)}{\sin^2(\frac{\alpha}{2})}\xrightarrow[\alpha\rightarrow0]{}\frac{4}{9}.
\end{equation}
This last inequalties are obtained by picking for the upper bound a logical \(Z\) representative
\begin{equation}
    \overline{Z} = Z\left(\alpha\begin{pmatrix}
      \bs{e}_1\otimes\bs{g}_1  & 0
    \end{pmatrix}\right),
\end{equation}
and, for the lower bound, picking \(\Delta_X(\bs{e}_1\otimes\bs{g}_1)\) as in Eq.~\eqref{eq:candidateDeltaHP},
\begin{equation}
    \Delta_X(\bs{e}_1\otimes\bs{g}_1) = \left\{\begin{pmatrix}
        \bs{e}_1\otimes\bs{g}_1 & 0
    \end{pmatrix}, \begin{pmatrix}
        (\bs{e}_1+\bs{h}_1)\otimes\bs{g}_1 & 0
    \end{pmatrix}\right\}.
\end{equation}
To summarize the parameters we have
\begin{equation}
    \left\llbracket58, (16,0), (3,\delta_Z)\right\rrbracket_{\rm rot},\qquad 3\geq\delta_Z\geq\frac{4}{9}.
\end{equation}

\subsection{Torsion+Free Product: Logical Qudits}
In order to get some logical qudits in the product we can do the following:
Pick for \(\partial^\mathcal{C}\) a square matrix (\(n_C=m_C\)), with full-rank still, but also some torsion, that is to say 
\begin{equation}
    \mathbb{Z}^{n_C}/\im(\partial^\mathcal{C})=\mathbb{Z}_{d_1}\oplus\cdots\oplus\mathbb{Z}_{d_{k_C}},\qquad\ker({\partial^\mathcal{D}}) = \mathbb{Z}^{k_D}.\label{eq:torsionfreeH}
\end{equation}
The other terms in Eq.~\eqref{eq:homprod} vanish and we get, using Eq.~\eqref{eq:tensor}
\begin{equation}
    H_1(\mathcal{C}\otimes\mathcal{D}) = \mathbb{Z}^{n_C}/\im(\partial^\mathcal{C})\otimes \ker({\partial^\mathcal{D}}) = \left(\mathbb{Z}_{d_1}\oplus\cdots\oplus\mathbb{Z}_{d_{k_C}}\right)^{k_D}.
\end{equation}
In this case the rotor code encodes logical qudits.
The generating matrix for the logical \(X\) operators is given by the same expressions as Eq.~\eqref{eq:freefreeLX}.
For the \(Z\) logical operators, the expression is similar to Eq.~\eqref{eq:freefreeLZ}
\begin{equation}
    L_Z = \begin{pmatrix}
        G_C^\prime \otimes E_D & 0
    \end{pmatrix},\label{eq:freetorsionLZ}
\end{equation}
but the generating matrix \(G_C^\prime\) has a slightly different interpretation.
It is actually related to the weak boundaries defined in Eq.~\eqref{eq:weakboundary}.
The \(i\)th row of \(G_C^\prime\), \(\bs{g}_i\), is such that
\begin{equation}
    \partial^\mathcal{C}\bs{g}_i^T = d_i\bs{e}^T,\label{eq:weakboundaryC}
\end{equation}
for some integer vector \(\bs{e}\not\in\im(\partial^\mathcal{C})\) and \(d_i\) comes from Eq.~\eqref{eq:torsionfreeH}.
This allows for 
\begin{align}
    H_X\begin{pmatrix}
        \frac{2\pi}{d_i}\bs{g}_i\otimes \bs{e}^D_j & 0
    \end{pmatrix}^T &=\begin{pmatrix}
        \partial^\mathcal{C}\otimes\id_{n_D} & -\id_{m_C} \otimes \partial^\mathcal{D}
    \end{pmatrix}\begin{pmatrix}
        \frac{2\pi}{d_i}\bs{g}_i\otimes \bs{e}^D_j & 0
    \end{pmatrix}^T \\
    &= 2\pi\begin{pmatrix}
        \bs{e}\otimes\bs{e}^D_j & 0
    \end{pmatrix}^T,
\end{align}
where \(\bs{e}_j^D\) is the \(j\)th row of \(E_D\), i.e some element not generated by \(\partial^\mathcal{D}\).

One simple choice for \(\partial^\mathcal{C}\) is derived from the repetition code with a sign twist:
\begin{equation}
    \partial^\mathcal{C} = \begin{pmatrix}
        1 & -1 & 0 & \ddots & 0 & 0\\
        0 & 1 & -1 & \ddots & 0 & 0\\
        0 & 0 & 1 & \ddots & 0 & 0\\
        \ddots & \ddots & \ddots & \ddots & \ddots & \ddots\\
        0 & 0 & 0 & \ddots & 1 & -1\\
        1 & 0 & 0 & \ddots & 0 & 1
    \end{pmatrix}.
\end{equation}
This matrix is full rank and exhibits \(\mathbb{Z}_2\) torsion with \(\mathbb{Z}^{n_C}/\im(\partial^\mathcal{C}) = \mathbb{Z}_2\).
This implies that the logical codespace of the product code contains \(k_D\) qubits.
This also guarantees we can find a set of disjoint logical \(X\) representatives of size linear in \(n_C\) as in the previous section, see Eq.~\eqref{eq:candidateDeltaHP}.
All in all with this choice, a good classical LDPC code for \(\partial^\mathcal{D}\) with \(d^\mathcal{D}=\Theta(n_D)\) and \(k_D=\Theta(n_D)\), and choosing once again a `skewed shape' with \(n_C = \Theta(n_D^2)\), we can get a family of rotor codes with parameters
\begin{equation}
        \left\llbracket n, \left(0,2^{\Theta(\sqrt[3]{n})}\right) ,\left(\Theta(\sqrt[3]{n}),\Theta(\sqrt[3]{n})\right)\right\rrbracket_{\rm rot}.
\end{equation}

An other way to obtain a square matrix, \(\partial^\mathcal{C}\), with more torsion of even order say, is to take the rectangular, full-rank parity check matrix of a binary code, say \(H\in\{0,1\}^{m\times n}\) and define
\begin{equation}
    \partial^\mathcal{C} = H^TH \pmod 2.\label{eq:classicaltorsioncode}
\end{equation}
In some cases such matrix will be full rank over \(\mathbb{R}\) but not over \(\mathbb{F}_2\).
In particular for a codeword of the binary code \(\bs{g}\in\ker(H^T)\) we would have
\begin{equation}
    \partial^\mathcal{C}\bs{g}^T = H^TH\bs{g}^T = 2\bs{e}^T,
\end{equation}
for some integer vector \(\bs{e}\not \in \im(\partial^\mathcal{C})\) which then represents some even order torsion.
This choice can yield better encoding rates but we do not have a general way of bounding the \(\delta_Z\) distance in this case.

For instance, for the \([7,4,3]\) Hamming code given in Eq.~\eqref{eq:ParityHammingCode}, we have
    \begin{equation}
    \partial^\mathcal{C} = H^TH\pmod{2} = \begin{pmatrix}
        0& 1& 0& 1& 1& 1& 0\\
        1& 0& 0& 1& 0& 1& 1\\
        0& 0& 1& 0& 1& 1& 1\\
        1& 1& 0& 0& 1& 0& 1\\
        1& 0& 1& 1& 1& 0& 0\\
        1& 1& 1& 0& 0& 1& 0\\
        0& 1& 1& 1& 0& 0& 1
    \end{pmatrix},\label{eq:SquareHammingCode}
\end{equation}
which is full rank, so \(\ker(\partial^\mathcal{C})= \{\bs{0}\}\).
We have $\mathbb{Z}^{n_C}/\im(\partial^\mathcal{C}) = \mathbb{Z}_2\oplus\mathbb{Z}_2\oplus\mathbb{Z}_2\oplus\mathbb{Z}_4$, and
\begin{equation}
    \qquad G_C^\prime = \begin{pmatrix}-1& 0& 0& 1& 0& 1& -1 \\-1& 0& 0& 0& 1& 1& 0\\-1& 1& 0& 0& 1& 0& -1\\-1& 1& -1& 1& 1& 1& -1\end{pmatrix},\qquad E^\prime_C = \begin{pmatrix}
        1& 0& 0& -1& 0& 0& 0\\
        1& 0& 1& 0& 0& 0& 0\\
        1& -1& 0& 0& 0& 0& 0\\
        1& 0& 0& 0& 0& 0& 0
    \end{pmatrix},
\end{equation}
where \(E^\prime_C\) gives generators for \(\mathbb{Z}^{n_C}/\im(\partial^\mathcal{C})\), the last row of \(E^\prime_C\) is the generator of order \(4\) and \(G_C^\prime\) is related to \(E_C^\prime\) according to Eq.~\eqref{eq:weakboundaryC}.

So choosing \(\partial^\mathcal{C} = H^TH\pmod 2\) and \(\partial^\mathcal{D} = H^T\) yields a code with parameters
\begin{equation}
    \left\llbracket70, (0, 2^{12}\cdot4^4), (3,\delta_Z)\right\rrbracket_{\rm rot},\qquad 3\geq\delta_Z\geq 18\sin^2\left(\frac{\pi}{12}\right).
\end{equation}
 For the lower bound on $\delta_Z$ we pick \(\Delta_X(\bs{e}_1^\prime\otimes\bs{g}_1)\) as in Eq.~\eqref{eq:candidateDeltaHP},
\begin{equation}
    \Delta_X(\bs{e}^\prime_1\otimes\bs{g}_1) = \left\{\begin{pmatrix}
        \bs{e}_1^\prime\otimes\bs{g}_1 & 0
    \end{pmatrix}, \begin{pmatrix}
        (\bs{e}_1^\prime+\bs{\partial}^\mathcal{C}_2)\otimes\bs{g}_1 & 0
    \end{pmatrix}, \begin{pmatrix}
        (\bs{e}_1^\prime+\bs{\partial}^\mathcal{C}_5)\otimes\bs{g}_1 & 0
    \end{pmatrix}\right\}.
\end{equation}

\subsection{Torsion+Torsion Product: Logical Qudits}
Finally, one can take for {\em both} \(\partial^\mathcal{C}\) and \({\partial^\mathcal{D}}\) such full-rank square matrices with some torsion, and then get a logical space through the \({\rm Tor}\) part of Eq.~\eqref{eq:kunneth}.
Since the torsion groups have to agree (see Eq.~\eqref{eq:Tor}) we can pick, for instance, both \(\partial^{\mathcal{C}}\) and \({\partial^\mathcal{D}}\) in the same way as Eq.~\eqref{eq:classicaltorsioncode}, ensuring that they have a common \(\mathbb{Z}_2\) part.
We would have 
\begin{equation}
    \mathbb{Z}^{n_C}/\im(\partial^\mathcal{C})=\mathbb{Z}_{d_1}\oplus\cdots\oplus\mathbb{Z}_{d_{k_C}},\qquad\mathbb{Z}^{m_D}/\im({\partial^\mathcal{D}})=\mathbb{Z}_{p_1}\oplus\cdots\oplus\mathbb{Z}_{p_{k_D}},
\end{equation}
where the \(d_i\) and \(p_j\) are all even integers.
All other terms vanish and we get
\begin{equation}
    H_1(\mathcal{C}\otimes\mathcal{D}) = {\rm Tor}\left(\mathbb{Z}^{n_C}/\im(\partial^\mathcal{C}), \mathbb{Z}^{m_D}/\im({\partial^\mathcal{D}})\right)  = \bigoplus_{i,j}\mathbb{Z}_{\gcd(d_i,p_j)}.\label{eq:torsiontorsionhomology}
\end{equation}
Since the \(d_i\) and \(p_j\) were all chosen to be even, each term of Eq.~\eqref{eq:torsiontorsionhomology} is at least \(\mathbb{Z}_2\) or larger. 
The \(X\) logical operators are now slightly different than Eq.~\eqref{eq:freefreeLX}, 
\begin{equation}
    L_X = \begin{pmatrix}
        \Lambda_C\left(E^\prime_C\otimes G^\prime_D\right) & -\Lambda_D\left(G^\prime_C\otimes E^\prime_D\right)
    \end{pmatrix}.\label{eq:torsiontorsionLX}
\end{equation}
The matrices \(\Lambda_C\) and \(\Lambda_D\) are integer diagonal matrices of size \((k_C\times k_D)^2\) defined as
\begin{equation}
    \Lambda_C = \Diag{\frac{d_i}{\gcd\left(d_i,p_j\right)}},\qquad \Lambda_D = \Diag{\frac{p_j}{\gcd\left(d_i,p_j\right)}}.
\end{equation}
They are such that given the matrices with the torsion orders on the diagonal, \(D_C\) and \(D_D\),
\begin{equation}
    D_C =\Diag{d_1,d_2,\ldots,d_{k_C}},\qquad D_D =\Diag{p_1,p_2,\ldots,p_{k_D}},
\end{equation}
 we have that
\begin{equation}
    \left(\id_{k_C}\otimes D_D\right)\Lambda_C = \left(D_C\otimes\id_{k_D}\right)\Lambda_D = \Diag{\frac{d_ip_j}{\gcd\left(d_i,p_j\right)}}.\label{eq:commontorsiondiag}
\end{equation}
The matrices \(G^\prime_C\) and \(E^\prime_C\), and \(G^\prime_D\) and \(E^\prime_D\), in Eq.~\eqref{eq:torsiontorsionLX}, are related to the weak boundaries.
More precisely, we have 
\begin{align}
    {\partial^\mathcal{C}}^T {G^\prime_C}^T &= {E^\prime_C}^T D_C,\\ 
    {\partial^\mathcal{D}}^T {G^\prime_D}^T &= {E^\prime_D}^T D_D.
\end{align}
With this we can check that
\begin{align}
    H_Z L_X^T &= \begin{pmatrix}
        \id_{n_C}\otimes {\partial^\mathcal{D}}^T & {\partial^\mathcal{C}}^T \otimes \id_{m_D}
    \end{pmatrix} \begin{pmatrix}
        \Lambda_C\left(E^\prime_C\otimes G^\prime_D\right) & -\Lambda_D\left(G^\prime_C\otimes E^\prime_D\right)
    \end{pmatrix}^T\\
    &=\left({E_C^\prime}^T\otimes {E^\prime_D}^TD_D\right)\Lambda_C - \left({E_C^\prime}^TD_C \otimes {E^\prime_D}^T\right)\Lambda_D\\
    &= \left({E_C^\prime}^T\otimes {E^\prime_D}^T\right)\left(\id\otimes D_D\right)\Lambda_C -\left({E_C^\prime}^T\otimes {E^\prime_D}^T\right)\left(D_C\otimes\id\right)\Lambda_D\\
    &=0,
\end{align}
where the last line is obtained from the previous one using Eq.~\eqref{eq:commontorsiondiag}.
One can see that it is crucial to get some common divisor in the torsion of \(\mathcal{C}\) and \(\mathcal{D}\).
If there is no common divisor for some pair \((d_i,p_j)\) then the corresponding entries in \(\Lambda_C\) and \(\Lambda_j\) would be \(d_i\) and \(p_j\) respectively.
It would still follow that the corresponding row in \(L_X\) commutes with the \(Z\) stabilizers but it would also be trivial (because generated by the \(X\) stabilizers) since we have that
\begin{align}
    {\bs{g}_C}_i^\prime\otimes{\bs{g}_D}_j^\prime H_X &={\bs{g}_C}_i^\prime\otimes{\bs{g}_D}_j^\prime \begin{pmatrix}
        \partial^\mathcal{C}\otimes\id_{n_D} & -\id_{m_C} \otimes \partial^\mathcal{D}
    \end{pmatrix}\\
    &=\begin{pmatrix}
    d_i{\bs{e}_C}_i^\prime\otimes{\bs{g}_D}_j^\prime & -p_j{\bs{g}_C}_i^\prime\otimes{\bs{e}_D}_j^\prime 
\end{pmatrix}.
\end{align}
For the \(Z\) logical operators we have
\begin{equation}
    L_Z = \begin{pmatrix}
        U\left(G_C^\prime\otimes E_D^\prime\right) & V\left(E_C^\prime\otimes G_D^\prime\right)
    \end{pmatrix},
\end{equation}
where the matrices \(U\) and \(V\) are diagonal matrices of size \((k_C\times k_D)^2\) recording the smallest Bézout coefficients for all pairs \((d_i, p_j)\), i.e so that \(d_iu_{ij} - p_jv_{ij} = \gcd(d_i,p_j)\), we therefore have that
\begin{equation}
    \left(D_C\otimes\id\right)U - \left(\id\otimes D_D\right)V = \Diag{\gcd(d_i,p_j)}.
\end{equation}
We can check that
\begin{align}
    H_XL_Z^T &= \begin{pmatrix}
        \partial^\mathcal{C}\otimes\id_{n_D} & -\id_{m_C} \otimes \partial^\mathcal{D}
    \end{pmatrix} \begin{pmatrix}
        U\left(G_C^\prime\otimes E_D^\prime\right) & V\left(E_C^\prime\otimes G_D^\prime\right)
    \end{pmatrix}^T\\
    &= \left({E_C^\prime}^T D_C\otimes {E_D^\prime}^T\right)U - \left({E_C^\prime}^T\otimes {E_D^\prime}^T D_D\right)V\\
    &=\left({E_C^\prime}^T\otimes {E_D^\prime}^T \right)\left(\left(D_C\otimes\id\right)U - \left(\id\otimes D_D\right)V\right)\\
    &=\left({E_C^\prime}^T\otimes {E_D^\prime}^T \right)\Diag{\gcd(d_i,p_j)}.
\end{align}
We see that this would be trivial when multiplied by phases in \(\mathbb{Z}^*_{\gcd(d_i,p_j)}\).

As for a concrete example, we can again use the Hamming code and the square matrix in Eq.~\eqref{eq:SquareHammingCode} for both \(\partial^\mathcal{C}\) and \(\partial^\mathcal{D}\).
This example as well as other examples of square parity check matrices of classical codes can be found in \cite{quintavallePartitioningQubitsHypergraph2022} where  they are used in a product for different reasons.
The parameters are given by
\begin{equation}
    \left\llbracket 98, (0,2^{15}\cdot 4), (3,\delta_Z)\right\rrbracket_{\rm rot},\qquad 3\geq\delta_Z.
\end{equation}
Observe that compared to the qubit version, --a qubit code with parameters \(\llbracket98,32,3\rrbracket\) given in \cite{quintavallePartitioningQubitsHypergraph2022}, the quantum rotor code has a bit above half as many qubits.
This is due to the fact that in the qubit case the two blocks do not have to conspire to form logical \(X\) operators but can do so independently.
    
    \section{Schrieffer-Wolff Perturbative Analysis of the Four-Phase Gadget}
	\label{app:SW4phase}

In this Appendix, we derive an effective low-energy Hamiltonian for the four-phase gadget introduced in Subsec.~\ref{subsec:fourphase} and shown in Fig.~\ref{fig:4phase}. We will focus on the regime where $C \gg C_g, C_J$ and when $E_J$ is smaller than the typical energy of an agiton excitation. Our analysis will closely follow the one of the current-mirror qubit in Ref.~\cite{weissSpectrumCoherenceProperties2019}. In order to perform the perturbative analysis, we work with exciton and agiton variables, and our starting point is the quantized version of the classical Hamiltonian shown in Eq.~\eqref{eq:exaghamil}. 

When $C \gg C_g, C_J$ the charging energy of a single exciton in either the left or the right rung is given by $E_C^{(e)}$ given in Eq.~\eqref{eq:en1exc} in the main text, which is much smaller than the energy of a single agiton $4 E_{C,11}^{(a)} = 4 e^2 \frac{C_g + C_J}{C_g(C_g + 2 C_J)}$. We want to obtain an effective low-energy Hamiltonian for the zero-agiton subspace defined as

\begin{equation}
\mathcal{H}_{a=0} = \bigl\{\ket{\psi} \, \lvert \quad \hat{\ell}_{L, Ra} \ket{\psi}  = 0 \bigr\}.
\end{equation}
We will denote the non-zero agiton subspace, perpendicular to $\mathcal{H}_{a=0}$ as $\mathcal{H}_{a\neq0}$.

We carry out the perturbative analysis using a Schrieffer-Wolff transformation following Ref.~\cite{BRAVYI20112793}. In our case, the unperturbed Hamiltonian is the charging term 
\begin{equation}
H_0 = 4 E_{C,11}^{(e)} \bigl(\hat{\ell}_{Le}^2 + \hat{\ell}_{Re}^2 \bigr) + 8 E_{C, 12}^{(e)} \hat{\ell}_{Le} \hat{\ell}_{Re} +  4 E_{C, 11}^{(a)} \bigl(\hat{\ell}_{La}^2 + \ell_{Ra}^2 \bigr) + 8 E_{C, 12}^{(a)} \hat{\ell}_{La} \hat{\ell}_{Ra}
\end{equation}
while the perturbation is given by the Josephson contribution
\begin{equation}
\label{eq:vpert}
V = -2 E_J \cos \biggr[\frac{1}{2}\bigl(\hat{\theta}_{La} - \hat{\theta}_{Ra} \bigr) \biggr]\cos \biggr[\frac{1}{2}\bigl(\hat{\theta}_{Le} - \hat{\theta}_{Re} \bigr) \biggr]. 
\end{equation}
In order to understand the relevant processes, let us consider a state with $m_e \in \mathbb{Z}$ excitons on the left rung and $m_e' \in \mathbb{Z}$ excitons on the right rung. We denote this state as $\ket{m_e, m_e'}$ and it is formally defined as a state in the charge basis such that
\begin{eqnarray}
\ket{m_e, m_e' }  = & \ket{\ell_{Le} = m_e, \ell_{Re} = m_e', \ell_{La} = 0, \ell_{Ra} = 0}  \notag \\
 = &\ket{\ell_1 = m_e, \ell_2 = m_e', \ell_3 = -m_e, \ell_4 = -m_e'}.
\end{eqnarray}
We also wrote the state in terms of the original charge operators because the action of the Josephson potential is more readily understood in terms of these variables. In fact, the Josephson junction on the upper (lower) part of the circuit in Fig.~\ref{fig:4phase} allows the tunneling of a single Cooper-pair from node $1$ ($3$) to node $2$ ($4$), and vice versa. Thus, a transition $\ket{m_e, m_e' }\leftrightarrow \ket{m_e - 1, m_e' + 1}$ within the zero-agiton subspace can be effectively realized via two processes which are mediated by the non-zero agiton states
\begin{subequations}
\label{eq:right_ag_states}
\begin{multline}
\ket{a_+; m_e, m_e' } = \ket{ \ell_{Le} = m_e - \frac{1}{2}, \ell_{Re} = m_e' + \frac{1}{2}, \ell_{La} = -\frac{1}{2}, \ell_{Ra} = \frac{1}{2}} \\ = 
\ket{\ell_1 = m_e -1, \ell_2 = m_e' + 1, \ell_3 = -m_e, \ell_4 = -m_e'},
\end{multline}
\begin{multline}
\ket{a_-; m_e, m_e'} = \ket{\ell_{Le} = m_e - \frac{1}{2}, \ell_{Re} = m_e' + \frac{1}{2}, \ell_{La} = +\frac{1}{2}, \ell_{Ra} = -\frac{1}{2}} \\ = 
\ket{\ell_1 = m_e, \ell_2 = m_e', \ell_3 = -m_e + 1, \ell_4 = -m_e' -1}.
\end{multline}
\end{subequations}
The two processes are 
\begin{subequations}
\begin{equation}
   \ket{m_e, m_e'} \rightarrow \ket{a_+; m_e, m_e'} \rightarrow \ket{m_e-1, m_e' + 1},
\end{equation}
\begin{equation}
\ket{m_e, m_e'} \rightarrow \ket{a_-; m_e, m_e' }\rightarrow \ket{m_e-1, m_e' + 1},
\end{equation}
\end{subequations}
as well as the opposite processes.

The above-mentioned processes cause hopping of two Cooper-pairs from one rung to the other. However, we also have to take into account the second order processes which bring back a single Cooper-pair to the rung where it was coming from. These processes are also mediated by the non-zero agiton states in Eq.~\eqref{eq:right_ag_states}.
The effect of these processes is to shift the energy of  $\ket{m_e, m_e' }$, but we will see that in a first approximation each state $\ket{m_e, m_e' }$ gets approximately the same shift and so we are essentially summing an operator proportional to the identity in the zero-agiton subspace, which is irrelevant.

The charging energy of an exciton state $\ket{m_e, m_e' }$ is 

\begin{equation}
E(m_e, m_e' ) = 4 E_{C,11}^{(e)} \bigl(m_e^2 + m_e'^2 \bigr) + 8 E_{C, 12}^{(e)} m_e m_e',
\end{equation}
while the intermediate non-zero agiton states $\ket{a_{\pm}; m_e, m_e}$ both have the same charging energy
\begin{multline}
\label{eq:ecadiff}
E(a; m_{e}, m_{e}') = 4 E_{C, 11}^{(e)} \biggl(m_{e} - \frac{1}{2} \biggr)^2 + 4 E_{C,11}^{(e)} \biggl(m_{e}' + \frac{1}{2} \biggr)^2 + 8 E_{C, 12}^{(e)}\biggl(m_{e} - \frac{1}{2} \biggr)\biggl(m_{e}' + \frac{1}{2} \biggr) \\
+2 \bigl ( E_{C,11}^{(a)} - E_{C,12}^{(a)} \bigr) \approx 2 \bigl ( E_{C,11}^{(a)} - E_{C,12}^{(a)} \bigr) = \frac{2 e^2}{C_g + 2  C_J} \equiv 2 E_{C, \mathrm{diff}}^{(a)}
\end{multline}
where the last approximation is valid when $m_e, m_e'$ are small.
Analogously, the states $\ket{a_{\pm}; m_e, m_e' }$ have energy

\begin{multline}
E(a; m_e, m_e') = 4 E_{C,11}^{(e)} \biggl(m_e + \frac{1}{2} \biggr)^2 + 4 E_{C,11}^{(e)} \biggl(m_e' - \frac{1}{2} \biggr)^2 \\ +
8 E_{C,12}^{(e)}\biggl(m_e +\frac{1}{2} \biggr)\biggl(m_e' - \frac{1}{2} \biggr) 
+2 \bigl ( E_{C,11}^{(a)} - E_{C,12}^{(a)} \bigr) \approx  2 E_{C, \mathrm{diff}}^{(a)}.
\end{multline}

In the limit of $C \gg C_g, C_J$ we can also make the approximation

\begin{equation}
\label{eq:deltae_approx}
E(a; m_e, m_e') - E(m_e, m_e') \approx 2 E_{C, \mathrm{diff}}^{(a)} .
\end{equation}
The matrix elements of the perturbation, i.e., the Josephson potential in Eq.~\eqref{eq:vpert} between the exciton states and the intermediate agiton states are given by

\begin{equation}
\label{eq:mat_elem_v}
\braket{m_e, m_e' | V |a_{\pm}; m_e, m_e'}  = -\frac{E_J}{2}.
\end{equation}
Importantly, these are the only non-zero matrix elements of $V$ between states in $\mathcal{H}_{a=0}$ and $\mathcal{H}_{a \neq 0}$.

We now proceed with the perturbative Schrieffer-Wolff analysis. Let $\mathcal{E}_{a =0}$ ($\mathcal{E}_{a \neq 0}$) be the set of eigenvalues of $H_0$ associated with eigenvectors in $\mathcal{H}_{a=0}$ ($\mathcal{H}_{a\neq0}$). We denote by $P_{a =0}$ the projector onto the zero-agiton subspace and by $P_{a \neq 0}$ the projector onto the subspace orthogonal to it.  We define the block-off-diagonal operator associated with the perturbation $V$ in Eq.~\eqref{eq:vpert} as

\begin{equation}
V_{\mathrm{od}} = P_{a=0} V P_{a \neq 0} + P_{a \neq 0} V P_{a=0} = P_{a=0} V P_{a \neq 0} + \mathrm{h.c.}
\end{equation}
The effective second order Schrieffer-Wolff Hamiltonian is given by

\begin{equation}
\label{eq:h_eff_gen}
H_{\mathrm{eff}} = P_{a=0} (H_0 + V) P_{a=0} + \frac{1}{2} P_{a=0} [S_1, V_{\mathrm{od}} ] P_{a=0}, 
\end{equation}
where the approximate generator of the Schrieffer-Wolff transformation is given by the anti-Hermitian operator
\begin{equation}
S_1 = \tilde{S}_1 - \mathrm{h.c.},
\end{equation}
with 
\begin{equation}
\tilde{S}_1 = \sum_{E \in \mathcal{E}_{a =0}} \sum_{E' \in \mathcal{E}_{a \neq 0}} \frac{\braket{E | V_{\mathrm{od}}|E'}}{E - E'} \ket{E} \bra{E'}.
\end{equation}
The first term on the right-hand side of Eq.~\eqref{eq:h_eff_gen} reads
\begin{equation}
P_{a=0} (H_0 + V) P_{a=0} = 4 E_{C,11}^{(e)} (\hat{\ell}_{0, L}^{(e)})^2 +  4 E_{C,11}^{(e)} (\hat{\ell}_{0,R}^{(e)})^2 + 8  E_{C,12}^{(e)} \hat{\ell}_{0,L}^{(e)} \hat{\ell}_{0, R}^{(e)},
\end{equation}
where the operators $\hat{\ell}_{0, L}$ and $\hat{\ell}_{0, R}^{(e)}$ are the projection of the exciton charge operators $\hat{\ell}_{L, R}^{(e)}$ onto the zero-agiton subspace, i.e., $\hat{\ell}_{0, L}^{(e)} = P_{a=0} \hat{\ell}_{L}^{(e)} P_{a=0}$ and $\hat{\ell}_{0, R}^{(e)} = P_{a=0} \hat{\ell}_{R}^{(e)} P_{a=0}$. The operators $\hat{\ell}_{0, L}^{(e)}$ and $\hat{\ell}_{0, R}^{(e)}$ have only integer eigenvalues. 

The explicit calculation of $\tilde{S}_1$ using Eq.~\eqref{eq:mat_elem_v} gives
\begin{equation}
\tilde{S}_1 = - \frac{1}{2}\sum_{m_e, m_e' \in \mathbb{Z}} \sum_{s =\pm}\frac{E_J}{E(m_e, m_e') - E(a;m_e, m_e')} \ket{m_e, m_e'} \bra{a_s;m_e, m_e'} 
-\mathrm{h.c.}.
\end{equation}
Straightforward algebra shows that the last term in Eq.~\eqref{eq:h_eff_gen} can be simply rewritten as
\begin{equation}
\label{eq:2ndSW}
\frac{1}{2} P_{a=0} [S_1, V_{\mathrm{od}}] P_{a=0} = \frac{1}{2}\tilde{S}_1 P_{a \neq 0} V P_{a = 0} + \mathrm{h.c.}.
\end{equation}
Writing explicitly $P_{a \neq 0} V P_{a = 0}$ gives
\begin{equation}
P_{a \neq 0} V P_{a =0} = - \frac{E_J}{2} \sum_{m_e, m_e' \in \mathbb{Z}} \sum_{s=\pm} \ket{a_s;m_e, m_e'} \bra{m_e, m_e'}. 
\end{equation}
Plugging into Eq.~\eqref{eq:2ndSW}, we obtain\footnote{Notice that all the $\sum_{\pm}$ sums account for a factor of 2 here.}

\begin{multline}
\frac{1}{2} P_{a=0} [S_1, V_{\mathrm{od}}] P_{a=0} \!= \!\frac{1}{4} \sum_{m_e, m_e' \in \mathbb{Z}} \frac{E_J^2}{E(m_e, m_e') -  E(a;m_e, m_e')} 
\ket{m_e, m_e'}\bra{m_e, m_e'} \\
+  \frac{1}{4} \sum_{m_e, m_e' \in \mathbb{Z}} \frac{E_J^2}{E(m_e, m_e') -  E(a;m_e, m_e')} \ket{m_e, m_e'}\bra{m_e-1, m_e' + 1}  + \mathrm{h.c.} \\ = 
\frac{1}{2} \sum_{m_e, m_e' \in \mathbb{Z}} \frac{E_J^2}{E(m_e, m_e') -  E(a;m_e, m_e')} \ket{m_e, m_e'}\bra{m_e, m_e'} \\ +  \frac{1}{4} \biggl( \sum_{m_e, m_e' \in \mathbb{Z}} \frac{E_J^2}{E(m_e, m_e') -  E(a;m_e, m_e')} \ket{m_e, m_e'}\bra{m_e-1, m_e'+1} + \mathrm{h.c.} \biggr).
\end{multline}
We obtain the effective Hamiltonian
\begin{multline}
H_{\mathrm{eff}} = 4 E_{C,11}^{(e)} (\hat{\ell}_{0, L}^{(e)})^2 +  4 E_{C, 11}^{(e)} (\hat{\ell}_{0, R}^{(e)})^2 + 8  E_{C,12}^{(e)} \hat{\ell}_{0, L}^{(e)} \hat{\ell}_{0, R}^{(e)} \\ + 
\frac{1}{2} \sum_{m_e, m_e' \in \mathbb{Z}}  \frac{E_J^2}{E(m_e, m_e') -  E(a ;m_e, m_e')} \ket{m_e, m_e'}\bra{m_e, m_e'} + 
\\ \frac{1}{4} \biggl( \sum_{m_e, m_e' \in \mathbb{Z}} \frac{E_J^2}{E(m_e, m_e') -  E(a;m_e, m_e')} \ket{m_e, m_e'}\bra{m_e-1, m_e'+1} + \mathrm{h.c.} \biggr).
\end{multline}
Using the approximation in Eq.~\eqref{eq:deltae_approx} we can write the effective Hamiltonian more compactly as

\begin{multline}
H_{\mathrm{eff}} = 4 E_{C,11}^{(e)} (\hat{\ell}_{0,L}^{(e)})^2 +  4 E_{C, 11}^{(e)} (\hat{\ell}_{0,R}^{(e)})^2 + 8  E_{C,12}^{(e)} \hat{\ell}_{0, L}^{(e)} \hat{\ell}_{0, R}^{(e)} \\ - 
\frac{E_J^2}{4 E_{C, \mathrm{diff}}^{(a)}} \biggl( \frac{1}{2} \sum_{m_e, m_e' \in \mathbb{Z}} \ket{m_e, m_e'}\bra{m_e-1, m_e'+1} + \mathrm{h.c.} \biggr) - \frac{E_J^2}{4 E_{C, \mathrm{diff}}^{(a)}} P_{a=0},
\end{multline}
where the last term proportional to $P_{a=0}$ can be neglected since it is the identity on the zero-agiton subspace. 
We identify the operators $e^{i \theta_{0,L}}$ and $e^{i \theta_{0,R}}$ as
\begin{equation}
e^{i \theta_{0,L}^{(e)}} =  \sum_{m_{e}\in \mathbb{Z}} \ket{m_{e}}\bra{m_{e}-1} \otimes I, \quad e^{i \theta_{0,R}^{(e)}} = I \otimes \sum_{m_{e}'\in \mathbb{Z}} \ket{m_{e}'}\bra{m_{e}'-1}
\end{equation}
where the subscript $0$ specifies that these are defined within the zero-agiton subspace.

In this way we rewrite the effective Hamiltonian as 
\begin{equation}
\label{eq:h_eff_4phase}
H_{\mathrm{eff}} = 4 E_{C,11}^{(e)} (\hat{\ell}_{0, L}^{(e)})^2 +  4 E_{C,11}^{(e)} (\hat{\ell}_{0, R}^{(e)})^2 + 8  E_{C,12}^{(e)} \hat{\ell}_{0, L}^{(e)} \hat{\ell}_{0, R}^{(e)} -E_{J, \mathrm{eff}}\cos(\hat{\theta}_{0, L}^{(e)} - \hat{\theta}_{0, R}^{(e)}  ),
\end{equation}
with the effective Josephson energy $E_{J, \mathrm{eff}}$ defined in Eq.~\eqref{eq:ejeff} of the main text.
Finally, in terms of the original node variables projected onto the zero-agiton subspace that we denote as $\hat{\theta}_{0, k}$ we get
\begin{equation}
\label{eq:cos_4phase}
 \cos(\hat{\theta}_{0, L}^{(e)} - \hat{\theta}_{0, R}^{(e)}  ) = \cos(\hat{\theta}_{0, 1} + \hat{\theta}_{0,4} - \hat{\theta}_{0, 3} -  \hat{\theta}_{0, 2}).
\end{equation}

\end{appendices}


\bibliography{rotor-codes}

\end{document}